\definecolor{rvwvcq}{rgb}{0.08235294117647059,0.396078431372549,0.7529411764705882}
\definecolor{dtsfsf}{rgb}{0.8274509803921568,0.1843137254901961,0.1843137254901961}
\definecolor{mygreen}{rgb}{0.1803921568627451,0.49019607843137253,0.19607843137254902}
\algrenewcommand\algorithmicrequire{\textbf{Input:}}
\newcolumntype{C}[1]{>{\centering\let\newline\\\arraybackslash}m{#1}}
\newcommand{\OO}{{\mathcal O}}
\newcommand{\npc}{\textsf{NP}-complete\xspace}
\newcommand{\nph}{\textsf{NP}-hard\xspace}
\newcommand{\no}{{\sf No}\xspace}
\newcommand{\yes}{{\sf Yes}\xspace}
\newtheorem{branching rule}{Branching Rule}
\newtheorem{reduction rule}{Reduction Rule}
\newcommand{\FPT}{\textsf{\textup{FPT}}\xspace}
\newcommand{\Co}[1]{\ensuremath{\mathcal{#1}}\xspace}
\newcommand{\sm}{\setminus}
\newcommand{\sse}{\subseteq}
\newcommand{\hide}[1]{}
\newcommand{\Red}[1]{{#1}}
\newcommand{\gm}{{\sc W-GM}\xspace}
\newcommand{\gmpvr}{{\sc GM}\xspace}
\newcommand{\targm}{{\sc Target W-GM}\xspace}
\newcommand{\rainm}{{\sc Rainbow Matching}\xspace}
\newcommand{\Ma}[1]{\textcolor{magenta}{#1}}
\newcommand{\il}[1]{\todo[inline, color={yellow!60}]{\small{#1}}}
\newcommand{\ma}[1]{\todo[color={green!70!yellow!30}]{\small{#1}}}
\newcommand{\setfamily}[4]{\ensuremath{\mathscr{#1}[#2,#3,#4]}\xspace}
\newcommand{\eLabel}[1]{\ensuremath{\langle #1 \rangle}\xspace}
\newcommand{\win}[1]{\ensuremath{{\rm win}(#1)}\xspace}
\newcommand{\Prop}[2]{\ensuremath{\mathscr{Prop}(#1,#2)}\xspace}
\newcommand{\Lset}{\ensuremath{\mathcal{L}}\xspace}
\newcommand{\polynf}{\ensuremath{\psi}}
\newcommand{\polynfpath}{\ensuremath{\psi'}}
\newtheorem{theorem}{Theorem}[section]
\newtheorem{lemma}{Lemma}[section]
\newtheorem{claim}{Claim}[lemma]
\newtheorem{corollary}{Corollary}[section]
\newtheorem{definition}{Definition}[section]
\newtheorem{observation}{Observation}[section]
\newtheorem{proposition}{Proposition}[section]
\newcommand{\shortversion}[1]{}
\DeclareOldFontCommand{\rm}{\normalfont\rmfamily}{\mathrm}
\DeclareOldFontCommand{\sf}{\normalfont\sffamily}{\mathsf}
\DeclareOldFontCommand{\tt}{\normalfont\ttfamily}{\mathtt}
\DeclareOldFontCommand{\bf}{\normalfont\bfseries}{\mathbf}
\DeclareOldFontCommand{\it}{\normalfont\itshape}{\mathit}
\DeclareOldFontCommand{\sl}{\normalfont\slshape}{\@nomath\sl}
\DeclareOldFontCommand{\sc}{\normalfont\scshape}{\@nomath\sc}
\title{Gerrymandering on graphs: Computational complexity and parameterized algorithms}
\author{Sushmita Gupta\thanks{Institute of Mathematical Science, HBNI, India.  \texttt{sushmitagupta@imsc.res.in}}
\and
Pallavi Jain\thanks{Indian Institute of Technology Jodhpur, India.   \texttt{pallavi@iitj.ac.in}}
\and
Fahad Panolan\thanks{Indian Institute of Technology Hyderabad, India.   \texttt{fahad@cse.iith.ac.in}}
\and
Sanjukta Roy\thanks{Algorithms and Complexity Group, TU Wien, Austria.\texttt{sanjukta.roy@tuwien.ac.at}}
\and 
Saket Saurabh\thanks{Institute of Mathematical Science, HBNI, India, and University of Bergen, Norway. \texttt{saket@imsc.res.in}}
}
\date{}
\begin{document}
\maketitle
\thispagestyle{empty}

\begin{abstract}

The practice of partitioning a region into areas to favor a particular candidate or a party in an election has been known to exist for the last two centuries. This practice is commonly known as {\em gerrymandering}. Recently, the problem has also attracted a lot of attention from complexity theory perspective. In particular, 
 Cohen-Zemach et al. [AAMAS 2018] proposed a graph theoretic version of gerrymandering problem and initiated an algorithmic study around this, which was continued by Ito et al. [AAMAS 2019]. In this paper we continue this line of investigation and resolve an open problem in the literature, as well as move the algorithmic frontier forward by studying this problem in the realm of parameterized complexity. 
 
%

Our contributions in this article are two-fold, conceptual and computational. We first resolve the open question posed by Ito et al. [AAMAS 2019] about the computational complexity of gerrymandering when the input graph is a path. Next, we propose a generalization of the model studied in [AAMAS 2019], where the input consists of a graph on $n$ vertices representing the set of voters, a set of $m$ candidates $\mathcal{C}$, a weight function $w_v: \mathcal{C}\rightarrow {\mathbb Z}^+$ for each voter $v\in V(G)$ representing the preference of the voter over the candidates, a distinguished candidate $p\in \mathcal{C}$, and a positive integer $k$. The objective is to decide if it is possible to partition the vertex set into $k$ {\it districts} (i.e., pairwise disjoint connected sets) such that the candidate $p$ {\em wins} more districts than any other candidate. There are several natural parameters associated with the problem: the number of districts the vertex set needs to 
be partitioned  ($k$), the number of voters ($n$), and the number of candidates ($m$). The problem is known to be \npc  even if $k=2$, $m=2$, and $G$ is either a complete bipartite graph (in fact $K_{2,n}$, a complete bipartite graphs with one side of size $2$ and the other of size $n$)  
or a complete graph. This hardness result implies that we cannot hope to have an algorithm with running time 
$(n+m)^{f(k,m)}$ let alone $f(k,m)(n+m)^{\OO(1)}$, where $f$ is a function depending only on $k$ and $m$, as this would imply that {\sf P=NP}.   This means that in search for \FPT algorithms we need to either focus on the parameter $n$, or subclasses of forest (as the problem is \npc on $K_{2,n}$, a family of graphs that can be transformed into a forest by deleting {\em one} vertex). Circumventing these intractable results, we successfully obtain the following algorithmic results.  

\begin{itemize}
\item We design a parameterized algorithm  with respect to the parameter $k$ (an algorithm with running time $2^{\OO(k)}n^{\OO(1)}$)  in both deterministic and randomized settings, even for arbitrary weight functions.   Whether the problem is \FPT parameterized by $k$ on trees remains an interesting open problem. 

\item  We show that the problem admits a $2^n (n+m)^{\mathcal{O}(1)}$ time algorithm on general graphs. 
\end{itemize}
Our algorithmic results use sophisticated  technical tools such as representative set family and Fast Fourier transform based polynomial multiplication, and their (possibly first) application to problems arising in social choice theory and/or algebraic game theory may be of independent interest to the community.

\end{abstract}



\section{Introduction}
``Elections have consequences'' a now-famous adage ascribed to Barack Obama, the former President of U.S.A, brings to sharp focus the high stakes of an electoral contest. 
Political elections, or decision making in a large organization, are often conducted in a hierarchical fashion. Thus, in order to win the final prize it is enough to manipulate at district/division level, obtain enough votes and have the effect propagate upwards to win finally. Needless to say the ramifications of winning and losing are extensive and possibly long-term; consequently, incentives for {\em manipulation} are rife.


The objective of this article is to study a manipulation or control mechanism, whereby the manipulators are allowed to create the voting ``districts''.  A well-thought strategic division of the voting population may well result in a favored candidate's victory who may not win under normal circumstances. In a more extreme case, this may result in several favored candidates winning multiple seats, as is the case with election to the US House of Representatives, where candidates from various parties compete at the district level to be the elected representative of that district in Congress. This topic has received a lot of attention in recent years under the name of {\it gerrymandering}. A New York Times article ``How computers turned gerrymandering into science'' \cite{NYT-GM} discusses how Republicans were able to successfully win 65\% of the available seats in the state assembly of Wisconsin even though the state has about an equal number of Republican and Democrat voters. 
The possibility for gerrymandering and its consequences have long been known to exist and have been discussed for many decades in the domain of political science, as discussed by Erikson~\cite{Erikson72} and Issacharoff~\cite{Issacharoff02}. Its practical feasibility and long-ranging implications have become a topic of furious public, policy, and legal debate only somewhat recently \cite{NYT-GM2}, driven largely by the ubiquity of computer modelling in all aspects of the election process. Thus, it appears that via the vehicle of gerrymandering the political battle lines have been drawn to (re)draw the district lines.



While gerrymandering has been studied in political sciences for long, it is only rather recently that the problem has attracted attention from the perspective of algorithm design and complexity theory.  Lewenberg et al.~\cite{LewenbergLR17} and Eiben et al. \cite{EFPS20} study gerrymandering in a geographical setting in which voters must vote in the closest polling stations and thus problem is about strategic placement of polling stations rather than drawing district lines. Cohen-Zemach et al.~\cite{ZemachLR18} modeled gerrymandering using graphs, where vertices represent voters and edges represent some connection (be it familial, professional, or some other kind), and studied the computational complexity of the problem. Ito et al.~\cite{ItoKKO19} further extended this study to various classes of graphs, such as paths, trees, complete bipartite graphs, and complete graphs. 

In both the papers the following hierarchical voting process is considered: A given set of voters is partitioned into several groups, and each of the groups holds an independent election. From each group, one candidate is elected as a nominee (using the plurality rule). Then, among the elected nominees, the winner is determined by a final voting  rule (again by plurality). The formal definition of the problem, termed {\sc Gerrymandering} (\gmpvr), considered in \cite{ItoKKO19} is as follows. The input consists of an undirected graph $G$, a set of candidates \Co{C}, an approval function $a:V(G) \rightarrow \Co{C}$ where $a(v)$ represents the candidate approved by $v$, a weight function $w \colon V(G) \rightarrow \mathbb{Z}^{+}$, a distinguished candidate $p$, and a positive integer $k$. We say a candidate $q$ wins a subset $V' \sse V(G)$ if $q\in \arg\max_{q'\in \Co{C}}\left\{\sum_{v\in V', \,a(v)=q'} w(v)\right\}$, i.e., the sum of the weights of voters in the subset $V'$ who approve $q$ is not less than that of any other candidate. The objective  
is to decide whether there exists a partition of $V(G)$ into $k$ non-empty parts $V_1\uplus \ldots \uplus V_k$  (called {\it districts}) such that \begin{enumerate*}[label=(\roman*)] \item the induced subgraph $G[V_i]$ is connected for each $i\in \{1,\ldots,k\}$, and \item the number of districts {\em won only by $p$} is more than the districts won by any other candidate alone or with others. 
\end{enumerate*} 

\begin{tcolorbox}[colback=gray!5!white,colframe=gray!75!black]
In this paper we continue the line of investigation done in~\cite{ZemachLR18,ItoKKO19}. Our contribution is two fold, conceptual and the other is computational. Towards the former, we offer a realistic generalization of  \gmpvr, named {\sc Weighted Gerrymandering} (\gm). Towards the latter, we resolve an open question posed by Ito et al. \cite{ItoKKO19} pertaining to the complexity status of the \gmpvr\ on path graphs, when the number of candidates is not fixed. This reduction also shows that \gm is \npc on paths.  We complement this negative result by designing some {\it fixed parameter tractable} (\FPT) algorithms with respect to natural parameters associated with the problem.
\end{tcolorbox}

 \shortversion{The authors considered the {\em plurality voting rule}, where a candidate who gets the majority votes is a nominee in the  first stage of the election, and a nominee who wins in the most groups is the final winner. A familiar setting which closely resembles this is that of the two tier electoral process of the U.S presidential election.} 

%


\medskip
\noindent 
{\bf Our model.} A natural generalization of \gmpvr in real-life is that of a vertex representing a locality or an electoral booth as opposed to an individual citizen. In that situation, however, it is only natural that more than one candidate receives votes in a voting booth, and the number of such votes may vary arbitrarily. We can model the number of votes each candidate gets in the voting booth corresponding to booth $v$ by a weight function $w_v:\Co{C} \rightarrow \mathbb{Z}^{+}$, i.e the value $w_{v}(c)$ for any candidate $c\in \Co{C}$ represents the number of votes obtained by candidate $c$ in booth $v$. This model is perhaps best exemplified by a nonpartisan ``blanket primary'' election (such as in California) where all candidates for the same elected post regardless of political parties, compete on the same ballot against each other all at once. In a two-tier system, multiple winners (possibly more than two) are declared and they contestant the general election. The idea that one can have multiple candidates earning votes from the same locality and possibly emerging as winners is captured by \gmpvr. 
In the other paper \cite{ZemachLR18,ItoKKO19}, the vertex $v$ ``prefers'' only one candidate, and in this sense our model  generlizes (\gm) theirs (\gmpvr). 



Formally stated, the input to \gm consists of an undirected graph $G$, a set of candidates \Co{C}, a weight function for each vertex $v\in V(G)$, $w_v:\Co{C} \rightarrow \mathbb{Z}^{+}$, a distinguished candidate $p$, and a positive integer $k$. A candidate $q$ is said to win a subset $V' \sse V(G)$ if $q\in \arg\max_{q'\in \Co{C}}\left\{\sum_{v\in V'} w_v(q')\right\}$. The objective is to decide whether there exists a partition of the vertex set $V(G)$ into $k$ districts such that \begin{enumerate*}[label=(\roman*)]
\item $G[V_i]$ is connected for each $i\in [k]$, and \item  the number of districts {\em won only by $p$} is more than the number of districts won by any other candidate alone or with others.\end{enumerate*} 
\gmpvr can be formally shown to be a special case of \gm since we can transform an instance $\Co{I}=(G, \Co{C},a, w, p,k)$ of \gmpvr to an instance $\Co{J}=(G, \Co{C}, \{w_v\}_{v\in V(G)}, p,k)$ of \gm as follows. For each $v\in V(G)$, let $w_v:\Co{C} \rightarrow {\mathbb Z}^+$ such that for any $q\in \Co{C}$, if $a(v)=q$, then $w_v(q)=w(v)$ and $w_v(q)=0$, otherwise.




\smallskip

\noindent
{\bf Our results and methods.}\label{para:our contribution}   
The main open problem mentioned in Ito et. al \cite{ItoKKO19} is  the complexity status of \gmpvr\ on paths when the number of candidates is not  fixed (for the fixed number of candidates, it is solvable in polynomial time). We begin with answering their question and show that the problem is intractable even for such simple structures.


\begin{theorem}\label{thm:npcpath}
\gmpvr\ is {\rm \npc} on paths. 
\end{theorem}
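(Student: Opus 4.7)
The plan is to prove \Cref{thm:npcpath} by exhibiting a polynomial-time reduction from a classical NP-hard problem. Membership in NP is immediate since a guessed $k$-partition into contiguous intervals on a path can be checked in polynomial time, so the core task is the hardness reduction. Any partition of a path $v_1 v_2 \cdots v_n$ into $k$ connected subgraphs is described by a choice of $k-1$ cut edges, yielding $k$ contiguous intervals. Since Ito et al.\ already show that \gmpvr\ on paths is polynomial when the number of candidates is constant, the reduction must produce instances whose number of candidates grows with the input size.

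I would attempt a reduction from \textsc{Set Cover}. Given a universe $U = \{u_1, \ldots, u_r\}$, a family $\Co{S} = \{S_1, \ldots, S_t\}$, and a budget $\ell$, the idea is to build a path composed of one ``set-gadget'' per set $S_j$, concatenated in sequence. The gadget for $S_j$ contains (a) a voter with a large weight for $p$, (b) for each element $u_i \in S_j$, a voter giving a carefully chosen weight to a unique candidate $c_i$, and (c) high-weight separator voters that force any reasonable solution to place cuts at the gadget boundaries. The target $k$ and additional ``filler'' gadgets (in which $p$ cannot possibly win) would be tuned so that any gerrymandering that makes $p$ the plurality winner must ``activate'' exactly $\ell$ set-gadget intervals in which $p$ wins.

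The intended correspondence is: $p$ wins the district containing set-gadget $S_j$ iff $S_j$ is chosen in the cover, and each $c_i$ encodes a constraint that element $u_i$ must be covered. The weights would be designed so that if $u_i$ is left uncovered by the activated gadgets, the accumulated weight of $c_i$ across those gadgets causes $c_i$ to tie with or beat $p$ in too many districts, preventing $p$ from winning the plurality. The main obstacle will be reconciling the local nature of cuts on a path with the global covering constraint of \textsc{Set Cover}: because placing one cut may shift weights between two neighbouring districts, weights must be engineered so that unselected set-gadgets neither accidentally let $p$ win nor propagate weight for some $c_i$ into a district intended for $p$. Careful choice of the separator weights (polynomial in the input to keep the reduction strongly polynomial) together with structured fillers should decouple the gadgets sufficiently for the analysis to go through.

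Once the construction is fixed, correctness follows in both directions by a direct case analysis: a cover of size $\ell$ yields a valid gerrymandering by cutting at gadget boundaries and taking the cover's gadgets as the $p$-winning districts, while any successful gerrymandering must activate $\ell$ set-gadgets whose sets together cover $U$. The polynomial size and running time of the reduction are immediate from the construction, completing the proof of \Cref{thm:npcpath}.
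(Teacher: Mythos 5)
Your proposal is a plan rather than a proof: the construction is never pinned down, and the step you yourself flag as ``the main obstacle'' is exactly where the argument breaks. The winner condition in \gmpvr\ only hands you \emph{upper-bound} (packing-type) constraints for free: a solution must ensure that every candidate other than $p$ wins \emph{fewer} districts than $p$. That is why the paper's reduction is from \rainm\ on paths, whose constraints (each colour used at most once, edges pairwise disjoint) are themselves packing constraints and translate directly into ``candidate $\psi(e)$ wins at most $k+1$ districts'' plus local adjacency gadgets. \textsc{Set Cover}'s defining constraint --- every element is covered by \emph{at least} one chosen set --- is a covering (lower-bound, disjunctive) constraint, and your proposed mechanism for enforcing it does not work as stated: you want an uncovered element $u_i$ to force $c_i$ to win ``too many districts,'' but the number of districts $c_i$ wins is determined by how the solution chooses to cut the gadgets containing $u_i$, and the solution is adversarial in your favourite direction --- it will simply lump each unactivated gadget into one large district (or merge it with a neighbour) so that $c_i$ wins at most one district per gadget, or none. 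Non-activation of a gadget gives $c_i$ no leverage to win \emph{more} districts; if anything it gives fewer. So the intended backward direction (``any successful gerrymandering yields a cover'') has no enforcement mechanism.

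Beyond this conceptual issue, none of the quantitative content is supplied: the weights, the number of filler gadgets, the target $k'$, and the case analysis showing that cuts must fall at gadget boundaries are all deferred, and these are precisely the parts that carry the proof (in the paper's reduction they occupy several pages of claims, e.g.\ bounding the number of districts the special candidate $c^\star$ and the auxiliary candidates $c,\hat{c}$ can win, and a careful count showing the total number of districts forces $k$ segments to be ``activated''). To salvage your route you would either need to reduce from a packing-type problem (as the paper does), or find a genuinely new device for expressing a covering constraint through the plurality-of-districts condition; the proposal as written does not contain one.
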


To prove Theorem~\ref{thm:npcpath}, we give a polynomial-time reduction from \rainm on paths to \gmpvr on paths. In the \rainm problem, given a graph $G$, a coloring function on edges, $\psi\colon E(G)\rightarrow \{1, \ldots,\ell\}$, and an integer $k$; the objective is to decide if there exists a $k$-sized subset of edges that are vertex disjoint, (called a {\it matching}), such that for every pair of edges $e$ and $e'$ in the set, we have $\psi(e)\neq \psi(e')$. It is known that \rainm is \npc even when the input graph is a path~\cite{pfender2014complexity}.\label{rainm:defn}

\shortversion{We prove Theorem~\ref{thm:npcpath}, by showing a polynomial-time many-to-one reduction from \rainm  on paths to \gmpvr\ on paths. In the \rainm problem, we are given a graph $G$, a coloring function on edges, $\psi\colon E(G)\rightarrow \{1, \ldots,\ell\}$, and an integer $k$; the objective is to decide if there exists a $k$-sized subset of edges that are vertex disjoint, (called a {\it matching}), such that for every pair of edges $e$ and $e'$ in the set, we have $\psi(e)\neq \psi(e')$. It is known that \rainm is \npc even when the input graph is a path~\cite{pfender2014complexity}.}



Next, we study the problem from the viewpoint of parameterized complexity.  The goal of parameterized complexity is to find ways of solving
\nph problems more efficiently than brute force: here aim is to
restrict the combinatorial explosion to a parameter that is hopefully
much smaller than the input size. Formally, a {\em parameterization}
of a problem is assigning an integer $\ell$ to each input instance and we
say that a parameterized problem is {\em fixed-parameter tractable
  (\FPT)} if there is an algorithm that solves the problem in time
$f(\ell)\cdot |I|^{O(1)}$, where $|I|$ is the size of the input and $f$ is an
arbitrary computable function depending on the parameter $\ell$
only. There is a long list of \nph problems that are \FPT under
various parameterizations: finding a vertex cover of size $\ell$, finding
a cycle of length $\ell$, finding a maximum independent set in a graph
of treewidth at most $\ell$, etc.
For more background, the reader is referred to the monographs \cite{ParamAlgorithms15b,DowneyFbook13,niedermeier06b}.

\smallskip

\noindent
{\bf Our choice of parameters.}  
There are several natural parameters associated with the gerrymandering problem: the number of districts the vertex set needs to  be partitioned  ($k$), the number of voters ($n$), and the number of candidates ($m$).  Ito et al.~\cite{ItoKKO19} proved that \gmpvr\ is \npc  even if $k=2$, $m=2$, and $G$ is either a complete bipartite graph (in fact $K_{2,n}$) or a complete graph. Thus, we cannot hope for an algorithm for \gm that runs in $f(k,m)\cdot n^{\OO(1)}$ time, i.e.,  an \FPT algorithm with respect to the parameter $k+m$, even on planar graphs. In fact, we cannot hope to have an algorithm with running time 
$(n+m)^{f(k,m)}$, where $f$ is a function depending only on $k$ and $m$, as that would imply  {\sf P=NP}.   This means that our search for \FPT algorithms needs to either focus on the parameter $n$, or subclasses of planar graphs (as the problem is \npc on $K_{2,n}$, which is planar). Furthermore, note that $K_{2,n}$ could be transformed into a forest by deleting a vertex, and thus we cannot even hope to have  an algorithm with running time $(n+m)^{f(k,m)}$, where $f$ is a function depending only on $k$ and $m$, on a family of graphs that can be made acyclic by 
{\em deleting  at most one vertex}. This essentially  implies that if we wish to design an \FPT algorithm for \gm with respect to the parameter $k$, or $m$, or $k+m$, we must restrict input graphs to forests.   Circumventing these intractable results, we successfully obtain 
several algorithmic results. 
%
%
%
%
We give deterministic and  randomised \FPT algorithms for \gm on paths with respect to the parameter $k$. Since \gm is a generalization of \gmpvr, the algorithmic results for the former hold for the latter as well. 




\smallskip

\noindent{\it Unique winner vs Multiple winner:}~Note that the definition of \gmpvr\ by Ito et. al \cite{ItoKKO19} or its generalization \gm put forward by us does not preclude the possibility of multiple winners in a district, only that $p$ wins more number of districts alone than any other candidate {\it alone or in conjunction with others.} The time complexity stated in Theorems~\ref{thm:detfpt} and \ref{thm:ranfpt} are achieved when only one winner emerges from each district, a condition that is attainable using a tie-breaking rule. Formally stated, for an instance $(G, \Co{C}, \{w_v\}_{v\in V(G)}, p,k)$,  we consider a tie-breaking rule $\eta$ such that for any district $U \sse V(G)$, $\eta$ declares a candidate in the set~${\rm argmax}_{q\in \Co{C}}\{\sum_{v\in U} w_{v}(q)\}$, as the winner of the district $U$. Our results hold for any tie-breaking rule, as long as it is applied uniformly whenever necessary. Notably, the following algorithms can be modified to handle the case when multiple winners emerge in some district(s). 


\begin{theorem}
\label{thm:detfpt}
There is a deterministic algorithm that given an instance of \gm on paths and a tie-breaking rule $\eta$ solves the instance in time $2.619^k (n+m)^{\OO(1)}$. 


\end{theorem}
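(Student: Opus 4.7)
My plan is to recast \gm on a path as a constrained $k$-edge coloured path problem in an auxiliary DAG and solve it via a dynamic program whose intermediate tables are compressed using representative families for a suitably chosen matroid, and whose table updates are accelerated by FFT-based polynomial multiplication.

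I first label the input path as $1,\dots,n$, observe that every connected subset is an interval $[l,r]$, and precompute for every interval its winner $w([l,r])$ under the tie-breaking rule $\eta$ in total time $\OO(n^2 m)$ using prefix sums over the weight functions. I then build a coloured DAG $D$ on vertex set $\{0,1,\dots,n\}$ with an arc $(i,j)$ of colour $w([i+1,j])$ for every $0\le i<j\le n$. A valid partition of the path into $k$ districts with $p$ as strict plurality winner is then exactly a $k$-arc $0{\to}n$ directed path in $D$ whose arc-colour multiset has $p$ as strict plurality.

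Next I enumerate the intended number $s\in\{1,\dots,k\}$ of districts that $p$ will win. For $s>k/2$ the non-$p$ constraint is automatic and a straightforward DP on triples $(i,j,t)$, with $t$ counting the $p$-arcs used so far, finds a witness in polynomial time. For $s\le k/2$ I encode ``no non-$p$ colour appears more than $s-1$ times'' as a partition matroid $M_s$ whose groups are the non-$p$ colours, each with capacity $s-1$. Sweeping $D$ from $0$ to $n$, I maintain for every $(i,j,t)$ a family $\mathcal{F}_{i,j,t}$ of colour-count vectors realised by $j$-arc $0{\to}i$ paths that use exactly $t$ $p$-arcs, and after each transition I replace $\mathcal{F}_{i,j,t}$ by a $(k-s-(j-t))$-representative subfamily with respect to $M_s$, using the deterministic representative-family construction of Fomin, Lokshtanov, Panolan, and Saurabh. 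The size and construction-time estimate for such a subfamily, aggregated over the outer iteration on $s$, contributes a binomial expression that, together with the cost of the matroid-side linear algebra, evaluates to at most $2.619^k(n+m)^{\OO(1)}$. The transition ``extend a path by one arc'' is formulated as multiplication of the generating polynomial of $\mathcal{F}_{i,j,t}$ by linear terms indexed by the arcs leaving $i$, and each such multiplication is executed by FFT in quasi-linear time in the degree of the polynomial, so the arithmetic cost never dominates the exponential factor.

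The delicate part, which I would handle with care, is the correctness of the representative-family compression in the presence of the DAG-path constraint: I must verify that for every extension of a partial path in $D$ to a full \yes-witness, some surviving colour-count vector in the compressed family admits a compatible extension, both on the matroid side (independence in $M_s$) and on the DAG side (the remaining arcs still forming a path in $D$). Carefully accounting for the interaction between the $p$-arc counter $t$, the partition-matroid capacities, and the ``intervals must partition $[1,n]$'' requirement is equally crucial, so that the representative-family parameters, the outer enumeration on $s$, and the FFT-based convolutions together stay within the $2.619^k(n+m)^{\OO(1)}$ budget rather than compounding to a weaker exponential base.
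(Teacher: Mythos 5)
Your high-level architecture matches the paper's: reduce intervals of the path to an auxiliary DAG, guess the number $s$ of districts won by $p$ (the paper's $k^{\star}$, via its reduction to \targm), and run a DP over $(\text{interval}, \text{number of arcs}, \text{number of }p\text{-arcs})$ whose tables are pruned by representative families. However, the step where you encode ``no non-$p$ colour appears more than $s-1$ times'' as a partition matroid $M_s$ over \emph{colour-count vectors} has a genuine gap, and it is exactly the step that determines whether the base $2.619$ is achievable. First, representative families are defined for families of independent \emph{sets}, with the extension condition ``$A\cap B=\emptyset$ and $A\cup B$ independent''; a colour-count vector is a multiset, and any canonical set-encoding of it (e.g., the first $n_c$ copies of colour $c$) breaks the disjointness test whenever the prefix and the suffix both use the same colour, even though their combined count may be legal. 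Second, even granting a correct linear-matroid formulation of $M_s$, the deterministic representative-family computation for general linear matroids has family size $\binom{p+q}{p}$ and a construction cost carrying an exponent of $\omega$, which over the DP yields a base strictly worse than $2.619$; the constant $2.619\approx\varphi^{2}$ arises specifically from the \emph{uniform-matroid} (pure set-disjointness) construction of Fomin et al.\ with the free parameter $x=\frac{p}{p+2q}$ optimized at $p=(1-\frac{1}{\sqrt{5}})(k-k^{\star})$. Your claim that the binomial expression ``evaluates to at most $2.619^{k}$'' is asserted rather than derived, and as stated it would not.

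The paper closes both holes with one device: each non-$p$ candidate $c$ is expanded into $k^{\star}-1$ \emph{distinct labels} $\langle c,1\rangle,\dots,\langle c,k^{\star}-1\rangle$ placed on parallel arcs, the transition convolves a partial solution with \emph{all} admissible labels $\{\{\langle c,j\rangle\}:1\le j<k^{\star}\}$ (so no canonical copy-assignment is fixed), and the cap of $k^{\star}-1$ wins per candidate is then equivalent to plain label-disjointness over the expanded universe. This turns the constraint into a uniform-matroid representative-family computation, for which the $x$-optimization gives $2.619^{k-k^{\star}}$ per guess and $2.619^{k}(n+m)^{\OO(1)}$ overall. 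Two further remarks: your separate polynomial-time treatment of the case $s>k/2$ is a correct observation but unnecessary (the general bound already covers it), and the FFT-based polynomial multiplication you invoke plays no role in the deterministic path algorithm --- the paper uses it only in the $2^{n}(n+m)^{\OO(1)}$ algorithm on general graphs; here the transitions are ordinary subset convolutions of small representative families and FFT buys nothing toward the exponential factor.
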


\shortversion{\il{Need to formulate this properly... how do we quantify $t$, is it a constraint on the feasible solution. Then should we define a new problem ? }
\begin{theorem}(\Ma{TRIAL-Multiple winner})\label{thm:detfpt-multiple-winners}
There is a deterministic algorithm that given an instance of \gm on paths that solves the instance in time $2.619^{kt} (n+m)^{\OO(1)}$, 
where $t$, $n$ and $m$ denote the maximum number of winners in a district, number of vertices in the input graph and the number of candidates, respectively.
\end{theorem}

The deterministic algorithm uses the concept of representative families  to detect the existence of a desired path in the auxiliary graph. The analysis is somewhat similar to that of $k$-{\sc Path}, the problem of deciding if there exists a $k$-length path in the given graph~\cite{fomin2016efficient}. Using the same auxiliary graph, we are able to obtain an improved running time in the randomized setting as exhibited by the following result.
}

\begin{theorem}
\label{thm:ranfpt}
There is a randomized algorithm that given an instance of \gm on paths and a tie-breaking rule $\eta$, 
solves the instance in time $2^k (n+m)^{\OO(1)}$ with no false positives and false negatives with probability at most $1/3$.

\end{theorem}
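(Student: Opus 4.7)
The plan is to reuse the auxiliary directed acyclic graph $H$ from the deterministic Theorem~\ref{thm:detfpt}: its vertex set is $\{0, 1, \ldots, n\}$, and for every pair $0 \le i < j \le n$ it contains an arc $(i, j)$ labeled by $\eta(\{v_{i+1}, \ldots, v_j\})$, the $\eta$-winner of the consecutive district $\{v_{i+1}, \ldots, v_j\}$ of the input path. Since partitions of $V(G)$ into $k$ connected districts are precisely partitions into $k$ contiguous intervals, such partitions biject with directed $k$-arc paths from $0$ to $n$ in $H$, and the instance is a yes-instance iff at least one such path has arc-label multiset placing $p$ strictly above every other candidate. All arc labels can be precomputed in time polynomial in $n$ and $m$, after which the task reduces to detecting a ``$p$-plurality'' $k$-arc path in $H$ with one-sided randomized error.

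My approach for this detection is the algebraic multilinear-monomial technique of Koutis and Williams that underlies the $2^k n^{O(1)}$-time randomized algorithm for \textsc{$k$-Path}, adapted to our labeled setting. Attach a formal variable $x_t$ to each of the $k$ positions along a candidate path, work in the group algebra $\mathbb{F}[\mathbb{Z}_2^k]$ over a sufficiently large finite field $\mathbb{F}$ of characteristic two, and consider the polynomial
\[
  P \;=\; \sum_{\pi}\;\prod_{t=1}^{k} w_t(e^\pi_t),
\]
where $\pi$ ranges over all $k$-arc walks from $0$ to $n$ in $H$, $e^\pi_t$ is the $t$-th arc of $\pi$, and $w_t(\cdot)$ is a randomized edge weight that couples the position variable $x_t$ with the arc's label. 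Walks that revisit a position variable cancel modulo two, while length-$k$ paths survive, so $P \ne 0$ with constant probability exactly when a suitable path exists. The polynomial $P$ can be computed by a layered dynamic program over the $O(nk)$ states of $H$ with $O(n)$ transitions per state, each transition performing a single group-algebra multiplication of cost $2^k (n+m)^{O(1)}$.

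The main obstacle, and the step where the problem genuinely differs from plain \textsc{$k$-Path}, is baking the plurality constraint into the weights $w_t(\cdot)$ without an $m^{\Omega(1)}$ or $2^m$ blow-up. The plan is (i) to guess $a \in \{1, \ldots, k\}$, the number of $p$-winning districts in the sought solution, paying only an $O(k)$ overhead, and (ii) to enforce the cap ``every candidate $q \ne p$ wins at most $a - 1$ districts'' by drawing $O(\log m)$ independent random signings $\sigma : \mathcal{C} \setminus \{p\} \to \mathbb{F}$ and folding each of them into the edge weights as a random linear sketch of the label-count vector; a Schwartz--Zippel argument then shows that cap-violating walks are separated from cap-respecting walks by at least one sketch with probability at least $2/3$. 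Plugging these sketches into the group-algebra detector yields the claimed $2^k (n+m)^{O(1)}$ running time and a one-sided error that drops below $1/3$ after a constant number of independent repetitions, proving the theorem for any fixed tie-breaking rule $\eta$.
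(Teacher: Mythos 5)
Your high-level skeleton matches the paper's: build an auxiliary DAG whose $k$-arc $s$--$t$ paths correspond to partitions of the input path into $k$ contiguous districts, guess the number $k^\star$ of districts won by $p$ (the paper formalizes this as the {\sc Target W-GM} reduction, Lemma~\ref{lem:tgm-gm}), and then detect a suitable path algebraically in $2^{k}(n+m)^{\OO(1)}$ time. However, the step you flag as ``the main obstacle'' --- enforcing that every candidate $q\neq p$ wins at most $k^\star-1$ districts --- is exactly where your argument has a genuine gap. A random signing $\sigma:\Co{C}\sm\{p\}\to\mathbb{F}$ folded into the arc weights gives a linear sketch of the label-count vector, but the set of cap-respecting count vectors (all coordinates $\le k^\star-1$) is not separated from the cap-violating ones by any linear functional, and Schwartz--Zippel only certifies that a polynomial is nonzero, not that its surviving monomials come from cap-respecting walks. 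If both cap-respecting and cap-violating partitions exist, or only cap-violating ones exist, nothing in your construction distinguishes the two cases; no number of independent signings fixes this. Relatedly, your position variables $x_t$ do no work: each position index occurs exactly once on every walk, so nothing ever ``revisits a position variable,'' and in this DAG every monotone walk is already a path, so no cancellation is needed for path-ness either.

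The paper closes this gap combinatorially rather than by sketching: for each district whose winner $c$ is not $p$, the auxiliary graph carries $k^\star-1$ \emph{parallel} arcs labeled $\eLabel{c,1},\dots,\eLabel{c,k^\star-1}$ (one variable per label, $(k^\star-1)(m-1)$ variables in total), while $p$-won districts contribute unlabeled arcs. A partition in which $p$ wins $k^\star$ districts and everyone else at most $k^\star-1$ then corresponds precisely to an $s$--$t$ path whose labeled arcs carry \emph{distinct} labels (Lemma~\ref{lem:stpath}), i.e.\ to a multilinear monomial of degree $k-k^\star$ in a polynomially-sized arithmetic circuit. Williams' multilinear-monomial detection runs in $\OO^{\star}(2^{d}\cdot s)$ where $d=k-k^\star$ is the degree and $s$ the circuit size, so the number of variables enters only polynomially --- the $m^{\Omega(1)}$ blow-up you were trying to avoid never arises. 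You would need to replace your sketching step with this variable-duplication encoding (or an equivalent mechanism) for the proof to go through.
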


\noindent
{\bf Intuition behind the proofs of \Cref{thm:detfpt} and \ref{thm:ranfpt}.} Since, the problem is on paths, it boils down to selecting $k$ appropriate vertices such that the subpaths between them form the desired districts. This in turn implies 
that each district can be identified by the leftmost vertex and the rightmost vertex appearing in the district (based on the way vertices appear on the path). Hence, there can be at most  $\OO(n^2)$ districts in the path graph. Furthermore, since we are on a path, we observe that if we know a district (identified by its leftmost and the rightmost vertices on the path), then we also know the leftmost (and rightmost) vertex of the district adjacent to it. These observations naturally lead us to consider the following graph $H$: we have a vertex for each possible district and put an edge from a district to another district, if these two districts appear consecutively on the path graph. Thus, we are looking for a path of length $k$ in $H$ such that (a) it covers all the vertices of the input path (this automatically implies that each vertex appears in exactly one district); and (b) the 
distinguished candidate wins most number of districts. This equivalence allows us to use the rich algorithmic toolkit developed for designing $2^{\OO(k)} n^{\OO(1)}$ time algorithm for finding  a $k$-length path in a given graph~\cite{monien1985find,bjorklund2017narrow,Williams09}.

\shortversion{
\begin{theorem}\label{thm:ranfpt-multiple-winners}(\Ma{Multiple winners}) There is a randomized algorithm that given an instance of \gm on paths solves the instance in time $2^{kt} (n+m)^{\OO(1)}$ with no false positives and false negatives with probability at most $1/3$, 
where $t$, $n$ and $m$ denote the maximum number of winners in a district, number of vertices in the input graph and the number of candidates, respectively.
\end{theorem}
}

\shortversion{The randomized algorithm works by detecting the existence of the desired path in the auxiliary graph by interpreting each of the labeled paths in the graph as a multivariate monomial, and then using a result by Williams~\cite{Williams09} to detect a multilinear monomial in the resulting (multivariate) polynomial. The underlying idea being that each path with the desired properties is a multilinear monomial in the polynomial thus constructed, and vice-versa. Williams~\cite{Williams09} gave an algorithm with one sided error that allows us to detect a multilinear monomial in time $\OO^{\star}(2^{d})$, where $d$ denotes the degree of the multivariate polynomial\footnote{$\OO^{\star}()$ hides factors that are polynomial in the input size.}. }

The above tractability result for paths cannot be extended to graphs with pathwidth $2$, or graphs with feedback vertex set (a subset of vertices whose deletion transforms the graph into a forest) size $1$, because \gmpvr\ is \npc on $K_{2,n}$ when $k=2$ and $|\Co{C}|=2$~(see~\cite{ItoKKO19}). Note that the pathwidth of graph $K_{2,n}$ is $2$ and has feedback vertex set size $1$.  
For trees, it is easy to obtain a $\OO(\binom{n}{k-1})$ time algorithm by ``guessing'' the  $k-1$ edges whose deletion yields the $k$ districts that constitute the solution.  However, a $f(k)n^{\OO(1)}$ algorithm for trees so far eludes us. Thus, whether the problem is \FPT parameterized by $k$ on trees remains an interesting open problem.   Finally, we consider the parameter $n$, the number of voters ($n$) and design the following algorithm for \gm parameterized by $n$. 




\begin{theorem}
\label{theorem:exact}
There is an algorithm that given an instance of \gm\  on arbitrary graphs and a tie-breaking rule $\eta$, solves the instance in time $2^n (n+m)^{\OO(1)}$. 
\end{theorem}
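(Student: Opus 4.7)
The plan is to design a dynamic programming algorithm whose states are indexed by subsets of $V(G)$.

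First, in a preprocessing phase, iterate over all $2^n$ subsets $S \subseteq V(G)$. For each $S$, decide in polynomial time whether $G[S]$ is connected, and if so compute its winner $\win{S} \in \mathcal{C}$ by forming $\sum_{v \in S} w_v(c)$ for every candidate $c$ and applying the tie-breaking rule $\eta$ to the argmax. This produces, in $2^n \cdot (n+m)^{O(1)}$ total time, a lookup table $\win{\cdot}$ for every potential district (and marks the rest as infeasible).

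Next, set up a DP indexed by pairs $(S, j)$ with $S \subseteq V(G)$ and $j \in \{0, 1, \ldots, k\}$, where $\mathcal{F}[S, j]$ encodes the set of winner profiles $(n_c)_{c \in \mathcal{C}}$ attainable by partitioning $S$ into $j$ pairwise-disjoint connected parts. Fix a canonical vertex $v_0 \in V(G)$, say the lexicographically smallest; for any $S \ni v_0$, the unique part containing $v_0$ in a partition must be some connected $T \subseteq S$ with $v_0 \in T$, yielding the recurrence
\[
\mathcal{F}[S, j] \;=\; \bigcup_{\substack{T \subseteq S,\; v_0 \in T \\ G[T]\text{ connected}}} \bigl\{\,\pi + e_{\mathrm{win}(T)} \,:\, \pi \in \mathcal{F}[S \setminus T,\; j - 1]\,\bigr\},
\]
with base case $\mathcal{F}[\emptyset, 0] = \{\vec{0}\}$, where $e_c$ is the unit vector in coordinate $c$ of the profile space. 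The instance is a \yes-instance iff $\mathcal{F}[V(G), k]$ contains some profile $(n_c)_c$ with $n_p > n_c$ for every $c \neq p$.

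The main obstacle is to avoid storing an explicit list of all winner profiles per state, since in the worst case there are $\binom{k+m-1}{m-1}$ of them, which is super-polynomial. The plan to overcome this is to re-organize the recurrence as a subset convolution over an appropriate algebraic structure, in the spirit of the classical $2^n \cdot n^{O(1)}$-time algorithms for partition-into-connected-subsets problems (in the style of Bj\"orklund--Husfeldt--Kaski--Koivisto). The technical heart will be to show that only a polynomial-sized summary of $\mathcal{F}[S, j]$---carrying just enough information to answer the eventual winning check at $S = V(G)$---needs to be maintained per subset and can be updated in $(n+m)^{O(1)}$ time per transition by exploiting the fact that each added district contributes exactly one unit to exactly one coordinate of the profile. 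Combined with the $2^n$ outer iteration over subsets, this will give the claimed $2^n \cdot (n+m)^{O(1)}$ running time.
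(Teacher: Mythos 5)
There is a genuine gap, and it sits exactly where you placed your ``technical heart will be to show\dots'': the polynomial-sized summary of $\mathcal{F}[S,j]$ is the missing idea, not a routine detail. As you yourself note, the full winner profile $(n_c)_{c\in\mathcal{C}}$ has $\binom{k+m-1}{m-1}$ possible values, and in your recurrence the part containing $v_0$ may be won by \emph{any} candidate, so after processing a prefix of the partition you genuinely need to know, for every candidate, how many districts it has won so far---otherwise you cannot certify at $S=V(G)$ that $p$ beats everyone. No argument is given for why this collapses to $(n+m)^{O(1)}$ bits per subset, and it does not collapse for an unordered DP of this shape. The paper's resolution is a reordering trick: first guess $k^{\star}$, the number of districts won by $p$ (this is the \targm reformulation, Lemma~\ref{lem:tgm-gm}), and then build the partition by adding all districts won by $c_1=p$ first, then all districts won by $c_2$, then $c_3$, and so on in a fixed candidate order. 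With that ordering the state needs only the vertex set used so far, the total number of districts, the index of the current candidate, and how many districts the current candidate has won (bounded by $\min\{k^{\star}-1,k-k^{\star}\}$)---a polynomial number of side indices. Some such structural idea is required before any convolution machinery can be invoked.

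A second, smaller gap: even granting a compressed state, your recurrence enumerates pairs $(T, S\setminus T)$ and so costs $3^n$ unless the combination step is itself a ring-valued subset convolution; a union of profile \emph{sets} is not one. The paper implements the disjoint-combination step by encoding each district $T$ as the monomial $y^{\chi(T)}$ with $\chi(T)$ read as an $n$-bit integer, multiplying polynomials of degree at most $2^n$ with FFT in $O(2^n n)$ time, and keeping only monomials whose Hamming weight equals the sum of the part sizes (Observation~\ref{obs:disjoint-binary-vectors} and Proposition~\ref{prop:disjoint-set}); this is what certifies disjointness and keeps each of the polynomially many combination steps within $2^n(n+m)^{O(1)}$. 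Your appeal to Bj\"orklund--Husfeldt--Kaski--Koivisto is pointing in the right direction, but you would still need to specify the ring, verify that the profile bookkeeping survives the zeta/M\"obius transforms, and handle the ``strictly more than every other candidate'' condition---all of which the guess-$k^{\star}$-and-order-the-candidates scheme sidesteps.
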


\noindent 
{\bf Intuition behind the proof of \Cref{theorem:exact}.} Suppose that we are given a \yes-instance of the problem. Of the $k$ possibilities, we first ``guess'' in a solution the number of districts that are won by the distinguished candidate $p$. Let this number be denoted by $k^\star$. Next, for every candidate $c \in \Co{C}$, we consider the family $\Co{F}_c$, the set of districts  of $V(G)$ in which  $c$ wins in each of them. These families are pairwise disjoint because each
district has a unique winner. Our goal is to find $k^\star$ disjoint sets from the family $\Co{F}_p$ and at most $k^\star-1$ disjoint sets from any other family so that in total we obtain $k$ pairwise disjoint districts that partition $V(G)$. The exhaustive algorithm to find the districts from these families would take time $\OO^\star(2^{nmk^\star})$. We reduce our problem to polynomial multiplication involving polynomial-many multiplicands, each with degree at most $\OO(2^n)$. Next, we discuss the purpose of using polynomial algebra.

\smallskip
\noindent
{\bf Why use polynomial algebra?}\label{why-poly-algebra} Every district $S$ is a subset of $V(G)$. Let $\chi(S)$ denotes the characteristic vector corresponding to $S$. We view $\chi(S)$ as an $n$ digit  binary number, in particular, if $u_i \in S$, then $i^\text{th}$ bit of $\chi(S)$ is $1$, otherwise $0$. A crucial observation guiding our algorithm is that two sets $S_1$ and $S_2$ are disjoint if and only if the number of $1$ in $\chi(S_1)+ \chi(S_2)$ (binary sum/modulo $2$) is equal to $|S_1|+|S_2|$. So, for each set $\Co{F}_c$, we make a polynomial $P_c(y)$, where for each set $S\in \Co{F}_c$, there is a monomial 
$y^{\chi(S)}$.  
Let $c_1$ and $c_2$ be two candidates, and for simplicity assume that each set in 
$\Co{F}_{c_1}$ has size exactly $s$ and each set in $\Co{F}_{c_2}$ has size  exactly $t$. Let $P^\star(y)$ be the polynomial obtained by multiplying  $P_{c_1}(y)$ and $P_{c_2}(y)$; and let $y^z$ be a monomial of 
 $P^\star(y)$. Then, the $z$  has exactly $s+t$ ones if and only if ``the sets which corresponds to $z$ are disjoint''. Thus, the polynomial method allows us to capture disjointness and hence, by multiplying appropriate subparts of polynomial described above, we obtain our result. Furthermore, note that  $\chi(S) \in \{0,1\}^n$, throughout the process as they correspond to some set in $V(G)$, and hence the decimal representation of the maximum degree of the considered polynomials is upper bounded by $2^n$. Hence, the algorithm itself is about applying  an $\OO(d \log d)$ algorithm to multiply two polynomials of degree $d$;  here $d\leq 2^n$.  Thus, we obtain an algorithm that runs in time $2^n (n+m)^{\OO(1)}$.


Additionally, using our parameterized algorithms (Theorems~\ref{thm:detfpt} and~\ref{thm:ranfpt}), we can 
 improve over Theorem~\ref{theorem:exact}, when the graph is a path. That is, using Theorems~\ref{thm:detfpt}, \ref{thm:ranfpt}, and the fact that there exists an algorithm for paths that runs in time $\OO(\binom{n}{k-1})$, we obtain that for \gm\ on paths, there exists a deterministic algorithm that runs in $\max_{1\leq k \leq n}\min \{\binom{n}{k}, 2.619^k\}$ time, and a randomized  algorithm that runs in $\max_{1\leq k \leq n}\min \{\binom{n}{k}, 2^k\}$ time. Using, standard calculations 
 we can obtain the following result.


\begin{theorem}
There is a (randomized) deterministic algorithm that given an instance of \gm on paths and a tie-breaking rule $\eta$, solves the instance in time ($1.708^n (n+m)^{\OO(1)}$) $1.894^n (n+m)^{\OO(1)}$. 
\end{theorem}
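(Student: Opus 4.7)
The plan is to run three algorithms in parallel and return the answer of whichever halts first. Specifically, I would combine (i) the brute-force algorithm on paths that enumerates all $\binom{n-1}{k-1}$ choices of $k-1$ edges to delete from the path (each giving a candidate partition of $V(G)$ into $k$ contiguous districts and checkable in polynomial time given $\eta$), for total time $\OO(\binom{n}{k-1})(n+m)^{\OO(1)}$; (ii) the deterministic parameterized algorithm from Theorem~\ref{thm:detfpt}, running in $2.619^k(n+m)^{\OO(1)}$; and (iii) the randomized parameterized algorithm from Theorem~\ref{thm:ranfpt}, running in $2^k(n+m)^{\OO(1)}$. Pairing (i) with (ii) yields a deterministic running time of $\min\{\binom{n}{k},\,2.619^k\}\cdot(n+m)^{\OO(1)}$, while pairing (i) with (iii) yields a randomized running time of $\min\{\binom{n}{k},\,2^k\}\cdot(n+m)^{\OO(1)}$; the randomized algorithm inherits its one-sided error from Theorem~\ref{thm:ranfpt}.

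Since $k$ is part of the input, I would next bound the worst case over $k\in\{1,\ldots,n\}$ analytically. Writing $k=\alpha n$ with $\alpha\in(0,1)$ and using the standard entropy estimate $\binom{n}{\alpha n}\le 2^{nH(\alpha)}$, where $H(\alpha)=-\alpha\log_2\alpha-(1-\alpha)\log_2(1-\alpha)$, the worst-case $\alpha$ is the one at which the two branches of the minimum balance, i.e., the solution of $H(\alpha^\star)=\alpha^\star\log_2 c$ with $c\in\{2,\,2.619\}$. For $\alpha$ below this crossover the FPT bound $c^{\alpha n}$ dominates the binomial, and above it the binomial dominates, so $\alpha^\star$ is indeed the maximizer.

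The remaining step is to solve this transcendental balance equation numerically. For the randomized case ($c=2$) one obtains $\alpha^\star\approx 0.773$, giving a bound of $2^{\alpha^\star n}\le 1.708^n$; for the deterministic case ($c=2.619$) one obtains $\alpha^\star\approx 0.665$, yielding $c^{\alpha^\star n}\le 1.894^n$. Absorbing the $(n+m)^{\OO(1)}$ overhead then produces the claimed running times. The only mildly technical step is the numerical verification of the two crossover values and the entropy estimate on $\binom{n}{k}$; the rest is just combining the already-established FPT algorithms with a standard brute-force enumeration on paths, and I do not anticipate any real obstacle.
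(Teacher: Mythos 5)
Your proposal is correct and follows essentially the same route as the paper: the paper likewise combines the $\OO(\binom{n}{k-1})$ brute-force enumeration of $k-1$ cut edges with Theorems~\ref{thm:detfpt} and~\ref{thm:ranfpt}, obtains the bounds $\max_{1\leq k\leq n}\min\{\binom{n}{k},2.619^k\}$ and $\max_{1\leq k\leq n}\min\{\binom{n}{k},2^k\}$, and leaves the balancing as ``standard calculations.'' Your entropy-based balancing with crossover points $\alpha^\star\approx 0.773$ and $\alpha^\star\approx 0.665$ is exactly that omitted calculation, and the resulting constants $1.708$ and $1.894$ match the paper's.
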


It is worth mentioning that our algorithmic results use sophisticated technical tools from parameterized complexity--representative set family and Fast Fourier transform based polynomial multiplication--that have yielded breakthroughs in improving time complexity of many well-known optimization problems. Thus, their (possibly first) application to problems arising in social choice theory and/or algebraic game theory may be of independent interest to the community.

\smallskip
\noindent{\bf Organization of the paper.} In Section~\ref{section:npc}, we prove \Cref{thm:npcpath}. Section~\ref{sec:gm_path} and \ref{sec:Exact Algorithm} are devoted to \FPT algorithms. Section~\ref{sec:conclusion} concludes the paper with some open questions. 

\medskip 
\noindent{\it Related work.} In addition to the result discussed earlier Ito et al.~\cite{ItoKKO19} also prove that \gmpvr\ is strongly \npc\ when $G$ is a tree of diameter four; thereby, implying that the problem cannot be solved in pseudo-polynomial time unless {\sf P} = {\sf NP}. As \gmpvr\ is a special case of \gm, each of the hardness results for \gmpvr\ carry onto \gm.  They also exhibit several positive results: \gmpvr\ is solvable in polynomial time on stars (i.e., trees of diameter two) and that the problem can be solved in polynomial time on trees when $k$ is a constant.  Moreover, when the number of candidates is a constant, then it is solvable in polynomial time on paths and is solvable in pseudo-polynomial time on trees. The running time of the algorithm on paths is $k^{2^{\vert \Co{C}\vert}} n^{\OO(1)}$, where $n$ is the number of vertices in the input graph and $\Co{C}$ is the set of the candidates. Prior to this Cohen-Zemach et al.~\cite{ZemachLR18} studied \gmpvr on graphs. \shortversion{proved that given an instance of \gmpvr, it is \npc\ to decide if there is a partition of the input graph such that each part contains at least two vertices and the distinguished candidate $p$ wins in at least $r$ parts, for a given positive integer $r$. } On the other hand, Brubach et al.~\cite{brubach2020meddling} study strategyproofness in partisan gerrymandering and the effects of banning outlier.
In addition to the papers discussed earlier, there are far too many articles to list on the subject of strategic manipulation in voting as well as on the subject of gerrymandering. 
Some of them are \cite{PUPPE200993,FleinerNT17,Clough07,Talmon18,UnweightedCoalitionalManipulation09,Zuckerman09a,Dey-GM-bribery}. Due to space constraints we do not discuss them here. 
Parameterized complexity of manipulation has received extensive attention over the last several years,~\cite{BetzlerGN10, BETZLER20095425, FaliszewskiHHR08,FaliszewskiHHR09,DeyMN19}  are just a few examples.


\shortversion{Puppe and Tasn{\'{a}}di considered gerrymandering with constraints that certain sets of voters cannot form parts in the partition and proved that the problem is \npc~\cite{PUPPE200993}. Fleiner et al.~\cite{FleinerNT17} considered gerrymandering with  geographic constraints that each group needs to be induced by a simply connected region in the plane and proved that the problem is \npc.  Tsang and Larson \cite{TsangLarson16} model the voting process as an iterative game, and study the effects of strategic behavior of the voters on the convergence of the iterative process. Clough~\cite{Clough07} studies strategic voting in social networks, using
a $13 \times 13$ grid-based undirected graph, where voters are grouped according to ideology, and thus edges represent similarity of political ideology. Talmon \cite{Talmon18} also models voters as vertices in a network and considers multiwinner elections with the goal of studying computational complexity of dividing the voters into districts that satisfy certain structural properties. Xia et al~\cite{UnweightedCoalitionalManipulation09} and Zuckerman~\cite{Zuckerman09a} study how a coalition of manipulators can ensure that their favorite candidate wins under several well-known voting rules. Lev et al.~\cite{ReverseGM19} have studied {\it reverse gerrymandering} where voters are allowed to move around for the purpose of maximizing their influence over the final outcome. Dey~\cite{Dey-GM-bribery} studies bribery in the context of gerrymandering and reverse gerrymandering. 
}


%

\section{Preliminaries}


To prove our algorithmic result we prove the following variant of \gm\ that we call {\sc Target Weighted Gerrymandering} (in short, \targm). The input of \targm\ is an instance of \gm, and a positive integer $k^{\star}$. The objective is to test whether the vertex set of the input graph can be partitioned into $k$ districts such that the candidate $p$ wins in $k^{\star}$ districts and no other candidate wins in more than $k^{\star}-1$ districts.  The following simple lemma implies that to design an efficient algorithm for \gm\ it is enough to design an efficient algorithm for \targm.  

\begin{lemma}\label{lem:tgm-gm}
If  there exists an algorithm that given an instance $(G,\Co{C},\{w_v\}_{v\in V(G)},p,k,k^\star)$ of \targm and a tie-breaking rule $\eta$, solves the instance in $f(z)$ time, then there exists an algorithm that solves the instance $(G,\Co{C},\{w_v\}_{v\in V(G)},p,k)$ of  \gm in $f(z)\cdot k$ time under the tie-breaking rule $\eta$.
\end{lemma}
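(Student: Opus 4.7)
The plan is to reduce the given \gm instance to at most $k$ instances of \targm by enumerating the exact number of districts the distinguished candidate $p$ wins in a hypothetical solution. Given an instance $(G,\Co{C},\{w_v\}_{v\in V(G)},p,k)$ of \gm together with a tie-breaking rule $\eta$, for each $k^\star\in\{1,2,\ldots,k\}$ I would invoke the assumed \targm algorithm on the instance $(G,\Co{C},\{w_v\}_{v\in V(G)},p,k,k^\star)$ under the same rule $\eta$, and return yes if and only if at least one of these calls returns yes. Since each call runs in $f(z)$ time and there are at most $k$ of them, the total running time is $f(z)\cdot k$, matching the bound in the statement.

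For the forward direction of correctness, if the call with parameter $k^\star$ returns yes with a witness partition $\pi$ of $V(G)$ into $k$ districts, then by the definition of \targm the candidate $p$ wins exactly $k^\star$ of these districts under $\eta$ while every other candidate wins at most $k^\star-1$. Consequently $p$ strictly beats every other candidate in district count, so $\pi$ certifies that the original \gm instance is a yes-instance.

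For the converse, suppose the \gm instance admits a witness partition $\pi$ under $\eta$, and let $k^\star$ denote the number of districts that $p$ wins in $\pi$. Because $p$ must strictly exceed every other candidate's count, we have $1\leq k^\star\leq k$, and by the same strict inequality every other candidate wins at most $k^\star-1$ districts of $\pi$. Hence $\pi$ is also a valid witness for the \targm instance with parameter $k^\star$, and our enumeration algorithm will correctly detect this at the corresponding iteration.

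There is no substantive obstacle in this reduction; it is essentially brute-force enumeration over a single integer parameter in $\{1,\ldots,k\}$. The only point worth flagging is that the tie-breaking rule $\eta$ must be applied identically across all $k$ invocations, so that the per-district winner is determined unambiguously and the same partition $\pi$ witnesses both formulations without discrepancy between the two problems.
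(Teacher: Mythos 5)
Your proposal is correct and is exactly the argument the paper intends: enumerate the number $k^\star$ of districts won by $p$ over $\{1,\ldots,k\}$, invoke the \targm algorithm once per guess under the same tie-breaking rule, and accept if any call accepts. The paper treats this as a "simple lemma" and gives no further detail, so there is nothing to compare beyond noting that your correctness argument in both directions is the standard one.
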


\noindent{\bf Notations and basic terminology.}  
%
In an undirected graph, we denote an edge between the vertices  $u$ and $v$ as $uv$, and $u$ and $v$ are called the {\it endpoints} of  $uv$. Let $G=(V,E)$ be an undirected graph. A graph $G$ is said to be {\em connected} if every two vertices of $G$ are connected to each other by a path in $G$. For a set $X \subseteq V(G)$, $G[X]$ denote the graph induced on $X$, that is, $G[X]$ contains all the vertices in $X$ and all the edges in $G$ whose both the endpoints are in $X$.  We say that $X$ is a connected set if $G[X]$ is a connected graph. A {\em connected component} of a graph $G$ is a maximally connected subgraph of $G$. In a directed graph $G=(V,A)$, we denote an arc (i.e., directed edge) from $u$ to $v$ by $\langle u,v \rangle$, and say that $u$ is an in-neighbor of $v$ and $v$ is an out-neighbor of $u$. For $x\in V(G)$, $N^-(x)$ denote the set of all in-neighbors of $x$, that is, $N^-(x)=\{y\in V(G)\colon \langle y,x \rangle \in A(G)\}$. The in-degree (out-degree)  of a vertex $x$ in $G$ is the number of in-neighbors (out-neighbors) of $x$ in $G$. For basic notations of graph theory we refer the reader to ~\cite{DBLP:books/daglib/0030488}. 
For a function $\psi \colon A \rightarrow B$, $\psi(A)=\{\psi(a)\in B \colon a \in A\}$.

\hide{\shortversion{A {\em matching} in $G$ is a set $Y\subseteq E(G)$ such that no two distinct edges in $Y$ have a common vertex. A {\em path} $G$ is a graph whose vertices can be listed in the order $(v_1,\ldots,v_n)$ such the edge set is $\{v_iv_{i+1} \colon 1\leq i \leq n-1\}$. For the path graph $G=(v_1,\ldots,v_n)$, its {\em subpath} from $v_i$ to $v_j$, where $j\geq i$, is a path with vertex set $\{v_i,\ldots,v_j\}$ and edge set  $\{v_\ell v_{\ell+1} \colon i\leq \ell \leq j-1\}$.}}



%


\definecolor{zzttff}{rgb}{0.6,0.2,1}
\newcommand{\I}{\ensuremath{\mathcal{I}}}
\newcommand{\J}{\ensuremath{\mathcal{J}}}

\section{NP-Completeness}\label{section:npc}

We prove Theorem~\ref{thm:npcpath} here, by giving a polynomial-time reduction from \rainm on paths to \gmpvr on paths. Recall the definition of \rainm from  pp. \pageref{rainm:defn}.
For a function $\psi \colon A \rightarrow B$, we define $\psi(A)=\{\psi(a)\in B \colon a \in A\}$. 
\shortversion{The input to \rainm consists of a graph $G$, an edge color function $\psi \colon E(G)\rightarrow \{1,\ldots,\ell\}$, and an integer $k$. A {\em rainbow matching} is a matching $M$ whose edges have distinct colors, that is, for any two edges $e,e' \in M$, we have $\psi(e) \neq \psi(e')$. In the \rainm problem, the objective is to decide if there exists a $k$-sized rainbow matching.
It is known that \rainm is \npc even when the input graph is a path~\cite{pfender2014complexity}.}  

\smallskip

%
%
%

\shortversion{
\il{Longer version}
Corresponding to each of the vertices in \I, the instance of \rainm, we have some vertices in \J, the instance of \gmpvr, (depicted by {\bf black} vertices in \Cref{figure:hardness}, one vertex corresponding to first and last vertex, and two vertices corresponding to the other vertices in \I), and corresponding to every edge in \I we have a path of length $2k+1$ (path on {\bf \textcolor{blue}{blue}} and {\bf \textcolor{zzttff}{purple}} vertices in \Cref{figure:hardness}) in \gmpvr instance. 

We think of the colors of edges in \rainm instance as candidates. Next, we encode that if the edge $uv$ is in the matching, then the corresponding path in \gmpvr instance leads to $k+1$ districts that are won by the candidate corresponding to color of the edge $uv$. To encode this, it is enough that the vertices in these subpaths approve the candidate corresponding to color of the edge $uv$. 

However, for some other technical reasons (which will be clear later in the formal description), only alternate vertices in this subpath approve the candidate corresponding to the color, and other vertices approve some other candidates which we prevent to win in a district of size at least two in this subpath by setting appropriate weight to the vertices (see \Cref{figure:hardness}). This other candidate can only win a singleton district in this subpath that we prevent by creating enough districts won by this candidate. Next, we say that if the subpath corresponding to the edge $uv$ leads to $k+1$ districts that are won by the candidate corresponding to color of the edge $uv$, then the edge $uv$ is in the matching. Now, we need to make sure that this leads to a rainbow matching of size at least $k$. To encode that it leads to a rainbow matching, it is enough to ensure that if $uv$ and $wz$ are edges of the same color, then their corresponding color does not win more than $k+1$ districts. To ensure this, we add a special candidate, which is our distinguished candidate, approved by only $k+2$ vertices ({\bf \textcolor{dtsfsf}{red}} colored vertices in \Cref{figure:hardness}). So, all these vertices have to form singleton districts and no other candidate can win more than $k+1$ districts. To encode that we obtain a matching, we prevent that if $uv$ and $vw$ are edges in the \rainm instance, then both the subpaths corresponding to these edges should not split into $k+1$ districts won by candidate corresponding to a color. This we ensure by using a dummy candidate (approved by {\bf \textcolor{mygreen}{green}} colored vertices and some {\bf black} vertices in \Cref{figure:hardness}). This dummy candidate can also win in at most $k+1$ districts and due to our construction, there are singleton $k+1$ districts each containing  {\bf \textcolor{mygreen}{green}} colored vertex. Therefore, if both the subpaths split into $k+1$ districts won by candidate corresponding to a color, then there will be a district containing {\bf black} colored vertex that approves the dummy candidate and also won by this candidate due to the large weight on it, but dummy candidate is only allowed to win in at most $k+1$ districts. To ensure the size of the matching, we set the number of districts appropriately so that there are $k$ subpaths corresponding to $k$ edges in \rainm instance such that each subpath split into $k+1$ districts that are won by the candidate corresponding to color of the edge. Thus, the total number of districts are as follows: $k+2$ for the special candidate (singleton districts containing {\bf \textcolor{dtsfsf}{red}} colored vertices in \Cref{figure:hardness}), $k+1$ for the dummy candidate (singleton districts containing {\bf \textcolor{mygreen}{green}} colored vertices in \Cref{figure:hardness}), $k(k+1)$ districts corresponding to $k$ subpaths won by candidates corresponding color of edges, and $k+1$ more districts that are remaining connected components obtained by deleting above vertices.
}

\noindent{\bf Main idea behind the reduction:}~In the following exposition, we refer the reader to \Cref{figure:hardness}, where we have shown the reduction when applied to a path of three vertices of the instance of \rainm: $(v_{1}, v_{2}, v_{3})$, whose colors are $\psi(v_{1}v_{2})$ and $\psi(v_{2} v_{3})$. The main ingredients of our reduction are as follows: \begin{enumerate*}[label=(\roman*)]
\item We create a path such that the distinguished candidate $c^\star$ can win in at most $k+2$ districts (depicted in \Cref{figure:hardness} as the red portion of the path). We think of the colors of edges in \rainm instance as candidates. Each edge of \rainm instance corresponds to a subpath, we call a {\it segment}. We ensure that a segment can yield at most $k+1$ districts that are won by the color of the edge it corresponds to. Additionally, we have two additional candidates $c$ and $\hat{c}$ whose role will be clear from the formal exposition of the weights on the vertices. Moreover, we set the value $k'=k^{2}+4k+4$, the number of districts in the instance of \gmpvr. 

\item Given a $k$-sized rainbow matching, say $M$, each color that appear in the matching wins in $k+1$ districts. Since no color appears more than once in $M$, it cannot win more than $k+1$ districts. Our construction ensures that only $c^\star$ wins in $k+2$ districts. This gives a solution for \gmpvr. 
\item For the reverse direction, our gadget ensures that in any solution of the constructed instance of \gmpvr, no color can win more than $k+1$ districts (otherwise $c^\star$  cannot win maximum number of districts). Consequently, a color does not win in two segments corresponding to two distinct edges. We construct a matching by taking an edge whose color wins in $k+1$ districts in the corresponding segment. Then, no color appears more than once in a matching ensures the rainbow matching condition. There are two vertices between every pair of segments, and they both give large weight to two dummy candidates $\hat{c}$ and $c$. Unless there exist $k$ segments where the colors win, we will not get desired number of districts. 
If the edges corresponding to these $k$ segments do not form a matching, then $c$ or $\hat{c}$ wins in more than $k+1$ districts. These properties together ensure that given a solution to the reduced instance of \gmpvr, we will obtain a $k$-sized rainbow matching for the instance of \rainm. 
\end{enumerate*}

\begin{figure}[b!]
\centering
\hspace{-0.3cm}\includegraphics[scale=0.33]{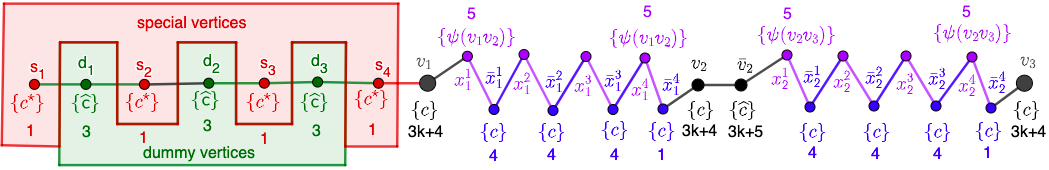}
\caption{An illustration of the construction of graph $H$ used in the proof of \textsf{NP}-completeness of \gmpvr for an instance of \rainm on path $(v_{1}, v_{2}, v_{3})$ and $k = 2$. 
The candidate approved by a vertex is in braces directly below the vertex; and the assigned weights appear below the candidates. 
}\label{figure:hardness}
\end{figure}

Next, we describe our reduction.

\smallskip

\begin{sloppypar}\noindent{\bf Construction.} Let $\Co{I}=(G,\psi,k)$ denote an instance of \rainm, where we assume that $k\geq 5$, or else it is a trivial yes-instance. We create an instance $\Co{J}=(H,\Co{C},a,w, p,k')$ of \gmpvr as follows. 
Let $V(G)=\{v_1,\ldots,v_{\tilde{n}}\}$. 

\paragraph{\bf Construction of the graph $H$.}
\begin{itemize}
\item Corresponding to the vertices $v_1$ and $v_{\tilde{n}}$, we add vertices $v_1$ and $v_{\tilde{n}}$ in $V(H)$. For each vertex $v_i \in V(G)$, where $i\in \{2, \ldots, \tilde{n}-1\}$, we add two vertices $v_i$ and $\bar{v}_i$ in $V(H)$. 
\item For each $i\in \{1, \ldots, \tilde{n}-1\}$, we add a path on $2k+2$ vertices, denoted by $(x_i^1,\bar{x}_i^1,x_i^2,\bar{x}_i^2,\ldots,x_i^{k+1},\bar{x}_i^{k+1})$ to $H$. 
\item We add the edge $v_1x_1^1$. Moreover, for each $i\in \{2, \ldots, \tilde{n}-1\}$, we add edges $\bar{v}_ix_i^1$ and $v_{i+1}\bar{x}_i^{k+1}$ to $E(H)$. 
\item Additionally, we add a set of {\em special vertices} of size $k+2$, denoted by $S=\{s_1,\ldots,s_{k+2}\}$, and a set of {\em dummy vertices} of size $k+1$, say $D=\{d_1,\ldots,d_{k+1}\}$, in $H$. 
\item We add edges $s_id_i$ and $d_is_{i+1}$, for each $i\in \{1,\ldots, k+1\}$, as well as edge $s_{k+2}v_1$ to $E(H)$. 
\end{itemize}
Note that graph $H$ is a path, as depicted below and in Figure~\ref{figure:hardness}.
\[s_{1} d_{1} \!\ldots s_id_i \!\ldots  d_{k+1} s_{k+2} v_{1} x_1^1\!\dots
 \bar{x}_1^{k+1} v_{2} \bar{v}_2 \!\ldots 
 x_{\tilde{n}-1}^1,\bar{x}_{\tilde{n}-1}^1\!\ldots \bar{x}_{\tilde{n}-1}^{k+1}v_{\tilde{n}} \]
\end{sloppypar}

\paragraph{\bf Weight function:}
Next, we define the weight function $w\colon V(H) \rightarrow \mathbb{Z}^+$.

\begin{itemize}
    \item For each $i\in \{1,\ldots,\tilde{n}\}$, we set $w(v_i)=3k+4$.
    \item For each $i\in \{2,\ldots,\tilde{n}-1\}$, we set $w(\bar{v}_i)=3k+5$.
    \item For each $i \in \{1,\ldots,\tilde{n}-1\}$, $j\in \{1,\ldots,k+1\}$, we set $w(x_i^j)=5$.
    \item For each $i \in \{1,\ldots,\tilde{n}-1\}$, $j\in \{1,\ldots,k\}$, we set  $w(\bar{x}_i^j)=4$.
    \item For each $i \in \{1,\ldots,\tilde{n}-1\}$, $j= k+1$, we set $w(\bar{x}_i^j)=1$.
    \item For each $i\in \{1,\ldots,k+2\}$, we set $w(s_i)=1$.
    \item For each $i\in\{1,\ldots,k+1\}$, we set $w(d_i)=3$.
\end{itemize}


\paragraph{\bf Candidate Set.}
Next, we describe the set of candidates. For each color $i\in \{1,\ldots,|\psi(E(G))|\}$, we have a candidate $i$ in $\Co{C}$, the candidate set. We also have three additional {\em special candidates} \Red{$c^\star, c$, and $\hat{c}$} in $\Co{C}$, where $c^{\star}$ is the distinguished candidate, that is $p=c^\star$. 

\paragraph{\bf Approval function.}

Next, we describe the approval function $a\colon V(H)\rightarrow \Co{C}$.
\begin{inparaenum}[(i)]
    \item Every special vertex $s \in V(H)$ approves the special candidate $c^\star$, that is, $a(s)=c^\star$.
    \item Every dummy vertex $d \in V(H)$ approves the candidate $\hat{c}$, that is, $a(d)=\hat{c}$.
    \item Every vertex $v_i \in V(H)$, where $i\in \{1,\ldots,\tilde{n}\}$, approves the candidate $c$, that is, $a(v_i)=c$.
    \item Every vertex $\bar{v}_i \in V(H)$, where $i\in \{2,\ldots,\tilde{n}-1\}$, approves the candidate $\hat{c}$, that is, $a(\bar{v}_i)=\hat{c}$.
    \item Every vertex $x_i^j \in V(H)$, where $i\in \{1,\ldots,\tilde{n}-1\}, j\in \{1,\ldots,k+1\}$, approves the candidate $\psi(v_iv_{i+1})$, that is, $a(x_i^j)=\psi(v_iv_{i+1})$.
    \item Every vertex $\bar{x}_i^j \in V(H)$, where $i\in \{1,\ldots,\tilde{n}-1\}, j\in \{1,\ldots,k+1\}$, approves the candidate $c$, that is, $a(\bar{x}_i^j)=c$.
\end{inparaenum}



\paragraph{\bf Number of districts.}
Next, we describe the choice for the number of districts. Intuitively speaking, we want to create $k+2$ districts each containing only special vertices, $k+1$ districts each containing only dummy vertices, $k+1$ districts containing $v_i,\bar{v}_j$, where $i\in \{1,\ldots,\tilde{n}\}, j\in \{2,\ldots,\tilde{n}-1\}$, and $k(k+1)$ some other districts. Consequently, we set  $k'=k^2+4k+4$, the number of districts. 


\hide{
\par 
\medskip
\noindent{\bf Correctness.} Next, we show the equivalence between the instance $\Co{I}$ of \rainm and the instance $\Co{J}$ of \gmpvr. Formally, we prove the following:
\begin{sloppypar}
\begin{lemma}\label{lem:equivalence}
$\Co{I}=(G,\psi,k)$ is a \yes-instance of \rainm if and only if $\Co{J}=(H,\Co{C},a,w, p,k')$ is a \yes-instance of \gmpvr.
\end{lemma}
\end{sloppypar}


\begin{proof}[Proof of Lemma~\ref{lem:equivalence}]
We start the proof with the following claim which will be extensively used in the proof. 
\begin{claim}[$\clubsuit$]\label{clm:y-wins-in-large-district}
If there exists a district $P$ such that $|P|\geq 4$ and $v_i \in P$, for some $i\in \{1,\ldots,\tilde{n}\}$, then \Red{$c$} wins the district  $P$.
\end{claim}
\shortversion{
\begin{proof}
Suppose that the district $P$ contains $v_1$ and $v_t$, where $t \in \{1,\ldots,\tilde{n}\}$, and does not contain $v_{t+1}$, if $t<\tilde{n}$. Then, since $H[P]$ is connected, it contains all the vertices in the subpath from $v_1$ to $v_t$ in $H$. Note that $P$ can also contain special vertices, dummy vertices, and vertices from the set $\{\bar{v}_t, x_t^1,\ldots,\bar{x}_t^{k+1}\}$ if $t<\tilde{n}$. 
The total weight of the vertices who approve the candidate $c^\star$ is at most $k+2$; for the vertices who approve the candidate \Red{$\hat{c}$}, it is at most $3(k+1)+(t-1)(6k+8)$; for the vertices who approve the candidate $\psi(v_iv_{i+1})$, where $i\in \{1,\ldots,t+1\}$, it is at most $3t(k+1)$; for the vertices who approve the candidate \Red{$c$}, it is at least $t(6k+7)+2(t-1)(k+1)$. Therefore,  \Red{$c$} wins in the district $P$. Next, we consider the case when $P$ does not contain $v_1$. Clearly, in this case $P$ does not contain special vertices and dummy vertices. Let $P$ contains $v_r$ and $v_t$, where $r,t \in \{2,\ldots,\tilde{n}\}$ and $r\leq t$, and does not contain $v_{r-1}$ and $v_{t+1}$ if $t<\tilde{n}$. As argued above, $P$ contains all the vertices in the subpath from $v_r$ to $v_t$ of $H$. We first consider the case when $r<t$. Note that $P$ can contain vertices from the set $\{\bar{v}_t, x_t^1,\ldots,x_t^{k+1}\}$ if $t<\tilde{n}$ and from the set $\{x_{r-1}^1,\ldots,\bar{x}_{r-1}^{k+1}\}$. The total weight of the vertices who approve the candidate in $P$ is as follows: for the candidate \Red{$\hat{c}$}, it is at most $(t-r+1)(6k+8)$; for the candidate $\psi(v_iv_{i+1})$, where $i \in \{r-1,\ldots, t\}$, it is at most $3(t-r+2)(k+1)$; for the candidate \Red{$c$}, it is at least $(t-r+1)(6k+7)+2(t-r)(k+1)$;  and for the remaining candidates, it is $0$. Since $r<t$, we can infer that \Red{$c$} wins in $P$. Next, we consider the case when $r=t$. Since $|P|>3$, if $r=\tilde{n}$, then $P$ also contains $x_{r-1}^{k+1}$, otherwise, $P\cap \{\bar{x}_{r-1}^{k+1},\bar{x}_r^{k+1}\} \neq \emptyset$. In both the cases, the total weight of the vertices who approve candidate \Red{$\hat{c}$} in $P$ is at most $6k+8$ and the weight of the candidate \Red{$c$} is at least $6k+9$. Therefore, \Red{$c$} wins district $P$.
\end{proof}
}
Next, we move towards proving  Lemma~\ref{lem:equivalence}.
In the forward direction, let $M$ be a solution to $\Co{I}$. We create a $k'$-partition of $V(H)$, denoted by $\Co{P}$, as follows. Let $P_S=\{\{s_1\}, \ldots, \{s_{k+2}\}\}$, $P_D=\{\{d_1\},\ldots,\{d_{k+1}\}\}$, and $P_X=\{\{x_i^1, \bar{x}_i^1\}, \ldots, \{x_i^{k+1},\bar{x}_i^{k+1}\}\colon v_i v_{i+1}\in M\}$. We add $P_S,P_D$ and $P_X$ to $\Co{P}$.  Let $\tilde{G}$ be the graph obtained from $H$ after deleting all the  special vertices, dummy vertices, and $x_i^j,\bar{x}_i^j$, for all  $v_iv_{i+1}\in M$ and $j\in \{1,\ldots,k+1\}$.  Since $|M|=k$, we have $k+1$ connected components in $\tilde{G}$. Let these connected components be denoted by $\tilde{G}_1, \ldots, \tilde{G}_{k+1}$. For each $h\in \{1,\ldots,k+1\}$, we add the set $V(\tilde{G}_h)$ to $\Co{P}$. Note that $\Co{P}$ is a partition of $V(H)$ and every set in $\Co{P}$ is connected. We observe that
 \begin{inparaenum}[(i)]
\item the candidate $c^\star$ wins in every district in $P_S$. Hence, there are at least $k+2$ districts won by $c^\star$ in $\Co{P}$.
\item the candidate \Red{$\hat{c}$} wins in every district in $P_D$.  Therefore, there are at least $k+1$ districts won by \Red{$\hat{c}$} in $\Co{P}$.
\item for an edge $v_iv_{i+1}\in M$, the candidate $\psi(v_iv_{i+1})$ wins in every district $\{x_i^j,\bar{x}_i^j\}$ in $P_X$, where $j\in \{1,\ldots,k+1\}$, and hence  there are at least $k+1$ districts won by $\psi(v_iv_{i+1})$ in $\Co{P}$.
\end{inparaenum}

We next claim that for each $h\in \{1,\ldots,k+1\}$, the candidate \Red{$c$} wins in the district $V(\tilde{G}_h)$. We first observe that $|V(\tilde{G}_h)|$ is neither $2$ nor $3$. This is due to the fact that $M$ is a matching, so for any $i\in \{1,\ldots,\tilde{n}-1\}$ and $j,j'\in \{1,\ldots,k+1\}$, we do not delete both $x_i^j$ and $x_{i+1}^{j'}$ to construct the graph $\tilde{G}$. Thus, $|V(\tilde{G}_h)|$ is either $1$ or greater than $4$. We first consider the case when $|V(\tilde{G}_h)|=1$. Due to the construction of the districts, if $|V(\tilde{G}_h)|=1$, then either $V(\tilde{G}_h)$ is $\{v_1\}$  or $\{v_{\tilde{n}}\}$. Since $v_1$ and $v_n$ both approves $c$, the candidate \Red{$c$} wins in the districts $\{v_1\}$ and $\{v_{\tilde{n}}\}$. We next consider the case when $|V(\tilde{G}_h)|\geq 4$. By the construction of $\tilde{G}_h$, 
it contains at least two vertices from the set $\{v_1,\ldots,v_{\tilde{n}}\}$. 
Therefore, due to Claim~\ref{clm:y-wins-in-large-district}, \Red{$c$} wins in the district $V(\tilde{G}_h)$, when $|V(\tilde{G}_h)|\geq 4$. Thus, for each $h\in \{1,\ldots,k+1\}$, $c$ wins in the district $V(\tilde{G}_h)$. 
Since $c$ wins in $V(\tilde{G}_h)$, for each $h\in \{1,\ldots,k+1\}$, due to the above observations $c^\star$ wins in exactly $k+2$ districts, and  $\hat{c}$ and $\psi(v_iv_{i+1})$ win in exactly $k+1$ districts.
Since $c^\star$ wins in $k+2$ districts and every other candidate wins in at most $k+1$ districts in $\Co{P}$, $\Co{P}$ is a solution to $\Co{J}$.  \par

In the backward direction, let $\Co{P}=\{P_1,\ldots,P_{k'}\}$ be a solution to $\Co{J}$. We create a set of edges $M\subseteq E(G)$ as follows. If there are $k+1$ districts which are subpaths of $(x_i^1,\ldots, \bar{x}_{i+1}^{k+1})$, where $i\in \{1,\ldots,\tilde{n}-2\}$, such that $\psi(v_iv_{i+1})$ wins in these districts, then we add $v_iv_{i+1}$ to $M$. We next prove that $M$ is a solution to $\Co{I}$. We begin with proving some properties of the partition $\Co{P}$. Let $\Co{\tilde{P}} \subseteq \Co{P}$ be the set of districts that contain $v_i$ or $\bar{v}_j$, where $i\in \{1,\ldots,\tilde{n}\}$ and $j\in \{2,\ldots,\tilde{n}-1\}$. The next set of claims complete the proof.

\begin{claim}[$\clubsuit$]\label{c winning sets}
Every district in $\Co{\tilde{P}}$ is won by either \Red{$c$} or \Red{$\hat{c}$}.
\end{claim} 

\shortversion{
\begin{proof}
We first argue for the districts that contains $v_i$, where $i\in \{1,\ldots,\tilde{n}\}$, but not $\bar{v}_j$ for any $j\in \{2,\ldots,\tilde{n}-1\}$. Suppose that $P$ is such a district in $\Co{\tilde{P}}$. If $P$ contains $v_1$, then clearly, $P$ is a subpath of $(v_1,\ldots,\bar{x}_1^{k+1})$, and $c$ wins in such a district. If $P$ contains $v_i$, where $i>2$, then, clearly $P$ is a subpath of $(x_{i-1}^1,\ldots,v_i)$, and $c$ wins in such a district. Next, we argue for the districts that contains $\bar{v}_j$, where $j\in \{2,\ldots,\tilde{n}-1\}$, but not $v_i$, for any $i\in \{1,\ldots,\tilde{n}\}$. Suppose that $P$ is such a district in $\Co{\tilde{P}}$. Note that $P$ is a subpath of $(\bar{v}_j, x_{j+1}^1,\ldots,\bar{x}_{j+1}^{k+1})$, and $\hat{c}$ wins in such a district. Next, we consider the districts in $\Co{\tilde{P}}$ that contains both $v_i$, where $i\in \{1,\ldots,\tilde{n}\}$, and $\bar{v}_j$, where $j\in \{2,\ldots,\tilde{n}-1\}$. Suppose that $P$ is such a district in $\Co{\tilde{P}}$. If $|P|$ is either $2$ or $3$, then due to the construction of $H$, $P$ is either $\{v_i,\bar{v}_i\}$ or $\{v_i,\bar{v}_i,x_{i}^1\}$ or $\{v_i,\bar{v}_i,\bar{x}_{i-1}^{k+1}\}$, where $i\in \{2,\ldots,\tilde{n}-1\}$, and $\hat{c}$ wins in such a district. If $|P|>3$, then due to Claim~\ref{clm:y-wins-in-large-district}, $c$ wins in the district $P$.
\end{proof}
}

\begin{claim}[$\clubsuit$]\label{clm:k-blocks-contains-sol-edges}
The size of the set $\Co{\tilde{P}}$ is at most $k+1$.
\end{claim} 

\shortversion{\begin{proof}
 Suppose that $|\Co{\tilde{P}}|\geq k+2$. Due to Claim~\ref{c winning sets}, every district in $\Co{\tilde{P}}$ is won by either \Red{$c$} or \Red{$\hat{c}$}. Let $n_c$ and $n_{\hat{c}}$ be the number of districts won by \Red{$c$ and $\hat{c}$}, respectively, in $\Co{\tilde{P}}$. Clearly, $n_c+n_{\hat{c}} = |\Co{\tilde{P}}| \geq k+2$. Note that $c^\star$ can win in at most $k+2$ districts as only these many vertices approve $c^\star$. Since $c^\star$ is the distinguished candidate, \Red{$\hat{c}$} can win at most $k+1$ districts. Note that if a district contains only dummy vertices and special vertices, then it is won by \Red{$\hat{c}$}, by the construction.  Let $\Co{P}_{\hat{c}}$ denote the set of all districts in $\Co{P}\setminus \Co{\tilde{P}}$ that contain at least one dummy vertex. Every district in $P_{\hat{c}}$ is won by $\hat{c}$ because either they contain only a dummy vertex or dummy and special vertices.

Thus, it follows that  $1\leq |\Co{P}_{\hat{c}}|+n_{\hat{c}}\leq k+1$, since \Red{$\hat{c}$} can only win at most $k+1$ districts. By the construction of the graph $H$, there are at most $|\Co{P}_{\hat{c}}|+1$ districts containing only special vertices. Therefore, there are at most $|\Co{P}_{\hat{c}}|+1$ districts won by $c^\star$ as $c^\star$ can only win a district which contains only special vertices. Thus, there are at most $k+2-n_{\hat{c}}$ districts won by $c^\star$. Since $n_c+n_{\hat{c}} \geq k+2$, we have that the number of districts won by $c$ is at least $k+2-n_{\hat{c}}$, a contradiction to the fact that $\Co{P}$ is a solution to $\Co{J}$.
\end{proof} 
}


\begin{claim}[$\clubsuit$]\label{clm:non-emptyM}
The set $M$ is non-empty. 
\end{claim}

\shortversion{
\begin{proof}For the sake of contradiction, suppose that $M=\emptyset$. Then, due to the construction of the set $M$, we know that  for each $i\in \{1,\ldots,\tilde{n}\}$, there are at most $k$ districts which are subpaths of $(x_i^1,\ldots, x_{i}^{k+1})$ that are won by $\psi(v_iv_{i+1})$. Suppose that $n_c$ and $n_{\hat{c}}$ be the number of districts in $\Co{\tilde{P}}$ that are won by $c$ and $\hat{c}$, respectively. Then, there can be at most $k+1-n_c$ districts of type $\{\bar{x}_i^j\}$, where $i\in \{1,\ldots,\tilde{n}-1\}, j\in \{1,\ldots,k+1\}$ as these districts are also won by $c$ and the distinguished candidate $c^\star$ can win at most $k+2$ districts. Since $|\Co{\tilde{P}}|\leq k+1$  (Claim~\ref{clm:k-blocks-contains-sol-edges}) and for each $i\in \{1,\ldots,\tilde{n}-1\}$, there are at most $k$ districts which are subpaths of $(x_i^1,\ldots, x_{i}^{k+1})$ that are won by $\psi(v_iv_{i+1})$, it follows that there are at most $k^2$ districts that contains $x_i^j$ but not $v_i$ or $v_{i+1}$. Let $\Co{P}_{\hat{c}}$ denote the set of all districts in $\Co{P}\setminus \Co{\tilde{P}}$ that contain at least one dummy vertex. Using the same argument in Claim~\ref{clm:k-blocks-contains-sol-edges}, every district in $P_{\hat{c}}$ is won by $\hat{c}$, and there are at most $k+2-n_{\hat{c}}$ districts won by $c^\star$. Therefore, the total number of districts in $\Co{P}$ is 
\begin{equation*}
    \begin{split}
        & n_c+n_{\hat{c}}+(k+1-n_c)+k^2+(k+1-n_{\hat{c}})+(k+2-n_{\hat{c}}) \\
        = & k^2+3k+4-n_{\hat{c}} \\
        < & k'
    \end{split}
\end{equation*}
 a contradiction to the fact that $\Co{P}$ is a solution to $\Co{J}$. 
\end{proof}
}

\begin{claim}[$\clubsuit$]\label{cor}
Candidate $c^\star$ wins in $k+2$ districts. Moreover, in $\Co{P}$  there are $k+2$ districts containing only special vertices and $k+1$ districts containing only dummy vertices.
\end{claim}

\shortversion{
\begin{proof}Since $M\neq \emptyset$, by the construction of $M$, there exists at least one $i\in \{1,\ldots,\tilde{n}-1\}$ such that there are $k+1$ districts which are subpaths of $(x_i^1,\ldots, x_{i}^{k+1})$ that are won by $\psi(v_iv_{i+1})$. Since $\Co{P}$ is a solution to the instance $\Co{J}$, $c^\star$ must win in at least $k+2$ districts. Since there are only $k+2$ vertices who approve $c^\star$, it can win in at most $k+2$ districts. Consequently, there are $k+2$ districts in $\Co{P}$ containing only special vertices (districts won by $c^{\star}$) and additional $k+1$ districts in $\Co{P}$ containing only dummy vertices. 
\end{proof}

Due to Claim~\ref{cor}, we have the following:
\begin{corollary}\label{cor:c wins in large districts}
Every district in $\Co{\tilde{P}}$ is won by the candidate $c$.
\end{corollary}
}
\begin{claim}[$\clubsuit$]\label{sizeM}\label{clm:rainbow-matching}
Set $M$ is a rainbow matching of size $k$.  
\end{claim}
\shortversion{\begin{proof}
Since $|\Co{\tilde{P}}|\leq k+1$, due to the construction of $M$, we know that $|M|\leq k$. Suppose that $|M|<k$. Then, for every $i\in \{1,\ldots,\tilde{n}-1\}$, there are at most $k$ districts that are subpaths of $(x_i^1,\ldots,\bar{x}_i^{k+1})$ and won by $\psi(v_iv_{i+1})$. Also, since $|\Co{\tilde{P}}|\leq k+1$, there are at most $k$ many such $i$s. Let $|\Co{\tilde{P}}|= \tilde{k}$. Due to Corollary~\ref{cor:c wins in large districts}, we know that there are at most $k+1-\tilde{k}$ districts of type $\bar{x}_i^j$, where $i\in \{1,\ldots,\tilde{n}-1\}, j\in \{1,\ldots,k+1\}$, as $c$ wins in these districts as well and the distinguished candidate wins in $k+2$ districts (Claim~\ref{cor}). Thus, the total number of districts in $\Co{P}$ is at most $2k+3+k^2+k+1 < k'$, a contradiction.   
\end{proof}}
%
%
%
\shortversion{
\begin{proof}
We first prove that $M$ is a matching. Suppose not, then for some $i\in \{1,\ldots,\tilde{n}-2\}$, there are $k+1$ districts that are subpaths of $(x_i^1,\ldots,\bar{x}_i^{k+1})$ and $(x_{i+1}^1,\ldots,\bar{x}_{i+1}^{k+1})$.    Thus, there is a district $\{v_{i+1}\bar{v}_{i+1}\}$ in $\Co{P}$ which is won by \Red{$\hat{c}$}. Due to Claim~\ref{cor} there are $k+1$ districts in $\Co{P}$ containing only dummy vertices. Therefore, there are $k+2$ districts won by \Red{$\hat{c}$}, a contradiction, because the distinguished candidate $c^\star$ wins in $k+2$ districts. 
\par
We next prove that if edges $v_iv_{i+1},v_hv_{h+1} \in M$, where $i,h\in \{1,\ldots,\tilde{n}-1\}$, then $\psi(v_iv_{i+1})\neq \psi(v_h v_{h+1})$. Towards the contradiction, suppose that $\psi(v_iv_{i+1}) = \psi(v_h v_{h+1})$. Due to the construction of the edge set $M$, there are $k+1$ districts that are subpaths of $(x_i^1,\ldots,\bar{x}_i^{k+1})$ and won by $\psi(v_iv_{i+1})$ and $k+1$ districts that are subpaths of $(x_j^1,\ldots,\bar{x}_j^{k+1})$ and won by $\psi(v_jv_{j+1})$. Thus, there are $2k+2$ districts won by $\psi(v_iv_{i+1})$, a contradiction as the target candidate $c^\star$ wins in $k+2$ districts. 
\end{proof}
}
%
%
\end{proof}}
\medskip
\noindent{\bf Correctness.} Next, we show the equivalence between the instance $\Co{I}$ of \rainm and the instance $\Co{J}$ of \gmpvr. Formally, we prove the following:
\begin{sloppypar}
\begin{lemma}\label{lem:equivalence}
$\Co{I}=(G,\psi,k)$ is a \yes-instance of \rainm if and only if $\Co{J}=(H,\Co{C},a,w, c^\star,k')$ is a \yes-instance of \gmpvr.
\end{lemma}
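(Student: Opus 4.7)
The plan is to prove both directions by analyzing, in each candidate district, which candidate wins based on the carefully chosen weight function, and then counting how many districts each candidate can win in total. The central technical tool will be an auxiliary claim: \emph{any district $P$ with $|P|\geq 4$ that contains some $v_i$ is won by $c$.} This follows from a direct weight comparison: the vertices $v_i$ (weight $3k+4$) together with the alternating $\bar{x}_{\cdot}^{\cdot}$ vertices (weight $4$) that are forced to lie in $P$ by connectedness outweigh the $\psi$-approving vertices $x_{\cdot}^{\cdot}$ (weight $5$) and the $\hat{c}$-approving vertices $\bar{v}_j$ (weight $3k+5$) and dummy/special vertices in the same segment. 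I would verify this by splitting on whether $P$ contains $v_1$ (hence possibly special and dummy vertices as well), whether $P$ contains a single $v_i$ or spans several, and checking in each subcase that the $c$-weight dominates.

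For the forward direction, given a rainbow matching $M$ of size $k$, I construct the partition $\Co{P}$ explicitly as: the $k+2$ singletons $\{s_i\}$ (each won by $c^\star$), the $k+1$ singletons $\{d_i\}$ (each won by $\hat{c}$), for each edge $v_iv_{i+1}\in M$ the $k+1$ pairs $\{x_i^j,\bar{x}_i^j\}$ (each won by $\psi(v_iv_{i+1})$, since $a(x_i^j)$ has weight $5$ and beats the weight-$4$ or weight-$1$ vertex $\bar{x}_i^j$), and finally the connected components of the remaining induced subgraph. Since $M$ is a matching, no remaining component has size $2$ or $3$: either it is a singleton $\{v_1\}$ or $\{v_{\tilde{n}}\}$ (won by $c$), or has size $\geq 4$ and contains a $v_i$ (won by $c$ by the auxiliary claim). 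Counting gives $k+2+(k+1)+k(k+1)+(k+1)=k^2+4k+4=k'$ districts, with $c^\star$ winning $k+2$ and every other candidate winning at most $k+1$.

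For the backward direction, let $\Co{P}$ be a solution and let $\tilde{\Co{P}}\subseteq\Co{P}$ be the districts meeting $\{v_i\}\cup\{\bar{v}_j\}$. First I show that every district in $\tilde{\Co{P}}$ is won by $c$ or $\hat{c}$ (again by weight comparison, with $\hat{c}$ winning only tiny districts like $\{v_i,\bar{v}_i\}$). Next I argue $|\tilde{\Co{P}}|\leq k+1$: otherwise $\hat{c}$ already wins $\geq k+2-n_c$ districts inside $\tilde{\Co{P}}$, and combined with the dummy-containing districts outside $\tilde{\Co{P}}$ (each won by $\hat{c}$) together with the bound that $c^\star$ needs to remain the unique top candidate, I derive a contradiction. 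Define $M$ to contain $v_iv_{i+1}$ whenever the segment $(x_i^1,\ldots,\bar{x}_i^{k+1})$ splits into exactly $k+1$ districts won by $\psi(v_iv_{i+1})$. A direct counting of the total number $k'$ of districts, using the fact that every non-matching segment contributes at most $k$ color-wins, forces $|M|\geq k$, and hence $|M|=k$. Finally, $M$ is a matching because if $v_iv_{i+1},v_{i+1}v_{i+2}\in M$ then $\{v_{i+1},\bar{v}_{i+1}\}$ is forced to be a district won by $\hat{c}$, pushing $\hat{c}$ to $k+2$ wins; and $M$ is rainbow because two equally colored edges in $M$ would give that color $2(k+1)>k+2$ wins.

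The main obstacle will be the bookkeeping in the backward direction: precisely tracking how the $k'=k^2+4k+4$ districts must be distributed among the winners $c^\star,\hat{c},c$ and the color-candidates, since the counts are tight and any slack breaks the argument. The auxiliary claim about large districts containing some $v_i$ is the pivotal technical lemma that makes the counting go through, and ensuring its proof covers all structural positions of $P$ (spanning $v_1$, internal, spanning $v_{\tilde{n}}$) is where I expect the most care to be required.
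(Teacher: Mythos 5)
Your overall architecture is the same as the paper's: the same explicit partition in the forward direction (singleton special/dummy districts, the $k+1$ pairs $\{x_i^j,\bar{x}_i^j\}$ per matched edge, and the leftover components), the same extraction of $M$ from the segments that split into $k+1$ colour-won districts, and the same chain of counting claims ($\Co{\tilde{P}}$ won by $c$ or $\hat{c}$, $|\Co{\tilde{P}}|\le k+1$, $M\neq\emptyset$, $c^\star$ wins exactly $k+2$, $|M|=k$, matching, rainbow). The genuine gap is in your pivotal auxiliary claim. You assert that any district $P$ with $|P|\ge 4$ containing some $v_i$ is won by $c$, and you intend to use this to conclude that $\hat{c}$ wins only the $k+1$ dummy districts. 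With the weights actually in force, this fails at sizes $4$ and $5$: take $P=\{\bar{x}_{i-1}^{k+1},\,v_i,\,\bar{v}_i,\,x_i^1\}$. Then $c$ collects $w(\bar{x}_{i-1}^{k+1})+w(v_i)=1+(3k+4)=3k+5$ while $\hat{c}$ collects $w(\bar{v}_i)=3k+5$, a tie; since a tied candidate still ``wins in conjunction with others,'' such a district would hand $\hat{c}$ a $(k+2)$-nd win and destroy the forward direction. Your sketched justification (``the $c$-weight dominates'') therefore does not verify at the threshold you chose, and the lemma as stated is false. The paper's version of this claim requires $|P|\ge 6$ and asserts that $c$ is the \emph{unique} winner, handling sizes $2$ through $5$ by separate case analysis in which $\hat{c}$ is explicitly allowed to win or co-win.

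The gap is localized and repairable: because $M$ is a matching, no two consecutive segments are deleted, so every non-singleton leftover component $\tilde{G}_h$ contains a full undeleted segment and hence has size at least $2k+2\ge 6$ (the paper assumes $k\ge 5$), which is exactly why the $|P|\ge 6$ uniqueness claim suffices. But you must (i) strengthen your size bound on the leftover components from ``$\ge 4$'' to ``$\ge 2k+2$,'' and (ii) prove unique winning rather than mere winning, since the whole point of the counting is to cap $\hat{c}$ at $k+1$ districts. A second, smaller imprecision: in the backward direction you say $\hat{c}$ wins ``only tiny districts like $\{v_i,\bar{v}_i\}$,'' which is the right intuition, but the tight budget $k'=k^2+4k+4$ means you must carry the co-win possibility at sizes $4$ and $5$ explicitly through the claims bounding $|\Co{\tilde{P}}|$ and forcing $|M|=k$, as the paper does; otherwise the inequality chain that yields the contradiction is not justified.
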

\end{sloppypar}

\begin{proof}
We start the proof with the following claim which will be extensively used in the proof. 

\begin{claim}\label{claim:y-wins-in-large-district}
If there exists a district $P$ such that $|P|\geq 6$ and $v_i \in P$, for some $i\in \{1,\ldots,\tilde{n}\}$, then \Red{$c$} is the unique winner in the district  $P$.
\end{claim}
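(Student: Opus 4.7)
The plan is to show $W_c(P) > W_{c'}(P)$ for every candidate $c' \ne c$, exploiting that $P$ is a contiguous subpath of $H$ (since $H$ is a path and $P$ is connected). Let $r := \min\{i : v_i \in P\}$ and $t := \max\{i : v_i \in P\}$, so $\{v_r, \ldots, v_t\} \subseteq P$ by contiguity. Two opponents of $c$ are immediately dispatched: since $c^\star$ is approved only by the $k+2$ special vertices (each of weight $1$), $W_{c^\star}(P) \le k+2 < 3k+4 \le W_c(P)$; and since each colour $\psi_j$ is approved only by $x_j^1,\ldots,x_j^{k+1}$ (each of weight $5$), $W_{\psi_j}(P) \le 5(k+1)$, so combined with the $3k+4$ from $v_r$ and the fact that by contiguity any $x_j^\ell \in P$ with $\ell \ge 2$ forces the adjacent $\bar x_j^{\ell-1} \in P$ into $W_c(P)$, a short arithmetic check using $k \ge 5$ gives $W_c(P) > W_{\psi_j}(P)$.

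The substantive comparison is $W_c(P)$ versus $W_{\hat c}(P)$. Let $n_v := t-r+1$ and let $n_{\bar v}$ be the number of $\bar v$'s in $P$. The key structural observation is that $\bar v_j$ sits immediately to the right of $v_j$ in $H$, so the $\bar v$'s in $P$ lie in $\{\bar v_{r-1}, \bar v_r, \ldots, \bar v_t\}$ with $\bar v_r, \ldots, \bar v_{t-1}$ forced in (they are sandwiched between successive $v$'s of $P$); hence $n_{\bar v} \le n_v + 1$. Dummies contribute at most $3(k+1) = 3k+3$ to $\hat c$ and can occur in $P$ only when $r=1$ (and in that case $\bar v_{r-1}$ does not exist, which simplifies the sub-case). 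I would split into $n_{\bar v} \le n_v$ and $n_{\bar v} = n_v+1$. In the first sub-case, the $v$-vs-$\bar v$ weight gap $n_{\bar v}(3k+5) - n_v(3k+4)$ is at most $n_v$, amply covered by the $\bar x$-vertices that $|P|\ge 6$ forces into $P$. In the second, the existence of an extra $\bar v$ (namely $\bar v_{r-1}$ or $\bar v_t$) forces $P$ to contain at least one full inter-segment whose $\bar x$'s contribute $4k+1$ to $c$, tipping the balance.

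The main obstacle will be the tightest configuration: $r=t$ with both $\bar v_{r-1}$ and $\bar v_r$ in $P$ (requiring $r \ge 3$, so no dummies appear). Contiguity forces $P \supseteq \{\bar v_{r-1}, x_{r-1}^1, \bar x_{r-1}^1, \ldots, \bar x_{r-1}^{k+1}, v_r, \bar v_r\}$, whence $W_c(P) \ge (3k+4) + (4k+1) = 7k+5$ while $W_{\hat c}(P) \le 2(3k+5) = 6k+10$. Strict dominance of $c$ holds when $k \ge 6$; the boundary $k=5$ is handled either by invoking the tie-breaking rule $\eta$ or by noting that any proper extension of $P$ past this minimal span strictly benefits $c$ (the next vertex rightward is $x_r^1$, but then contiguity would also bring $\bar x_r^1$, and the resulting gain of $4$ for $c$ against a gain of $5$ for $\psi_r$ is still dominated by the $3k+4$ already accumulated for $c$). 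All remaining sub-cases resolve by the same template with comfortable slack.
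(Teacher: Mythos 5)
Your overall strategy is the same as the paper's: restrict attention to the contiguous span of $P$ in $H$, dispatch $c^\star$ and the colour-candidates cheaply, and reduce the claim to a weight comparison between $c$ and $\hat c$ driven by the count of $v$'s versus $\bar v$'s (the paper organises the same comparison as an explicit case analysis on $r$, $t$ and the boundary $x$-vertices). However, your dismissal of the colour-candidates has a genuine gap: the bound $W_{\psi_j}(P)\le 5(k+1)$ assumes the colour $\psi(v_jv_{j+1})$ is approved only by $x_j^1,\ldots,x_j^{k+1}$, but in \rainm\ a colour may label many edges of $G$ (this is the whole point of the problem), so that candidate is also approved by $x_{j'}^\ell$ for every other edge $v_{j'}v_{j'+1}$ of the same colour, and a long district can meet several such segments. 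The comparison still closes, but only by charging each same-coloured run's surplus over its interleaved $\bar x$'s (at most about $k+5$ per run) against the weight $3k+4$ of a $v$-vertex that contiguity forces between consecutive runs; the paper sidesteps the issue by bounding the \emph{aggregate} weight of all colour-candidates over all segments met by $P$ and beating that total. Your single-segment accounting, as written, does not suffice.

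The tight configuration you isolate is real and your proposed patches for it do not work. For $P=\{\bar v_{r-1},x_{r-1}^1,\ldots,\bar x_{r-1}^{k+1},v_r,\bar v_r\}$ one gets $W_c(P)=7k+5$ and $W_{\hat c}(P)=6k+10$, which tie at $k=5$ --- and $k=5$ is permitted by the reduction's standing assumption $k\ge 5$. This $P$ is itself a connected district of size $2k+5\ge 6$, so nothing forces a "proper extension"; and the tie-breaking rule $\eta$ belongs to the algorithmic sections, not to this \gmpvr\ hardness reduction, where a tie simply means $c$ is \emph{not} the unique winner (and even under a tie-break, $\hat c$ could be declared the winner). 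To your credit, the paper's own proof computes exactly the same $7k+5$ versus $(t-r+2)(3k+5)=6k+10$ in the corresponding sub-case and asserts strict dominance anyway, so you have located a genuine boundary error in the claim as stated (it needs $k\ge 6$ or a slight re-weighting) rather than introduced a new one; but a correct proof must either strengthen the hypothesis or adjust the construction, not wave the tie away.
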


\begin{proof} Let $P$ contains $v_r$ and $v_t$, where $r,t \in \{1,\ldots,\tilde{n}\}$, but does not contain $v_{t+1}$, if $t<\tilde{n}$, and $v_{r-1}$, if $r>1$. Note that since $H[P]$ is connected, it contains all the vertices in the subpath from $v_r$ to $v_t$ in $H$. We consider several cases depending on the values of $r$ and $t$. 
\begin{description}
\item [Case $r=1$.] Since $H[P]$ is connected, $P$ contains all the vertices in the subpath from $v_1$ to $v_t$ in $H$. Note that $P$ may also contain special and dummy vertices as well as vertices from the set $\{\bar{v}_t, x_t^1,\ldots,\bar{x}_t^{k+1}\}$, if $t<\tilde{n}$. 
The total weight of the candidate $c^\star$ in the district $P$ is at most $k+2$ due to the presence of the path $s_1,\ldots,s_{k+1}$.  We further consider cases depending on whether $P$ contains $x_1^1$. 
\begin{description}
\item [$P$ does not contain $x_1^1$.] In this case the total weight of the candidate \Red{$\hat{c}$} is at most $3(k+1)$; it is $3k+4$ for the candidate $c$; and it is $0$ for every other candidate. Thus, $c$ is unique winner in $P$. 
\item [$P$ contains $x_1^1$.] In this case the total weight of \Red{$\hat{c}$} is at most $3(k+1)+(t-1)(3k+5)$. Let $x_{t'}^j$ be a vertex in $P$, where $t'\in\{t-1,t\}$, such that $x_{t'+1}^{j'}$ is not in $P$ for any $j'\in \{1,\ldots,k+1\}$, if $t'=t-1$. If $t'=t-1$, then for $\psi(v_iv_{i+1})$, where $i\in \{1,\ldots,t'\}$, it is at most $5(t-1)(k+1)$; and for the candidate $c$ it is at least $t(3k+4)+(t-1)(4k+1)$. Thus, $c$ is the unique winner in the district $P$. If $t'=t$, then let $x_{t'}^{j'}$ be a vertex in $P$, where $j' \in \{1,\ldots,k+1\}$, such that $x_{t'}^{j'+1}$ is not in $P$, if $j'< k+1$. In this case the total weight of $\psi(v_iv_{i+1})$ is at most $5(t-1)(k+1)+5j'$; and for the candidate $c$ it is at least $t(3k+4)+(t-1)(4k+1)+4(j'-1)$. Thus, $c$ is the unique winner in the district $P$. 
 \end{description}
\item [Case $r>1$.] Clearly, in this case $P$ does not contain special and dummy vertices. 
We further consider the cases depending on whether $r<t$ or $r=t$.
\begin{description}
\item [Case $r<t$.] Note that $P$ can contain vertices from the set $\{\bar{v}_t, x_t^1,\ldots,x_t^{k+1}\}$ if $t<\tilde{n}$, and from the set $\{\bar{v}_{r-1}, x_{r-1}^1,\ldots,\bar{x}_{r-1}^{k+1}\}$. We further consider the following cases. 
\begin{description}
\item [$P$ neither contains $x_t^1$ nor $x_{r-1}^{k+1}$.] In this case the total weight of $\hat{c}$ is at most $(t-r+1)(3k+5)$; for $\psi(v_iv_{i+1})$, where $i\in \{r,\ldots,t-1\}$, it is at most $5(t-r)(k+1)$; and for $c$ it is $(t-r+1)(3k+4)+(t-r)(4k+1)$. Thus, $c$ is the unique winner in $P$. 
\item [$P$ contains $x_t^1$, but not $x_{r-1}^{k+1}$.] Let $x_{t}^{j'}$ be a vertex in $P$, where $j' \in \{1,\ldots,k+1\}$, such that $x_{t}^{j'+1}$ is not in $P$, if $j'< k+1$. In this case the total weight of $\hat{c}$ is at most $(t-r+1)(3k+5)$; for $\psi(v_iv_{i+1})$, where $i\in \{r,\ldots,t\}$, it is at most $5(t-r)(k+1)+5j'$; and for $c$ it is at least $(t-r+1)(3k+4)+(t-r)(4k+1)+4(j'-1)$. Thus, $c$ is the unique winner in $P$. 
\item [$P$ contains $x_{r-1}^{k+1}$, but not $x_{t}^1$.] If $P$ contains $\bar{v}_{r-1}$, then clearly, due to the connectivity, $P$ contains all the vertices in $\{\bar{v}_{r-1}, x_{r-1}^1,\ldots,\bar{x}_{r-1}^{k+1}\}$. In this case the total weight of $\hat{c}$ is at most $(t-r+2)(3k+5)$; for $\psi(v_iv_{i+1})$, where $i\in \{r,\ldots,t-1\}$, it is at most $5(t-r+1)(k+1)$; and for $c$ it is $(t-r+1)(3k+4)+(t-r+1)(4k+1)$. Thus, $c$ is the unique winner in $P$. Suppose that $P$ does not contain $\bar{v}_{r-1}$. Let $x_{r-1}^{j'}$ be a vertex in $P$, where $j'\in\{1,\ldots,k+1\}$, such that $x_{r-1}^{j'-1}$ is not in $P$, if $j'>1$. In this case, the total weight of $\hat{c}$ is at most $(t-r+1)(3k+5)$; for $\psi(v_iv_{i+1})$, where $i\in \{r-1,\ldots,t-1\}$ it is at most $5(k+1)(t-r)+5(k+2-j')$; and for $c$ it is at least $(t-r+1)(3k+4)+(t-r)(4k+1)+4(k+1-j')+1$. Thus, $c$ is the unique winner in $P$. 
\item [$P$ contains both $x_{r-1}^{k+1}$ and $x_{t}^{1}$.] We first consider the case when $P$ contains $\bar{v}_{r-1}$. Let $x_{t}^{j'}$ be a vertex in $P$, where $j'\in\{1,\ldots,k+1\}$, such that $x_{t+1}^{j'+1}$ is not in $P$, if $t< k+1$. In this case, the total weight of $\hat{c}$ is at most $(t-r+2)(3k+5)$; for $\psi(v_iv_{i+1})$, where $i\in \{r-1,\ldots,t\}$, it is at most $5(k+1)(t-r+1)+5(k+2-j')$; and for $c$ it is at least $(t-r+1)(3k+4)+(t-r+1)(4k+1)+4(k+2-j')$.  Thus, $c$ is the unique winner in $P$. 
\end{description}
\item [Case $r=t$.] If $r=\tilde{n}$, then $P$ is a subpath of  $(\bar{v}_{\tilde{n}-1},x_{\tilde{n}-1}^1,\ldots,v_{\tilde{n}})$, and using the similar argument as above $c$ wins in such a district uniquely. If $r\neq \tilde{n}$, then $P$ is subpath of $(\bar{v}_{r-1}, x_{r-1}^1,\ldots,\bar{x}_{r}^{k+1})$, and using the same argument as above, $c$ wins in the district $P$ uniquely.
\end{description}
\end{description}
 
\end{proof}

Next, we move towards proving  Lemma~\ref{lem:equivalence}. 
%
($\Rightarrow$) For the forward direction, let $M$ be a solution to $\Co{I}$. We create a $k'$-partition of $V(H)$, denoted by $\Co{P}$, as follows. Let $P_S=\{\{s_1\}, \ldots, \{s_{k+2}\}\}$, $P_D=\{\{d_1\},\ldots,\{d_{k+1}\}\}$, and $P_X=\{\{x_i^1, \bar{x}_i^1\}, \ldots, \{x_i^{k+1},\bar{x}_i^{k+1}\}\colon v_i v_{i+1}\in M\}$. We add $P_S,P_D$ and $P_X$ to $\Co{P}$.  Let $\tilde{G}$ be the graph obtained from $H$ after deleting all the  special vertices, dummy vertices, and $x_i^j,\bar{x}_i^j$, for all  $v_iv_{i+1}\in M$ and $j\in \{1,\ldots,k+1\}$.  Since $|M|=k$, we have $k+1$ connected components in $\tilde{G}$. Let these connected components be denoted by $\tilde{G}_1, \ldots, \tilde{G}_{k+1}$. For each $h\in \{1,\ldots,k+1\}$, we add the set $V(\tilde{G}_h)$ to $\Co{P}$. Note that $\Co{P}$ is a partition of $V(H)$ and every set in $\Co{P}$ is connected. We observe that
 \begin{itemize}[itemsep=0.25em]
\item the candidate $c^\star$ wins in every district in $P_S$. Hence, there are at least $k+2$ districts won by $c^\star$ in $\Co{P}$.
\item the candidate \Red{$\hat{c}$} wins in every district in $P_D$.  Therefore, there are at least $k+1$ districts won by \Red{$\hat{c}$} in $\Co{P}$.
\item for an edge $v_iv_{i+1}\in M$, the candidate $\psi(v_iv_{i+1})$ wins in every district $\{x_i^j,\bar{x}_i^j\}$ in $P_X$, where $j\in \{1,\ldots,k+1\}$, and hence  there are at least $k+1$ districts won by $\psi(v_iv_{i+1})$ in $\Co{P}$.
\end{itemize}

We next claim that for each $h\in \{1,\ldots,k+1\}$, the candidate \Red{$c$} wins in the district $V(\tilde{G}_h)$. We first observe that $|V(\tilde{G}_h)|$ is either $1$ or at least $2k+2$. This is due to the fact that $M$ is a matching, so for any $i\in \{1,\ldots,\tilde{n}-1\}$ and $j,j'\in \{1,\ldots,k+1\}$, we do not delete both $x_i^j$ and $x_{i+1}^{j'}$ to construct the graph $\tilde{G}$. We first consider the case when $|V(\tilde{G}_h)|=1$. Due to the construction of the districts, if $|V(\tilde{G}_h)|=1$, then either $V(\tilde{G}_h)$ is $\{v_1\}$  or $\{v_{\tilde{n}}\}$. Since $v_1$ and $v_n$ both approves $c$, the candidate \Red{$c$} wins in the districts $\{v_1\}$ and $\{v_{\tilde{n}}\}$ uniquely. We next consider the case when $|V(\tilde{G}_h)|\geq 2k+2$. By the construction of $\tilde{G}_h$, 
it contains at least two vertices from the set $\{v_1,\ldots,v_{\tilde{n}}\}$. 
Therefore, due to Claim~\ref{claim:y-wins-in-large-district}, \Red{$c$} wins in the district $V(\tilde{G}_h)$ uniquely, when $|V(\tilde{G}_h)|\geq 2k+2$. Thus, for each $h\in \{1,\ldots,k+1\}$, $c$ wins in the district $V(\tilde{G}_h)$ uniquely. 
Since $c$ wins in $V(\tilde{G}_h)$ uniquely, for each $h\in \{1,\ldots,k+1\}$, due to the above observations $c^\star$ wins in exactly $k+2$ districts, and  $\hat{c}$ and $\psi(v_iv_{i+1})$ win in exactly $k+1$ districts.
Since $c^\star$ wins in $k+2$ districts and every other candidate wins in at most $k+1$ districts in $\Co{P}$, $\Co{P}$ is a solution to $\Co{J}$.  \par

($\Leftarrow$) For the reverse direction, let $\Co{P}=\{P_1,\ldots,P_{k'}\}$ be a solution to $\Co{J}$. We create a set of edges $M\subseteq E(G)$ as follows. If there are $k+1$ districts which are subpaths of $(x_i^1,\ldots, \bar{x}_{i}^{k+1})$, where $i\in \{1,\ldots,\tilde{n}-1\}$, such that $\psi(v_iv_{i+1})$ wins in these districts, then we add $v_iv_{i+1}$ to $M$. We next prove that $M$ is a solution to $\Co{I}$. We begin with proving some properties of the partition $\Co{P}$. Let $\Co{\tilde{P}} \subseteq \Co{P}$ be the set of districts that contain $v_i$ or $\bar{v}_j$, where $i\in \{1,\ldots,\tilde{n}\}$ and $j\in \{2,\ldots,\tilde{n}-1\}$. The next set of claims complete the proof.

\begin{claim}\label{c winning sets}
Every district in $\Co{\tilde{P}}$ is won by either \Red{$c$} or \Red{$\hat{c}$}.
\end{claim}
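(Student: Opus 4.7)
My plan is a direct case analysis on which ``anchor'' vertices the district $P$ contains, exploiting two features of the construction: $H$ is a path, so $H[P]$ is a contiguous subpath of $H$, and the weights on $v_i$ (equal to $3k+4$, approving $c$) and on $\bar{v}_j$ (equal to $3k+5$, approving $\hat{c}$) are much larger than all other vertex weights. Once I know which anchors lie in $P$ and which are excluded, connectedness pins $P$ to one of a few identifiable subpath types on which I can read off the winner from the weight table.

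I would split into three cases. In Case~1, $P$ contains some $v_i$ but no $\bar{v}_j$: excluding the right neighbor $\bar{v}_i$ forces $P$ to extend only to the left of $v_i$ (or to the right of $v_1$, when $i=1$), so $P$ is a subpath of one of $(v_1,x_1^1,\bar{x}_1^1,\ldots,\bar{x}_1^{k+1})$ or $(x_{i-1}^1,\bar{x}_{i-1}^1,\ldots,v_i)$; the $3k+4$ weight that each such $v_i$ contributes to $c$ dominates the at most $5(k+1)$ that the color candidate can harvest from the $x$-vertices of a single block, giving $c$ the win. In Case~2, $P$ contains some $\bar{v}_j$ but no $v_i$: excluding $v_j$ forces $P$ rightward from $\bar{v}_j$, so $P$ is a subpath of $(\bar{v}_j,x_j^1,\bar{x}_j^1,\ldots,\bar{x}_j^{k+1})$, and the anchor weight $3k+5$ on $\bar{v}_j$ is meant to give $\hat{c}$ the win. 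In Case~3, $P$ contains both a $v_i$ and some $\bar{v}_j$; adjacency in $H$ forces $j=i$, so $P$ includes the edge $v_i\bar{v}_i$. For $|P|\ge 6$ I invoke Claim~\ref{claim:y-wins-in-large-district} to conclude $c$ is the unique winner, and for the finitely many subpaths with $|P|\in\{2,3,4,5\}$ (which are of the form $\{v_i,\bar{v}_i\}$, $\{v_i,\bar{v}_i,x_i^1\}$, $\{v_i,\bar{v}_i,\bar{x}_{i-1}^{k+1}\}$, and their small extensions) I verify directly that $\hat{c}$ wins, using that $3k+5$ dominates the handful of other vertex weights present.

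The main obstacle will be Case~2: the color $\psi(v_jv_{j+1})$ can accumulate $5$ per $x_j^\ell$ present and potentially rival $\hat{c}$'s anchor weight $3k+5$ as the subpath lengthens. I plan to handle this by pairing each $x_j^\ell\in P$ with the adjacent $\bar{x}_j^{\ell-1}$ (or with $\bar{v}_j$ when $\ell=1$), which is forced to be in $P$ by connectedness; each ``$+5$'' for the color is then absorbed by a ``$+4$'' for $c$ (or by the ``$+3k+5$'' for $\hat{c}$), so the running maximum stays with $c$ or $\hat{c}$ across all possible subpath lengths. If a borderline long subpath still defeats the direct weight bound, I would fall back on the hypothesis that $\Co{P}$ is a solution to $\Co{J}$: if the color $\psi(v_jv_{j+1})$ won any district in $\Co{\tilde{P}}$, then combined with the up to $k+1$ colored pair-districts it can simultaneously win on the same $x,\bar{x}$-block, this color would tie or exceed $c^\star$, contradicting $c^\star$ being the strict plurality winner in $\Co{P}$. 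This second route (via the solution hypothesis) both closes the gap and is the right setup to feed into the subsequent Claims~\ref{clm:k-blocks-contains-sol-edges}--\ref{clm:rainbow-matching}.
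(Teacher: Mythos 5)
Your decomposition is the same one the paper uses: contiguity of $H[P]$, a split on which anchor vertices $v_i,\bar v_j$ the district contains, a weight comparison on the resulting subpath types, and \Cref{claim:y-wins-in-large-district} for the mixed case with $|P|\ge 6$. Cases 1 and 3 can be made to work, though in Case 1 you should also let $P$ absorb special and dummy vertices when it reaches $v_1$ or $v_2$ (these give $c^\star$ at most $k+2$ and $\hat c$ at most $3(k+1)<3k+4$, so they are harmless), and your literal assertion that $3k+4$ dominates $5(k+1)$ is arithmetically false; what actually saves Case 1 is the $x/\bar x$ pairing you only describe for Case 2, which for a district containing $v_i$ and $\ell$ of the $x$-vertices of an adjacent block yields weight at least $3k+4+4(\ell-1)\ge 5\ell$ for $c$, since $\ell\le k+1\le 3k$.

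The genuine gap is Case 2, and neither of your two repairs closes it. For $P=(\bar v_j,x_j^1,\bar x_j^1,\ldots,x_j^\ell,\bar x_j^\ell)$ the colour $\psi(v_jv_{j+1})$ receives $5\ell$, while $\hat c$ receives only $3k+5$ and $c$ at most $4\ell$ (and only $4k+1$ when $\ell=k+1$). Your pairing distributes the absorption between two \emph{different} candidates --- the $+4$'s accrue to $c$ and the $3k+5$ to $\hat c$ --- so neither of them individually keeps pace: already at $\ell=k+1$ the colour has $5k+5>\max\{4k+1,\,3k+5\}$, and it strictly beats both as soon as $5\ell>3k+5$, which is realizable since $k\ge 5$. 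The fallback via the solution hypothesis also fails: if such a $P$ consumes $\ell\ge 1$ of the $x_j$-vertices, only $k+1-\ell$ of the pairs $\{x_j^r,\bar x_j^r\}$ remain on that block, so the colour can win at most $1+(k+1-\ell)=k+2-\ell\le k+1$ districts from block $j$, which does not contradict $c^\star$ winning $k+2$ unless the same colour independently wins $k+1$ districts on some other block --- something you cannot assume at this point of the argument. For what it is worth, the paper's own treatment of this sub-case is a one-line assertion that $\hat c$ wins, which is exposed to exactly the same arithmetic objection; you have correctly located the weakest point of the claim, but your proposal does not resolve it.
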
 

\begin{proof}
We first argue for the districts that contains $v_i$, where $i\in \{1,\ldots,\tilde{n}\}$, but not $\bar{v}_j$ for any $j\in \{2,\ldots,\tilde{n}-1\}$. Suppose that $P$ is such a district in $\Co{\tilde{P}}$. If $P$ contains $v_1$ or $v_2$, then clearly, $P$ is a subpath of $(s_1,\ldots,v_2)$. Note that the weight of $c^\star$ in $P$ is at most $k+2$ and for $\hat{c}$, it is at most $3(k+1)$. If $P$ does not contain $x_1^j$ for any $j\in \{1,\ldots,k+1\}$, then the total weight of $c$ is $3k+4$. For any other candidate, it is $0$, thus, $c$ wins in the district $P$. Suppose that $P$ contains $x_1^j$, where $j\in \{1,\ldots,k+1\}$, but nor $x_1^{j+1}$, if $j<k+1$. In this case the total weight of $c$ is at least $3k+4+4(j-1)$, and for $\psi(v_1v_2)$, it is $5j$. For any other candidate, it is $0$, thus, $c$ wins in the district $P$.  If $P$ contains $v_i$, where $i>2$, then, clearly $P$ is a subpath of $(x_{i-1}^1,\ldots,v_i)$, and $c$ wins in such a district. Next, we argue for the districts that contains $\bar{v}_j$, where $j\in \{2,\ldots,\tilde{n}-1\}$, but not $v_i$, for any $i\in \{1,\ldots,\tilde{n}\}$. Suppose that $P$ is such a district in $\Co{\tilde{P}}$. Note that $P$ is a subpath of $(\bar{v}_j, x_{j}^1,\ldots,\bar{x}_{j}^{k+1})$, and $\hat{c}$ wins in such a district. Next, we consider the districts in $\Co{\tilde{P}}$ that contains both $v_i$, where $i\in \{1,\ldots,\tilde{n}\}$, and $\bar{v}_j$, where $j\in \{2,\ldots,\tilde{n}-1\}$. Suppose that $P$ is such a district in $\Co{\tilde{P}}$. We consider the following cases depending on the size of $P$. 
\begin{sloppypar}
\begin{itemize} 
\item if $|P|=2$, then due to the construction of $H$, $P$ is $\{v_i,\bar{v}_i\}$, and $\hat{c}$ wins in such a district. \item if $|P|=3$, then due to the construction of $H$, $P$ is either $\{v_i,\bar{v}_i\}$ or $\{v_i,\bar{v}_i,x_{i}^1\}$ or $\{v_i,\bar{v}_i,\bar{x}_{i-1}^{k+1}\}$, and $c$ or $\hat{c}$ or both wins in such a district. 
\item if $|P|=4$, then $P$ is either $\{x_{i-1}^{k+1},\ldots,\bar{v}_i\}$ or $\{\bar{x}_{i-1}^{k+1},\ldots,x_i^1\}$ or $\{v_i,\ldots,\bar{x}_i^1\}$,  and $c$ or $\hat{c}$ or both wins in such a district.
 \item if $|P|=5$, then $P$ is either $\{\bar{x}_{i-1}^k, \ldots,\bar{v}_i\}$ or $\{x_{i-1}^{k+1},\ldots,x_i^1\}$ or $\{\bar{x}_{i-1}^{k+1},\ldots,\bar{x}_i^1\}$ or $\{v_i,\ldots, x_i^2\}$,  and $c$ or $\hat{c}$ or both wins in such a district.
 \item If $|P|\geq 6$, then due to Claim~\ref{claim:y-wins-in-large-district}, $c$ wins in the district $P$.
 \end{itemize}  
 \end{sloppypar}\end{proof}

\begin{claim}
\label{claim:k-blocks-contains-sol-edges}
The size of the set $\Co{\tilde{P}}$ is at most $k+1$.
\end{claim}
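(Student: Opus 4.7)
The strategy is a proof by contradiction: assume $|\Co{\tilde{P}}| \geq k+2$ and derive that $c^\star$ cannot win strictly more districts than every other candidate. By Claim~\ref{c winning sets}, every district in $\Co{\tilde{P}}$ is won by $c$ or $\hat{c}$, so I would write $|\Co{\tilde{P}}| = n_c + n_{\hat{c}}$ according to the winner. Then I would introduce the auxiliary family $\Co{P}_{\hat{c}} := \{P \in \Co{P}\setminus \Co{\tilde{P}} \colon P \cap \{d_1,\ldots,d_{k+1}\} \neq \emptyset\}$ of districts outside $\Co{\tilde{P}}$ that contain at least one dummy vertex. Since all dummies lie in the initial segment $s_1 d_1 \ldots s_{k+2}$ and no such district contains any $v_i$ or $\bar v_j$, each district in $\Co{P}_{\hat{c}}$ is a subpath of that segment; on any such subpath with $b\geq 1$ dummies and at most $b+1$ specials, the $\hat{c}$-weight $3b$ strictly exceeds the $c^\star$-weight and no other candidate receives any weight, so $\hat{c}$ is the unique winner irrespective of the tie-breaking rule $\eta$.

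Next I would show that every district won by $c^\star$ is a singleton $\{s_i\}$. A district containing any vertex from the $v$-side immediately gives some rival a weight of at least $3k+4$, which exceeds the maximum possible $c^\star$-weight $k+2$; a district containing a dummy is won by $\hat{c}$ by the previous step; and since two consecutive specials in the initial segment are separated by a dummy, connectedness forces any all-special district to be a singleton. Let $W$ denote the number of such singletons; the number of districts won by $c^\star$ is exactly $W$, and trivially $W \leq k+2$.

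The crucial counting inequality I would then establish is $W \leq |\Co{P}_{\hat{c}}| + 1$. The argument accounts for the $k+2$ specials of the initial segment by tracking where each one lives in the partition. Any dummy-containing subpath with $d$ dummies is forced by connectedness to contain at least $d-1$ sandwich specials (those sitting between consecutive dummies). Letting $D_0$ and $S_0$ be the numbers of dummies and specials swallowed by the at-most-one district of $\Co{\tilde{P}}$ overlapping the initial segment, the $|\Co{P}_{\hat{c}}|$ dummy-containing intervals contain at least $(k+1-D_0) - |\Co{P}_{\hat{c}}|$ sandwich specials; moreover, that overlapping district is a suffix ending at $s_{k+2}$, so $S_0 \geq D_0$. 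Combining,
\[
 W \;=\; (k+2) - S_0 - (\text{specials in } \Co{P}_{\hat{c}}) \;\leq\; (k+2) - S_0 - \bigl[(k+1 - D_0) - |\Co{P}_{\hat{c}}|\bigr] \;\leq\; |\Co{P}_{\hat{c}}| + 1.
\]

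Finally I would combine the pieces: since $c^\star$ is the distinguished winner and must strictly beat $\hat{c}$, and $\hat{c}$ already wins at least $n_{\hat{c}} + |\Co{P}_{\hat{c}}|$ districts, we need $W > n_{\hat{c}} + |\Co{P}_{\hat{c}}|$; together with $W \leq |\Co{P}_{\hat{c}}| + 1$ this forces $n_{\hat{c}} = 0$, and hence $n_c = |\Co{\tilde{P}}| \geq k+2$. But $c^\star$ must also strictly beat $c$, which requires $W \geq k+3$, contradicting $W \leq k+2$. I expect the main technical obstacle to be the sandwich-specials bookkeeping in the third paragraph—one must carefully track whether the dummy-containing intervals and the boundary district start and end with a dummy or a special—but once the bound $W \leq |\Co{P}_{\hat{c}}| + 1$ is in hand, the final pigeonhole contradiction is immediate.
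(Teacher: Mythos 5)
Your proof is correct and follows essentially the same route as the paper's: assume $|\Co{\tilde{P}}|\geq k+2$, invoke Claim~\ref{c winning sets}, introduce the same auxiliary family $\Co{P}_{\hat{c}}$ of dummy-containing districts won by $\hat{c}$, bound the number of $c^\star$-won districts by $|\Co{P}_{\hat{c}}|+1$, and conclude that $c$ or $\hat{c}$ ties or beats $c^\star$. The only difference is that you spell out the ``sandwich specials'' bookkeeping behind the bound $W\leq|\Co{P}_{\hat{c}}|+1$, which the paper simply asserts ``by the construction of the graph $H$''.
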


\begin{proof}
Suppose that $|\Co{\tilde{P}}|\geq k+2$. Due to Claim~\ref{c winning sets}, every district in $\Co{\tilde{P}}$ is won by either \Red{$c$} or \Red{$\hat{c}$}. Let $n_c$ and $n_{\hat{c}}$ be the number of districts won by \Red{$c$} and $\hat{c}$, respectively, in $\Co{\tilde{P}}$. Clearly, $n_c+n_{\hat{c}} \geq k+2$ as $|\Co{\tilde{P}}|\geq k+2$. Note that $c^\star$ can win in at most $k+2$ districts as only these many vertices approve $c^\star$. Since $c^\star$ is the distinguished candidate, \Red{$\hat{c}$} can win at most $k+1$ districts. Note that if a district contains only dummy vertices and special vertices, then it is won by \Red{$\hat{c}$}, by the construction.  Let $\Co{P}_{\hat{c}}$ denote the set of all districts in $\Co{P}\setminus \Co{\tilde{P}}$ that contain at least one dummy vertex. Every district in $P_{\hat{c}}$ is won by $\hat{c}$ because either they contain only a dummy vertex or dummy and special vertices.

Thus, it follows that  $1\leq |\Co{P}_{\hat{c}}|+n_{\hat{c}}\leq k+1$, since \Red{$\hat{c}$} can only win at most $k+1$ districts. By the construction of the graph $H$, there are at most $|\Co{P}_{\hat{c}}|+1$ districts containing only special vertices. Therefore, there are at most $|\Co{P}_{\hat{c}}|+1$ districts won by $c^\star$ as $c^\star$ can only win a district which contains only special vertices. Thus, there are at most $k+2-n_{\hat{c}}$ districts won by $c^\star$. Since $n_c+n_{\hat{c}} \geq k+2$, we have that the number of districts won by $c$ is at least $k+2-n_{\hat{c}}$, a contradiction to the fact that $\Co{P}$ is a solution to $\Co{J}$.
\end{proof}


\begin{claim}\label{claim:non-emptyM}
The set $M$ is non-empty. 
\end{claim}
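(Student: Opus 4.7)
The plan is to argue by contradiction. Assume $M = \emptyset$; by construction this means that for every $i \in \{1, \ldots, \tilde{n}-1\}$, the candidate $\psi(v_iv_{i+1})$ wins at most $k$ districts that are subpaths of the $i$-th segment $(x_i^1, \ldots, \bar{x}_i^{k+1})$. I will upper-bound $|\Co{P}|$ and show that it is strictly smaller than $k' = k^2 + 4k + 4$, contradicting the fact that $\Co{P}$ is a $k'$-partition.

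The first step is to split $\Co{P}$ into four classes: (i) districts in $\Co{\tilde{P}}$; (ii) segment-internal districts that contain at least one $x$-vertex; (iii) segment-internal singletons of the form $\{\bar{x}_i^j\}$; and (iv) districts consisting solely of special and/or dummy vertices. A short weight computation in the spirit of the proof of Claim~\ref{claim:y-wins-in-large-district} shows that class-(ii) districts are always won by $\psi(v_iv_{i+1})$, while class-(iii) districts are always won by $c$; thus every segment-internal district falls into (ii) or (iii). Writing $n_c, n_{\hat{c}}$ for the number of districts in $\Co{\tilde{P}}$ won by $c$ and $\hat{c}$ respectively, Claims~\ref{c winning sets} and~\ref{claim:k-blocks-contains-sol-edges} give $n_c + n_{\hat{c}} = |\Co{\tilde{P}}| \leq k+1$.

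The pivotal observation will be structural: since $H$ is a path whose segments separate the spine of $v/\bar{v}$-vertices, every spine vertex lies in some $\Co{\tilde{P}}$-district, and $|\Co{\tilde{P}}| \leq k+1$, there can be at most $|\Co{\tilde{P}}|-1 \leq k$ segments that are not entirely absorbed into a single $\Co{\tilde{P}}$-district. Combined with the per-segment bound of $k$ supplied by $M = \emptyset$, class (ii) contributes at most $k \cdot k = k^2$ districts in total. The remaining classes are bounded by using the fact that no candidate other than $c^\star$ can win more than $k+1$ districts: class (iii) is at most $k+1 - n_c$, while the $\hat{c}$- and $c^\star$-won parts of class (iv) are at most $k+1 - n_{\hat{c}}$ and $k+2 - n_{\hat{c}}$ respectively (the $c^\star$ bound being derived exactly as in the proof of Claim~\ref{claim:k-blocks-contains-sol-edges}). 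Adding the contributions yields
\[
|\Co{P}| \;\leq\; |\Co{\tilde{P}}| + k^2 + (k+1-n_c) + (k+1-n_{\hat{c}}) + (k+2-n_{\hat{c}}) \;=\; k^2 + 3k + 4 - n_{\hat{c}} \;<\; k',
\]
the desired contradiction. The main delicacy I anticipate is the absorption book-keeping: a $\Co{\tilde{P}}$-district may reach partially into an adjacent segment from either side, and I must verify that every remaining district strictly inside such a segment is still counted in the $k^2$ bound for class (ii) or the global $k+1 - n_c$ bound for class (iii), and not double-counted against $|\Co{\tilde{P}}|$.
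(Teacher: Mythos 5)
Your proposal is essentially the paper's own proof: the same assumption that $M=\emptyset$ gives at most $k$ color-won districts per segment, the same decomposition of $\Co{P}$ (districts of $\Co{\tilde{P}}$ with $n_c+n_{\hat{c}}=|\Co{\tilde{P}}|\leq k+1$, at most $k^2$ segment-internal districts containing an $x$-vertex via the ``at most $k$ split segments'' observation, at most $k+1-n_c$ singletons $\{\bar{x}_i^j\}$, and at most $(k+1-n_{\hat{c}})+(k+2-n_{\hat{c}})$ dummy/special districts), and the identical final count $k^2+3k+4-n_{\hat{c}}<k'$. The only deviation is your side-claim that every class-(ii) district is won by $\psi(v_iv_{i+1})$, which is false as stated (e.g.\ $\{\bar{x}_i^1,x_i^2,\bar{x}_i^2\}$ gives $c$ weight $8$ against $5$ for the color) but also unnecessary: such districts can only be won by $c$ or by $\psi(v_iv_{i+1})$, and the $c$-won ones can simply be charged to the budget $k+1-n_c$ alongside class (iii), leaving the arithmetic unchanged --- the paper's write-up avoids asserting this classification and glosses over the same bookkeeping point.
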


\begin{proof}
For the sake of contradiction, suppose that $M=\emptyset$. Then, due to the construction of the set $M$, we know that  for each $i\in \{1,\ldots,\tilde{n}\}$, there are at most $k$ districts which are subpaths of $(x_i^1,\ldots, \bar{x}_{i}^{k+1})$ that are won by $\psi(v_iv_{i+1})$. Suppose that $n_c$ and $n_{\hat{c}}$ be the number of districts in $\Co{\tilde{P}}$ that are won by $c$ and $\hat{c}$, respectively. Then, there can be at most $k+1-n_c$ districts of type $\{\bar{x}_i^j\}$, where $i\in \{1,\ldots,\tilde{n}-1\}$, $ j\in \{1,\ldots,k+1\}$, as these districts are also won by $c$ and $c$ wins at most $k+1$ districts, since the distinguished candidate $c^\star$ can win at most $k+2$ districts. Since $|\Co{\tilde{P}}|\leq k+1$  (Claim~\ref{claim:k-blocks-contains-sol-edges}) and for each $i\in \{1,\ldots,\tilde{n}-1\}$, there are at most $k$ districts which are subpaths of $(x_i^1,\ldots, \bar{x}_{i}^{k+1})$ that are won by $\psi(v_iv_{i+1})$, it follows that there are at most $k^2$ districts that contains $x_i^j$ but not $v_i$ or $v_{i+1}$. Let $\Co{P}_{\hat{c}}$ denote the set of all districts in $\Co{P}\setminus \Co{\tilde{P}}$ that contain at least one dummy vertex. Using the same argument in Claim~\ref{claim:k-blocks-contains-sol-edges}, every district in $P_{\hat{c}}$ is won by $\hat{c}$. Thus, $|\Co{P}_{\hat{c}}|\leq k+1-n_{\hat{c}}$ and there are at most $k+2-n_{\hat{c}}$ districts won by $c^\star$. Therefore, the total number of districts in $\Co{P}$ is at most
\begin{equation*}
    \begin{split}
        & n_c+n_{\hat{c}}+(k+1-n_c)+k^2+(k+1-n_{\hat{c}})+(k+2-n_{\hat{c}}) \\
        = & k^2+3k+4-n_{\hat{c}} \\
        < & k'
    \end{split}
\end{equation*}
 a contradiction to the fact that $\Co{P}$ is a solution to $\Co{J}$. 
 \end{proof}

\begin{claim}
\label{cor}
Candidate $c^\star$ wins in $k+2$ districts. Moreover, in $\Co{P}$  there are $k+2$ districts containing only special vertices and $k+1$ districts containing only dummy vertices.
\end{claim}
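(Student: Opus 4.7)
The plan is to combine the non-emptiness of $M$ (Claim~\ref{claim:non-emptyM}) with a weight-counting argument on the left end of the path $H$ that contains the special and dummy vertices. Since $M\neq\emptyset$, there exists at least one $i\in\{1,\ldots,\tilde{n}-1\}$ for which there are $k+1$ districts of $\Co{P}$, each a subpath of $(x_i^1,\ldots,\bar{x}_i^{k+1})$, all won by the candidate $\psi(v_iv_{i+1})$. Because $\Co{P}$ is a solution to $\Co{J}$ and $c^\star$ is the distinguished candidate, $c^\star$ must win strictly more than $k+1$ districts, i.e., at least $k+2$.

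Next I would derive the matching upper bound. The only vertices that approve $c^\star$ are the $k+2$ special vertices $s_1,\ldots,s_{k+2}$. In any district in which $c^\star$ is declared winner, the total weight of $c^\star$ is strictly positive, so the district must contain at least one special vertex. As there are only $k+2$ special vertices, there can be at most $k+2$ districts won by $c^\star$, and combined with the lower bound this gives exactly $k+2$.

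Then I would argue that each such district is a singleton $\{s_j\}$. Because the leftmost portion of $H$ is the subpath $s_1d_1s_2d_2\cdots d_{k+1}s_{k+2}v_1\cdots$, any connected set containing $\ell \geq 2$ special vertices necessarily contains at least $\ell-1$ dummy vertices. In that case the weight of $c^\star$ is at most $\ell$ while the weight of $\hat{c}$ is at least $3(\ell-1)\geq \ell$, strictly exceeding $\ell$ for $\ell\geq 2$, contradicting $c^\star$ winning. Likewise, if a district containing $s_{k+2}$ contained $v_1$, then $c$ would accumulate weight at least $3k+4$ while $c^\star$ receives at most $k+2$, again preventing $c^\star$ from winning. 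Hence every district won by $c^\star$ consists of a single special vertex, giving exactly $k+2$ districts containing only special vertices. The expected main obstacle, ruling out mixed districts in this interval, is handled entirely by this direct weight comparison and poses no real difficulty.

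Finally, the assertion about dummy vertices follows immediately: each dummy vertex $d_j$ is adjacent in $H$ only to $s_j$ and $s_{j+1}$, both of which now belong to distinct singleton districts $\{s_j\}$ and $\{s_{j+1}\}$. By connectivity of districts, $d_j$ cannot be joined to either, so $\{d_j\}$ is itself a district of $\Co{P}$ for every $j\in\{1,\ldots,k+1\}$, yielding exactly $k+1$ districts that contain only dummy vertices, each won by $\hat{c}$. This completes the proof.
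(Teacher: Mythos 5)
Your proof is correct and follows essentially the same route as the paper's (lower bound from $M\neq\emptyset$ plus the solution property, upper bound from the $k+2$ approving special vertices, then forcing singletons), though the paper states the structural consequences much more tersely while you spell them out. One small omission: your weight comparison only treats districts with $\ell\geq 2$ special vertices or containing $v_1$, so you should also note that a district with exactly one special vertex plus at least one dummy vertex gives $c^\star$ weight $1$ against $\hat{c}$'s weight $\geq 3$, which rules out that remaining case by the same argument.
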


\begin{proof}Since $M\neq \emptyset$, by the construction of $M$, there exists at least one $i\in \{1,\ldots,\tilde{n}-1\}$ such that there are $k+1$ districts which are subpaths of $(x_i^1,\ldots, x_{i}^{k+1})$ that are won by $\psi(v_iv_{i+1})$. Since $\Co{P}$ is a solution to the instance $\Co{J}$, $c^\star$ must win in at least $k+2$ districts. Since there are only $k+2$ vertices who approve $c^\star$, it can win in at most $k+2$ districts. Consequently, there are $k+2$ districts in $\Co{P}$ containing only special vertices (districts won by $c^{\star}$) and additional $k+1$ districts in $\Co{P}$ containing only dummy vertices. 
\end{proof}

Due to Claim~\ref{cor}, we have the following:
\begin{corollary}\label{cor:c wins in large districts}
Every district in $\Co{\tilde{P}}$ is won by the candidate $c$.
\end{corollary}

\begin{claim}\label{sizeM}\label{claim:rainbow-matching}
Set $M$ is a rainbow matching of size $k$.  
\end{claim}

\begin{proof} First, we show that $|M| = k$.
Since $|\Co{\tilde{P}}|\leq k+1$ (\Cref{claim:k-blocks-contains-sol-edges}), due to the construction of $M$, we know that $|M|\leq k$. Suppose that $|M|<k$. Then, for at most $k-1$ $i$s, where $i\in \{1,\ldots,\tilde{n}-1\}$, $\Co{P}$ contains $k+1$ districts that are subpaths of $(x_i^1,\ldots,\bar{x}_i^{k+1})$ and won by $\psi(v_iv_{i+1})$. Moreover, since $|\Co{\tilde{P}}|\leq k+1$, there are at most $(k+1)(k-1)+k$ districts in $\Co{P}$ that are subpaths of some $(x_i^1,\ldots,\bar{x}_i^{k+1})$ and won by $\psi(v_iv_{i+1})$, where $i\in \{1,\ldots,\tilde{n}-1\}$. Let $|\Co{\tilde{P}}|= \tilde{k}$. Due to Corollary~\ref{cor:c wins in large districts}, we know that there are at most $k+1-\tilde{k}$ districts of type $\bar{x}_i^j$, where $i\in \{1,\ldots,\tilde{n}-1\}$, $ j\in \{1,\ldots,k+1\}$, as $c$ wins in these districts as well and the distinguished candidate wins in $k+2$ districts (Claim~\ref{cor}). Thus, the total number of districts in $\Co{P}$ is at most $(2k+3)+(k^2-1+k)+(k+1) = k^2+4k+3 < k'$, a contradiction.

Now we prove that $M$ is a rainbow matching. We first prove that $M$ is a matching. Suppose not, then for some $i\in \{1,\ldots,\tilde{n}-2\}$, there are $k+1$ districts that are subpaths of $(x_i^1,\ldots,\bar{x}_i^{k+1})$ and $(x_{i+1}^1,\ldots,\bar{x}_{i+1}^{k+1})$, and won by $\psi(v_iv_{i+1})$ and $\psi(v_{i+1}v_{i+2})$, respectively.  Thus, either there is a district $\{\bar{v}_{i+1}\}$ or $\{v_{i+1},\bar{v}_{i+1}\}$ or $\{\bar{x}_i^{k+1},v_{i+1},\bar{v}_{i+1}\}$ in $\Co{P}$. In all these cases, \Red{$\hat{c}$} wins. Due to Claim~\ref{cor}, there are $k+1$ districts in $\Co{P}$ containing only dummy vertices. Therefore, there are $k+2$ districts won by \Red{$\hat{c}$}, a contradiction, because only the distinguished candidate $c^\star$ wins in $k+2$ districts. 
\par
\begin{sloppypar}
We next prove that if edges $v_iv_{i+1},v_hv_{h+1} \in M$, where $i,h\in \{1,\ldots,\tilde{n}-1\}$, $i \neq h$, then $\psi(v_iv_{i+1})\neq \psi(v_h v_{h+1})$. Towards the contradiction, suppose that $\psi(v_iv_{i+1}) = \psi(v_h v_{h+1})$. Due to the construction of the edge set $M$, there are $k+1$ districts that are subpaths of $(x_i^1,\ldots,\bar{x}_i^{k+1})$ and won by $\psi(v_iv_{i+1})$ and $k+1$ districts that are subpaths of $(x_j^1,\ldots,\bar{x}_j^{k+1})$ and won by $\psi(v_jv_{j+1})$. Thus, there are $2k+2$ districts won by $\psi(v_iv_{i+1})$, a contradiction as the target candidate $c^\star$ wins in $k+2$ districts.  \end{sloppypar}
\end{proof}

Due to Claim~\ref{sizeM}, we can conclude that $M$ is a solution to $(G,\psi,k)$. 
\end{proof}


\section{FPT Algorithms for Path}\label{sec:gm_path}
In this section, we prove \Cref{thm:detfpt} and \Cref{thm:ranfpt}, that is,  we present a deterministic and a randomized \FPT algorithm parameterized by $k$ for \targm when the input  is a path. 
\shortversion{

\bigskip
\noindent {\bf Intuition:} We first give an intuition behind the algorithm. Since, the problem is on paths, it boils down to selecting $k$ appropriate vertices such that the subpaths between them form the desired districts. This in turn implies 
that each district can be identified by the leftmost vertex and the rightmost vertex appearing in the district (based on the way vertices appear on the path). Hence, there can be at most  $\OO(n^2)$ districts in the path graph. Furthermore, since we are on a path, we observe that if we know a district (identified by its leftmost and the rightmost vertices on the path), then we also know the leftmost (and rightmost) vertex of the district adjacent to it. These observations naturally lead us to consider the following graph $H$: we have a vertex for each possible district and put an edge from a district to another district, if these two districts appear consecutively on the path graph. Thus, we are looking for a path of length $k$ in $H$ such that (a) it covers all the vertices of the input path (this automatically implies that each vertex appears in exactly one district); and (b) the 
distinguished candidate wins most number of districts. This equivalence allows us to use the rich algorithmic toolkit developed for designing $2^{\OO(k)} n^{\OO(1)}$ time algorithm for finding  a $k$-length paths in graphs~\cite{monien1985find,bjorklund2017narrow,Williams09}. In what follows we design an algorithm that formalizes the above intuition. We remark that the above approach would not work for trees, as  we cannot upper bound the number of districts on trees by a polynomial function of $n$.
}
 Let $(G,\Co{C},\{w_v\colon \Co{C}\rightarrow \mathbb{Z}^+\}_{v \in V(G)},p,k, k^\star)$ 
be the input instance of \targm, where $G$ is the path $(u_1,\ldots,u_n)$. We begin with a simple observation.


\begin{observation}
Given a path $G$ on $n$ vertices, there are $\OO(n^2)$ distinct connected sets.
\end{observation}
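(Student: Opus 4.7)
The plan is to observe that, since $G$ is the path $(u_1,\ldots,u_n)$, the induced subgraph $G[X]$ on a vertex subset $X\subseteq V(G)$ is connected if and only if $X$ consists of vertices whose indices form a contiguous interval $\{i,i+1,\ldots,j\}$ for some $1\le i\le j\le n$. Indeed, if $X$ contains two indices $i<j$ but omits some intermediate index $\ell$ with $i<\ell<j$, then the unique path from $u_i$ to $u_j$ in $G$ passes through $u_\ell$, so no $u_i$--$u_j$ path exists in $G[X]$, contradicting connectivity. Conversely, any contiguous interval of indices yields a subpath, which is clearly connected.

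Consequently, each nonempty connected set is uniquely described by the pair $(i,j)$ of its smallest and largest indices, and the number of such pairs is exactly $\binom{n}{2}+n=\frac{n(n+1)}{2}=\mathcal{O}(n^2)$. Including the empty set (which we may or may not count as connected depending on convention) changes the count by at most one. This completes the proof.

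The observation is essentially immediate from the structure of paths; there is no real obstacle, and no auxiliary machinery is required. The role of the observation in the remainder of the section is to justify building an auxiliary graph $H$ whose vertex set is the collection of all candidate districts: since there are only polynomially many such districts on a path, $H$ has polynomial size, which is what enables the subsequent $2^{\mathcal{O}(k)}n^{\mathcal{O}(1)}$ algorithms via representative families and multilinear monomial detection.
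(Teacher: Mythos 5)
Your proof is correct and follows exactly the paper's (implicit) reasoning: connected sets of a path are precisely the contiguous intervals, each determined by its leftmost and rightmost vertex, giving $\binom{n}{2}+n=\OO(n^2)$ sets — the same count the paper uses when it builds the auxiliary graph $H$ on $\binom{n}{2}+n+2$ vertices. Nothing further is needed.
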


Based on the above observation we create an auxiliary directed graph $H$ with parallel arcs on ${n \choose 2}+n+2$ vertices,  where we have a vertex for each connected set  of $G$. 
 Note that $G$ is a path on $n$ vertices. For $\{i,j\} \sse [n]$, $i\leq j$, let $P_{i,j}$ denote the subpath of $G$ starting at the $i^{\text{th}}$ vertex and ending at the $j^{\text{th}}$ vertex. That is $P_{i,j}$ is the subpath $(u_i,\ldots,u_j)$ of $G$. 
 Formally, we define the auxiliary graph $H$ as follows.
 
\begin{enumerate}[wide=0pt]
 \item For each $\{i,j\} \sse\{1,\ldots, n\}$ such that $i\leq j$,   create a vertex $v_{i,j}$ corresponding to the subpath  $P_{i,j}$.
 \begin{sloppypar} \item We do the following for each $\{i,j\} \sse \{1,\ldots,n\}$. Let $c$ denote the candidate that wins the district $P_{i,j}$, where $i\leq j$. 
 If $c \neq p$, then we do the following. For each 
  $r \in \{j+1,\ldots,n\}$, we add $k^{\star} -1$ arcs 
 $ \eLabel{v_{i,j},v_{j+1,r},1} , \eLabel{v_{i,j},v_{j+1,r},2 }, \ldots, \eLabel{v_{i,j},v_{j+1,r},k^{\star}-1 }$
from vertex $v_{i,j}$ to $v_{j+1,r}$. We label the $k^{\star}-1$ arcs from $v_{i,j}$ to $v_{j+1,r}$ with $ \eLabel{ c,1} , \eLabel{ c,2 }, \ldots, \eLabel{ c,k^{\star}-1 }$.  That is, for each $k'\in \{1,\ldots,k^{\star}-1\}$, the arc \eLabel{v_{i,j},v_{j+1,r},k'}  is labeled with \eLabel{ c,k'}. 
 If $c = p$, then we do the following. For each 
 $r \in \{j+1,\ldots,n\}$, we add an unlabeled arc 
 from  $v_{i,j}$ to $v_{j+1,r}$.\end{sloppypar}
 \item Finally, we add two new vertices $s$ and $t$. Now we add arcs incident to $s$.  For each $i \in \{1,\ldots,n\}$, we add an unlabeled arc  
 from the vertex $s$ to $v_{1,i}$. Next we add arcs incident to $t$. 
We do the following for each $i \in \{1,\ldots,n\}$. Let $c$ denote the candidate that wins in $P_{i,n}$.
If $c \neq p$, then we add $k^{\star} -1$ arcs $ \eLabel{v_{i,n},t,1} , \eLabel{v_{i,n},t,2 }, \ldots, \eLabel{v_{i,n},t,k^{\star}-1 }$ from $v_{i,n}$ to $t$ and label them with $ \eLabel{ c,1 } ,  \eLabel{ c,2 }, \ldots, \eLabel{ c,k^{\star}-1 }$, respectively. If $c =p$, then we add an unlabeled arc 
from $v_{i,n}$ to $t$. 
 \end{enumerate}
 
 As the in-degree of $s$ is $0$ and the out-degree of $t$ is $0$, there is no cycle in $H$ that contains either $s$ or $t$.  Since the direction of arcs in $H \setminus \{s,t\}$ is from $v_{i,j}$ to $v_{j+1,r}$, where $\{i,j\} \sse \{1,\ldots,n\}$, $r\in \{j+1,\ldots,n\}$, $H \!\sm\! \{s,t\}$ must be acyclic. This yields the following simple observation. 
 
 \begin{observation}\label{obs:H-is-DAG}
 $H$ is a directed acyclic graph.
 \end{observation}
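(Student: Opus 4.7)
The plan is to verify acyclicity by combining two simple structural observations about the construction. First, I would handle the auxiliary terminals $s$ and $t$ separately from the ``path-interval'' vertices. By construction, every arc incident to $s$ is outgoing (either $s \to v_{1,i}$) and every arc incident to $t$ is incoming (either $v_{i,n} \to t$). Hence $s$ has in-degree $0$ and $t$ has out-degree $0$, so no directed cycle can pass through either of them. Consequently, any hypothetical directed cycle in $H$ must lie entirely within the subgraph induced on the vertices $\{v_{i,j} : 1 \le i \le j \le n\}$.

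Second, I would exhibit an explicit topological order on this remaining subgraph by assigning to each vertex $v_{i,j}$ the potential $\phi(v_{i,j}) = j$ (the right endpoint of the corresponding subpath). Inspecting the construction, every arc of $H \setminus \{s,t\}$ goes from some $v_{i,j}$ to some $v_{j+1,r}$ with $r \in \{j+1, \ldots, n\}$, so it increases $\phi$ by at least one: $\phi(v_{j+1,r}) = r \ge j+1 > j = \phi(v_{i,j})$. Since a potential function that strictly increases along every arc precludes directed cycles, $H \setminus \{s,t\}$ is acyclic.

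Combining these two observations yields the claim. I do not anticipate any real obstacle; the main point worth being careful about is that the presence of parallel labeled arcs $\langle v_{i,j}, v_{j+1,r}, k' \rangle$ for $k' \in \{1,\ldots,k^\star-1\}$ does not affect the argument, since all of them share the same orientation and therefore are simultaneously consistent with the same topological order given by $\phi$.
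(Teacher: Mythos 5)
Your proof is correct and follows essentially the same route as the paper: the paper likewise notes that $s$ has in-degree $0$ and $t$ has out-degree $0$, and that every arc of $H\setminus\{s,t\}$ goes from $v_{i,j}$ to $v_{j+1,r}$, which forces acyclicity. Your explicit potential function $\phi(v_{i,j})=j$ just makes the paper's implicit "right endpoint strictly increases" argument formal, and your remark about parallel arcs being harmless is a fine extra precaution.
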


The following results is the backbone of our deterministic and randomized algorithms.  

\begin{lemma}\label{lem:stpath}
There is a path on $k+2$ vertices from $s$ to $t$ in $H$ such that the path has $k -k^{\star}$ labeled arcs with distinct labels and $k^\star +1$ unlabeled arcs  if and only if $V(G)$ can be partitioned into $k$ districts such that $p$ wins in $k^{\star}$ districts and any other candidate wins in at most $k^{\star}-1$ districts.
\end{lemma}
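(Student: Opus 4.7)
\medskip
\noindent\textbf{Proof proposal for Lemma~\ref{lem:stpath}.} The plan is to establish the biconditional by a direct structural translation between paths of the prescribed form in $H$ and district partitions of $V(G)$, using the construction of $H$ as a dictionary.

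For the forward direction, suppose there is a path $s, v_{i_1,j_1}, v_{i_2,j_2}, \ldots, v_{i_k,j_k}, t$ in $H$ consisting of $k^{\star}+1$ unlabeled arcs and $k-k^{\star}$ labeled arcs with pairwise distinct labels. I would argue three things in order. First, the subpaths $P_{i_1,j_1},\ldots,P_{i_k,j_k}$ tile $V(G)$: the only out-neighbors of $s$ in $H$ are of the form $v_{1,r}$, so $i_1=1$; an arc from $v_{i_a,j_a}$ exists only to vertices $v_{j_a+1,r}$ by construction, so $i_{a+1}=j_a+1$; and only vertices of the form $v_{r,n}$ have an arc into $t$, forcing $j_k=n$. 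Together these force a partition of $V(G)$ into $k$ connected districts. Second, the arc leaving $v_{i_a,j_a}$ is unlabeled exactly when $p$ wins $P_{i_a,j_a}$; since the arc out of $s$ is always unlabeled, the remaining $k^{\star}$ unlabeled arcs correspond to exactly $k^{\star}$ districts won by $p$. Third, every labeled arc has a label $\eLabel{c,t}$ with $c$ the winner of its source district and $t\in\{1,\ldots,k^{\star}-1\}$; since all $k-k^{\star}$ labels are distinct, at most $k^{\star}-1$ labels can share a given first coordinate $c$, so each candidate $c\neq p$ wins at most $k^{\star}-1$ districts.

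For the backward direction, suppose $V(G)$ admits a partition into $k$ districts in which $p$ wins exactly $k^{\star}$ districts and every other candidate wins at most $k^{\star}-1$. Since $G$ is a path, each district is a subpath $P_{i_a,j_a}$, and after ordering them left-to-right we have $i_1=1$, $j_k=n$, and $i_{a+1}=j_a+1$. I would take the walk $s, v_{i_1,j_1},\ldots,v_{i_k,j_k}, t$; all consecutive pairs are joined by some arc in $H$ by the structural rules above, and the vertices are distinct, so this is a path on $k+2$ vertices. For each arc leaving a district vertex (and the arc into $t$), if $p$ wins that source district I use the unlabeled arc, otherwise I choose a labeled parallel arc. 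To ensure distinctness of labels, for each candidate $c\neq p$ I collect the (at most $k^{\star}-1$) labeled arcs with first coordinate $c$ and assign their second coordinates injectively from $\{1,\ldots,k^{\star}-1\}$; this is possible precisely because $c$ wins at most $k^{\star}-1$ districts. A final count gives $k^{\star}+1$ unlabeled arcs (one for $s$ and one per $p$-winning district) and $k-k^{\star}$ labeled arcs with distinct labels, as required.

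The main obstacle—though a mild one—is the label-bookkeeping in the backward direction: verifying that the available label stock of $k^{\star}-1$ values per candidate is tight and suffices. Everything else is essentially a dictionary lookup against the construction of $H$, in particular the identification of unlabeled arcs with $p$-wins and of the constraint ``distinct labels'' with the per-candidate bound $k^{\star}-1$, so no heavy machinery is needed beyond careful enumeration.
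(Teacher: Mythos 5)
Your proposal is correct and follows essentially the same route as the paper's proof: both directions proceed by the same structural dictionary (arcs from $s$ force $i_1=1$, inter-district arcs force $i_{a+1}=j_a+1$, arcs into $t$ force $j_k=n$; unlabeled arcs encode $p$-wins; the $k^{\star}-1$ label stock per candidate encodes the win bound), and your injective label assignment in the backward direction matches the paper's explicit choice of labeling the arc out of the $j$-th district won by $c$ with $\eLabel{c,j}$. No gaps.
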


\begin{proof}
Recall that $G$ is the path $(u_1,u_2, \ldots, u_n)$.
From the construction of $H$, each vertex in $V(H)$ corresponds to a connected set in $G$, that is, each vertex corresponds to a subpath of $(u_1,u_2, \ldots, u_n)$. We observe the following three properties of $H$. 

\begin{enumerate}[wide=0pt]
\item The vertices of $H$ that are connected to $s$ correspond to the subpaths starting at $u_1$. 
That is, for each arc from $s$ to $z$ in $A(H)$, $z=v_{1,j}$ for some $j\in \{1, \ldots, n\}$. 
\item There is an arc from a vertex corresponding to a subpath $P$ of $G$ to a vertex corresponding to a subpath $P'$ of $G$ if $P$ ends at a vertex $u_i$ and $P'$ starts from the next vertex $u_{i+1}$, for some $i \in \{1, \ldots, n\}$. 
\item For each arc from $w$ to $t$, $w=v_{i,n}$ for some $i\in \{1, \ldots, n\}$. 
\end{enumerate}

For the digraph $H$, a path is a sequence of vertices and edges denoted by $(v_1,e_1,v_2,e_2,\ldots e_{\ell-1},v_{\ell})$, where $\ell\in {\mathbb N}$ such that $v_1,\ldots,v_{\ell}$ are distinct vertices, $e_1,\ldots,e_{\ell-1}$ are distinct arcs, and for each $i\in \{1,\ldots,\ell-1\}$, $e_i$ is an arc from $v_i$ to $v_{i+1}$. 

($\Rightarrow$) We first prove the forward direction of the lemma. Let $X$ denote a  path from $s$ to $t$ on $k+2$ vertices such that it has $k -k^{\star}$ labeled arcs with distinct labels and $k^\star +1$ unlabeled arcs. 
Therefore, the set of $k$ vertices in $V(X)\setminus \{s,t\}$ correspond to $k$ subpaths in $G$. 
Let these subpaths of $G$ be denoted by $P_1, P_2, \ldots, P_k$. Due to the above three properties $\bigcup_{i=1}^k V(P_i) = V(G)$. Due to the second property, and Observation~\ref{obs:H-is-DAG}, $V(P_i) \cap V(P_j) = \emptyset$, for every pair of integers $i,j$, $i \neq j$. Hence, the connected sets corresponding to $V(X)\setminus \{s,t\}$ forms a $k$-sized partition of $V(G)$, that is, the sets form $k$ pairwise disjoint districts.

From the construction of $H$, if there is an unlabeled arc $\eLabel{u,v}$ in $X$, $u \neq s$, then $p$ wins in the district corresponding to $u$. Hence, if there are $k^\star$ unlabeled arcs in $X$ excluding the arc from $s$, then $p$ wins in $k^\star$ districts among the $k$ districts that correspond to the vertices of $X$. Now, we show that any other candidate wins at most $k^\star -1$ districts. For a candidate $c$, let  $label(c)$ denote the set $\{\eLabel{c,1}, \eLabel{c,2} ,\ldots, \eLabel{c,k^\star-1}\}$. Note that for each candidate $c$ there are $k^\star -1$ distinct labels. Since the labels on the arcs of $X$ are distinct, there are at most $k^\star-1$ arcs that are labeled with a label from $label(c)$, for each candidate $c$. From the construction of $H$, if an arc from $z \in V(H)$ is labeled with an element from $label(c)$, then $c$ wins in the district corresponding to the vertex $z$. Hence, each candidate $c \in \Co{C}\setminus \{p\}$ wins in at most $k^\star-1$ districts among $\{P_1,\ldots, P_k\}$ because all the labels are distinct in the path $X$.

($\Leftarrow$) For the reverse direction, suppose that $V(G)$ can be partitioned into $k$ pairwise disjoint districts such that $p$ wins in $k^\star$ districts and any other candidate wins in at most $k^\star -1$ districts. Let  $Y_1, Y_2, \ldots, Y_k$ be the set of these districts  (i.e., each $Y_i$ is a subpath of $G$) such that for every $i \in \{1, \ldots, k-1\}$,  $Y_{i+1}$ begins at the unique  out-neighbor of the last vertex of the subpath $Y_i$  in $G$. 
Moreover, $Y_1$ is a subpath starting at $u_1$ and $Y_k$ is a supath ending at $u_n$. 
Let the vertices in $H$ corresponding to $Y_1, Y_2, \ldots, Y_k$ be $y_1,y_2, \ldots y_k$, respectively. Let $y_{k+1}=t$. 
For each $i\in \{1, \ldots, k\}$ such that $p$ wins in $Y_i$, let $e_i$ be the unique arc in $H$ from $y_i$ to $y_{i+1}$. 
Notice that such arcs are unlabeled and the number of such arcs is $k^{\star}$ as $p$ wins in $k^{\star}$ districts in $\{Y_1,\ldots,Y_k\}$. For any $i\in \{1, \ldots, k\}$ such that the winner in the district $Y_i$ is a candidate $c$ other than $p$, we define the arc $e_i$ from $y_{i}$ to $y_{i+1}$ as follows. Let $j$ be the number of districts won by $c$ in the set of districts $\{Y_1,\ldots,Y_i\}$. Then, $e_i$ is the arc \eLabel{ y_{i},y_{i+1},j} and as the number of districts won by $c$ in $\{Y_1,\ldots,Y_k\}$  is at most $k^{\star}-1$, $e_i$ is well defined. Moreover, $e_i$ is labeled with 
\eLabel{c,j}. No label appear more than once among the arcs $\{e_1,\ldots,e_k\}$. 
Let $e_0$ be the arc \eLabel{s,y_1}. Notice that $e_0$ is an unlabeled arc. From the definition of $e_1,\ldots,e_k$, the number of unlabeled arcs in $\{e_1,\ldots,e_k\}$ is $k^{\star}$ because the number of districts won by $p$ in $\{Y_1,\ldots, Y_k\}$ is $k^{\star}$. Thus, there are $k^{\star}+1$ unlabeled arcs in $\{e_0,\ldots,e_{k+1}\}$. Again by the definition of $e_1,\ldots,e_k$, all the labels of the labeled arcs in $\{e_1,\ldots,e_k\}$ are distinct and the number of labeled arcs is $k-k^{\star}$.  Therefore $(s,e_0,y_1,e_1,y_2, \ldots,e_{k-1},y_k,e_k,t)$ is the required path. 
This completes the proof of the lemma. 
\end{proof}

Thus, our problem reduces to finding a path on $k+2$ vertices from $s$ to $t$ in $H$ such that there are $k^{\star}+1$ unlabeled arcs, and $k -k^{\star}$ labeled arcs with distinct labels.


\subsection{Deterministic Algorithm on Paths}
\label{sec:detpath}


In this section, we will prove \Cref{thm:detfpt}. Due to \Cref{lem:tgm-gm} 
it is sufficient to prove the following.

\begin{theorem}
\label{thm:dettrgm}
There is an algorithm that given an instance $\mathcal{I}$ of \targm and a tie-breaking rule, runs in time $2.619^{k-k^{\star}} \vert \mathcal{I}\vert ^{\OO(1)}$, and solves the instance $\mathcal{I}$. 
\end{theorem}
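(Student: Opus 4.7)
The plan is to combine the structural reduction of Lemma~\ref{lem:stpath} with the representative family machinery developed for the $k$-\textsc{Path} problem. By Lemma~\ref{lem:stpath}, it suffices to find, in the auxiliary DAG $H$, a path on $k+2$ vertices from $s$ to $t$ using exactly $k^\star+1$ unlabeled arcs and $k-k^\star$ labeled arcs whose labels are pairwise distinct. Let $U$ denote the universe of all labels $\langle c, j\rangle$ appearing on arcs of $H$ (so $|U|\le |\Co{C}|\cdot (k^\star-1)$), and set $r := k-k^\star$. We seek an $s$--$t$ path in $H$ whose labeled arcs together select an $r$-element subset of $U$ with no repetitions.

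Since $H\setminus\{s,t\}$ is acyclic by Observation~\ref{obs:H-is-DAG}, we can process its vertices in topological order and perform dynamic programming. For every vertex $v\in V(H)$ and every pair $(i,j)$ with $i+j\le k+1$, where $i$ is the number of labeled arcs and $j$ the number of unlabeled arcs used on a partial path from $s$ to $v$, we maintain a family $\Co{A}[v,i,j]$ of $i$-element subsets of $U$. Each set in $\Co{A}[v,i,j]$ records the collection of labels consumed along some such partial path; the invariant is that $\Co{A}[v,i,j]$ is an $(r-i)$-representative of the true family of all achievable label sets, in the sense of Fomin, Lokshtanov, Panolan, and Saurabh~\cite{fomin2016efficient}, with respect to the uniform matroid of rank $r$ on $U$. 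Transitions are natural: a labeled in-arc $\langle u,v\rangle$ of label $\ell$ extends a set $X\in \Co{A}[u,i-1,j]$ to $X\cup\{\ell\}$ provided $\ell\notin X$; an unlabeled in-arc simply copies $X\in\Co{A}[u,i,j-1]$ and bumps $j$. After every update we shrink the accumulated family back to a representative one of size at most $\binom{r}{i}$ via the efficient deterministic construction.

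By Lemma~\ref{lem:stpath}, the instance is a \yes-instance of \targm\ iff $\Co{A}[t,r,k^\star+1]\neq \emptyset$, and the representative property ensures that a valid label set survives whenever one exists. The bottleneck is the representative-family computation; using the fast deterministic routine of~\cite{fomin2016efficient} the total cost, summed over topological sweeps and over choices of $(i,j)$, matches the analysis of the $2.619^r$ algorithm for directed $k$-\textsc{Path}, giving a running time of $2.619^{k-k^\star}(n+m)^{\OO(1)}$. The main technical point to check carefully is that maintaining the shrunk representatives with respect to the uniform matroid on $U$ is compatible with our two transition types: labeled arcs grow the selected set, while unlabeled arcs leave it unchanged but advance the index $j$, so the representative invariant is preserved under both operations (for unlabeled transitions the family is merely copied, while for labeled transitions standard "one-step extension + shrink" preserves representativeness). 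Once this is in place, appealing to Lemma~\ref{lem:tgm-gm} to cycle through the $k$ possible values of $k^\star$ yields Theorem~\ref{thm:detfpt}, since $2.619^{k-k^\star}\le 2.619^k$ for every $k^\star\ge 0$.
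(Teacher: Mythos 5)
Your proposal is correct and follows essentially the same route as the paper: reduce via Lemma~\ref{lem:stpath} to finding an $s$--$t$ path in the DAG $H$ with $k^\star+1$ unlabeled arcs and $k-k^\star$ distinctly labeled arcs, then run a dynamic program indexed by the vertex of $H$ and the counts of labeled/unlabeled arcs, maintaining $(k-k^\star-i)$-representative families of label sets over the label universe and shrinking after each extension exactly as in the $k$-\textsc{Path} analysis of Fomin et al. The only cosmetic differences are your $(i,j)$ indexing versus the paper's (path length, unlabeled count) indexing, and that the paper spells out the optimized size parameter $x=\frac{i-r}{2(k-k^\star)-(i-r)}$ needed to land on the $2.619^{k-k^\star}$ bound, which you correctly defer to the standard analysis.
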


Towards proving Theorem~\ref{thm:dettrgm}, we design a dynamic programming algorithm using the concept of  {\em representative family}.

\medskip
\noindent{\bf Why use representative family?}~The method is best explained by applying it to finding a $k$-sized path in a graph between two vertices $s$ and $t$.
Let $\mathcal{F}$ denote the set of all paths of size $k$ between $s$ and $t$. Observe that 
$ |\mathcal{F}|\leq {n \choose k} k!$. Let ${\sf Prefix}$  denote the subset of vertices  of size  $k/2$  that appear as a prefix on a path of size $k$ between $s$ and $t$.  That is, a set $X$ belongs to ${\sf Prefix}$, if there is a path $P\in \mathcal{F}$ such that $X$ appears among the {\em first} $k/2$ vertices on $P$. Similarly, define the set ${\sf Suffix}$ as the subset of vertices of size $k/2$ that appear as a suffix on a path of size $k$ between $s$ and $t$. That is, a set $X$ belongs to ${\sf Suffix}$, if there is a path $P\in \mathcal{F}$ such that $X$ appears among the {\em last} $k/2$ vertices on $P$. 
Clearly, $|{\sf Prefix}|$ could be ${n \choose k/2}$. A representative set is a subfamily ${\sf Prefix}^\star \subseteq {\sf Prefix} $ such that for every $Q\in {\sf Suffix}$, there is a $P^\star \in {\sf Prefix}^\star $ such that $P^\star \cap Q=\emptyset$. That is, if  there is a path in ${\sf Prefix}$ which along with $Q$ yields a path between $s$ and $t$, then the same holds with the smaller subfamily 
${\sf Prefix}^\star$. One can show that there exists a ${\sf Prefix}^\star$ of size ${k \choose k/2}$ and in fact, this can be computed very efficiently in an iterative fashion. This is the core of the method of representative family. 

We note that the representative family method has led to improvements in 
the best known running times for deterministic algorithms for many problems beyond that of finding a path of length $k$, some related problems being {\sc Long Directed Cycle}--Decide whether the input digraph contains a cycle of length at least $k$,  etc.
Representative family improves on color coding based method which uses randomization and dynamic programming, separately.   
 We refer to Cygan et.al~\cite{ParamAlgorithms15b} for a detailed exposition. 


We begin our  formal discussion by defining representative families~\cite{tcsMarx09,ParamAlgorithms15b} and stating some well-known results. 
Let $\mathcal{S}$ be a family of subsets of a universe $U$; and let $q\in{\mathbb N}$. 
A subfamily $\widehat{\mathcal{S}} \subseteq \mathcal{S}$ is said to \emph{$q$-represent} $\mathcal{S}$ if the following holds. For every set $B$ of size $q$, if there is a set $A \in \mathcal{S}$ such that $A\cap B=\emptyset$, then there is a set $A' \in \widehat{\mathcal{S}}$  such that $A'\cap B=\emptyset$. If $\widehat{\mathcal{S}}$ $q$-represents $\mathcal{S}$, then we call $\widehat{\mathcal{S}}$ a \emph{$q$-representative of $\mathcal{S}$}.


\begin{proposition}{\rm \cite{fomin2016efficient}}
\label{thm:fastRepUniform}
Let $ \Co{S} = \{S_1,\ldots, S_t\}$ be a family of sets of size $p$ over a universe of size $n$ and let $0<x<1$.
For a given $q \in \mathbb{N}$, a $q$-representative family $\widehat{\mathcal{S}}\subseteq {\mathcal{S}}$  for $ {\mathcal{S}}$  with at most  
$ {x^{-p}(1-x)^{-q}} \cdot 2^{o(p+q)} $ sets can be computed in time  
$\OO((1-x)^{-q} \cdot 2^{o(p+q)}\cdot t \cdot \log{n})$.
\end{proposition}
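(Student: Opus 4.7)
The plan is to combine the classical Bollob\'as-type size bound for representative families with an efficient algebraic extraction via matrix rank. The target $x^{-p}(1-x)^{-q}\cdot 2^{o(p+q)}$ is a weighted version of $\binom{p+q}{p}$: setting $x=p/(p+q)$ recovers that binomial up to subexponential factors, while keeping $x$ free allows an asymmetric trade-off that is convenient when $p$ and $q$ are imbalanced, and it permits the $x^{-p}$ factor to be absorbed into the output size without appearing in the running time.

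\smallskip
\noindent\textbf{Step 1: Size bound via a biased splitter.} I would first derandomize a biased coloring argument. Consider a random partition of $[n]$ into $(R,\bar R)$ in which each element is independently placed in $R$ with probability $x$. For any witness pair $(A,B)$ with $A\in\mathcal{S}$, $|B|=q$, and $A\cap B=\emptyset$, the event $A\subseteq R$ and $B\subseteq \bar R$ occurs with probability $x^p(1-x)^q$; by a union bound over all such pairs, a random family of roughly $x^{-p}(1-x)^{-q}\log n$ partitions separates every witness pair with positive probability. Derandomizing via an $(n,p+q)$-universal set in the style of Naor--Schulman--Srinivasan produces an explicit family of $N=x^{-p}(1-x)^{-q}\cdot 2^{o(p+q)}\log n$ partitions $\{(R_i,\bar R_i)\}_{i=1}^N$ with the guaranteed separation property; the $2^{o(p+q)}$ slack and the $\log n$ factor are exactly the cost of derandomization.

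\smallskip
\noindent\textbf{Step 2: Representative extraction via matrix rank.} I would then form the matrix $M\in\{0,1\}^{N\times|\mathcal{S}|}$ with $M[i,A]=1$ iff $A\subseteq R_i$, process its columns in any order, and maintain a basis of their span by incremental Gaussian elimination, adding $A$ to $\widehat{\mathcal{S}}$ exactly when its column increases the rank. Representativeness follows: if $A$ is discarded as a linear combination of already-kept columns $A'_1,\ldots,A'_r$ and $B$ is a witness disjoint from $A$, then the splitter row separating $(A,B)$ has a $1$ in column $A$ and, by the linear relation, a $1$ in some column $A'_j$, which forces $A'_j\subseteq R_i\subseteq[n]\setminus B$ and hence $A'_j\cap B=\emptyset$. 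The final size of $\widehat{\mathcal{S}}$ is at most $\mathrm{rank}(M)\le N$, matching the claimed bound.

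\smallskip
\noindent\textbf{Main obstacle.} The delicate point is recovering the sharper runtime $\OO((1-x)^{-q}\cdot 2^{o(p+q)}\cdot t\cdot \log n)$ rather than the naive $\OO(N\cdot t\cdot \mathrm{poly})$; in particular the $x^{-p}$ factor must appear in the size bound but not in the time. Doing so requires a careful amortization of the Gaussian elimination so that each column is compared only against the sub-basis corresponding to rows whose $\bar R$-side could plausibly contain a $q$-witness for that column, together with exploiting the structure of the universal-set construction to avoid touching all $N$ rows per column. Verifying that an $(n,p+q)$-universal set with both the required size and the claimed constructibility exists, and that the elimination amortization truly yields the stated bound, is the technically hardest part; the rest is bookkeeping.
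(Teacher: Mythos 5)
First, note that the paper does not prove this proposition at all: it is imported verbatim from \cite{fomin2016efficient} and used as a black box, so there is no in-paper proof to compare against. Judged on its own terms, your Step 1 captures the right high-level idea of the cited proof (a derandomized $x$-biased separating collection), but plain $(n,p+q)$-universal sets in the Naor--Schulman--Srinivasan style only give the balanced bound $2^{p+q}\cdot 2^{o(p+q)}\log n$; to get the lopsided bound $x^{-p}(1-x)^{-q}\cdot 2^{o(p+q)}\log n$ one needs the specific two-level construction of \cite{fomin2016efficient}, which composes a splitter with brute-force biased separating collections on small blocks. That structure is not a cosmetic detail: it is exactly what later lets you enumerate, for a fixed $A\in\mathcal{S}$, the members $F$ of the collection with $A\subseteq F$ in time proportional to their number, which is only about $(1-x)^{-q}\cdot 2^{o(p+q)}\log n$ rather than $N$.

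The genuine gap is Step 2. Your rank/Gaussian-elimination extraction is logically sound for correctness and for the size bound (your disjointness argument via a separating row is fine), but it cannot deliver the claimed running time: even writing down the $N\times t$ incidence matrix costs $\Omega(N\cdot t)=\Omega(x^{-p}(1-x)^{-q}2^{o(p+q)}t\log n)$, and no amortization of the elimination removes the $x^{-p}$ factor, which is precisely the obstacle you flag. The actual proof avoids linear algebra entirely: for each member $F$ of the separating collection keep (at most) one set $A_F\in\mathcal{S}$ with $A_F\subseteq F$, and take $\widehat{\mathcal{S}}=\{A_F\}$. Representativeness is then immediate (the $F$ separating $(A,B)$ has a nonempty bucket, and its chosen $A_F$ satisfies $A_F\subseteq F\subseteq [n]\setminus B$), the size is at most $|\mathcal{F}|$, and the running time is $\sum_{A\in\mathcal{S}}|\{F: A\subseteq F\}|$, which the structured construction bounds by $t\cdot(1-x)^{-q}2^{o(p+q)}\log n$. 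So the missing idea is not a cleverer elimination but the replacement of the rank argument by this one-representative-per-block bucketing, together with the structured containment enumeration.
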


We introduce the  definition of subset convolution on set families which will be used to capture the idea of ``extending'' a partial solution, a central concept when using representative family. 

\begin{definition}
For two families of sets $\mathcal{A}$ and $\mathcal{B}$, we define $\mathcal{A} \ast \mathcal{B}$ as \[\{A \cup B  \colon A \in \mathcal{A}, B \in \mathcal{B}, A\cap B = \emptyset\}.\]

\end{definition}


\begin{proposition}{\rm \cite[Lemma 12.27]{ParamAlgorithms15b}}\label{prop:rep-set2}
Let ${\mathcal{A}}$ be a family of sets over a universe. If ${\mathcal{A}}'$ $q$-represents $\widehat{\mathcal{A}}$ and $\widehat{\mathcal{A}}$ $q$-represents ${\mathcal{A}}$, then ${\mathcal{A}}'$ $q$-represents ${\mathcal{A}}$. 
\end{proposition}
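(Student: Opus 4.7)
The plan is to prove the transitivity of the $q$-representation relation by unpacking the definition once and then applying the two hypotheses in sequence. Concretely, I will fix an arbitrary set $B$ of size $q$ and an arbitrary $A \in \mathcal{A}$ with $A \cap B = \emptyset$, and then exhibit a set $A' \in \mathcal{A}'$ with $A' \cap B = \emptyset$. Once this is shown for every such pair $(B,A)$, the conclusion that $\mathcal{A}'$ is a $q$-representative of $\mathcal{A}$ is immediate from the definition.

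First I would apply the hypothesis that $\widehat{\mathcal{A}}$ $q$-represents $\mathcal{A}$ to the disjoint pair $(A,B)$. This hypothesis is exactly of the right form: it takes any $A \in \mathcal{A}$ disjoint from a $q$-set $B$ and yields some $\widehat{A} \in \widehat{\mathcal{A}}$ that is also disjoint from $B$. So this step produces a witness $\widehat{A} \in \widehat{\mathcal{A}}$ with $\widehat{A} \cap B = \emptyset$.

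Next I would apply the second hypothesis, that $\mathcal{A}'$ $q$-represents $\widehat{\mathcal{A}}$, to the disjoint pair $(\widehat{A},B)$ just produced. The same set $B$ still has size $q$, so the hypothesis applies and yields $A' \in \mathcal{A}'$ with $A' \cap B = \emptyset$. Since $B$ and $A$ were arbitrary, this verifies the definition of $q$-representation for $\mathcal{A}' \subseteq \mathcal{A}$, completing the proof.

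There is no real obstacle here: the statement is essentially transitivity of the relation ``there is a disjoint partner for every $q$-set obstruction,'' and the two hypotheses compose in the obvious way. The only subtle point worth flagging explicitly is that the witness $B$ in both invocations must be the same $q$-set, which it is by construction; no quantifier swap or choice function is needed, so the argument is a direct two-line chase of definitions.
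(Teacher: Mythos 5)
Your proof is correct. The paper states this result as a proposition imported from \cite[Lemma 12.27]{ParamAlgorithms15b} and gives no proof of its own, so there is nothing to diverge from: your two-step definition chase (use the second hypothesis to pass from $A\in\mathcal{A}$ to a witness $\widehat{A}\in\widehat{\mathcal{A}}$ disjoint from the same $B$, then the first hypothesis to pass to $A'\in\mathcal{A}'$) is the standard and essentially unique argument; the only point left implicit is that $\mathcal{A}'\subseteq\widehat{\mathcal{A}}\subseteq\mathcal{A}$, which supplies the subfamily condition required by the paper's definition of $q$-representation.
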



\shortversion{
\begin{proposition}[Lemma~12.26~\cite{ParamAlgorithms15b}]\label{prop:rep-set1}
Let $q$ and $h$ be two  non-negative integers. If ${\mathcal{A}}_1$ and ${\mathcal{A}}_2$ are both $h$-families over a universe, ${\mathcal{A}}'_1$ $q$-represents ${\mathcal{A}}_1$ and ${\mathcal{A}}'_2$ $q$-represents ${\mathcal{A}}_2$, then ${\mathcal{A}}'_1 \cup {\mathcal{A}}'_2$ $q$-represents ${\mathcal{A}}_1 \cup {\mathcal{A}}_2$.
\end{proposition}

\begin{proposition}[Lemma 12.28~\cite{ParamAlgorithms15b}]\label{prop:rep-set3}
Let $p_1,p_2,k$ be non-negative integers and $k\geq p_1+p_2$. 
Let ${\mathcal{A}}_1$ be a $p_1$-family and ${\mathcal{A}}_2$ be a $p_2$-family over a universe. Suppose ${\mathcal{A}}'_1$ $(k - p_1)$-represents ${\mathcal{A}}_1$ and ${\mathcal{A}}'_2$ $(k - p_2)$-represents ${\mathcal{A}}_2$. Then, ${\mathcal{A}}'_1 \ast {\mathcal{A}}'_2$  $(k- p_1 - p_2)$-represents ${\mathcal{A}}_1 \ast {\mathcal{A}}_2$.
\end{proposition}
}

\begin{proof}[Proof of Theorem~\ref{thm:dettrgm}.]\label{subsubsec:detalg}
\begin{sloppypar}An instance of \targm is given by  $\Co{I}=\!(G,\Co{C},\{w_v\}_{v \in V},p,k,k^\star)$.
 Additionally, recall the construction of the labeled digraph $H$ with parallel arcs from $\Co{I}$. 
In order to prove Theorem~\ref{thm:dettrgm}, due to Lemma~\ref{lem:stpath},
it is enough to decide whether there exists a path on $k+2$ vertices from $s$ to $t$ in $H$ that satisfies the following properties: \begin{enumerate*}[label={\bf (P\Roman*)}]\item there are  $k^{\star}+1$ unlabeled arcs, and
\item the remaining $k -k^{\star}$ arcs have distinct labels.
\end{enumerate*} 
\end{sloppypar}

Before presenting our algorithm, we first define some notations. 
For $i \in\{1, \ldots, k+1\}$ and $r\in \{1, \ldots, k^{\star}+1\}$, a path $P$ starting from $s$ on $i+1$ vertices is said to satisfy $\Prop{i}{r}$ if there are $r$ unlabeled arcs (including the arc from $s$ in $P$), and the remaining $i -r$ arcs have distinct labels. For a subgraph $H'$ of $H$, we denote the set of labels in the graph $H'$ by $\Lset(H')$. Recall that each vertex $v \in V(H)\sm\{s,t\}$ corresponds to a subpath (i.e., a district) of the path graph $G$. Hence, for each $v \in V(H) \setminus \{s,t\}$, we define $\win{v}$ to denote the unique candidate that wins\footnote{We may assume that a unique candidate wins each district because of the application of the tie-breaking rule.} in the district denoted by $v$. Equivalently, we say that the candidate $\win{v}$ \emph{wins} the district $v$ in $G$. For each vertex $v \in V(H)$, and a pair of integers $i \in\{1, \ldots, k+1\}$, $r\in \{1, \ldots, \min\{i, k^{\star}+1\}\}$, we define a set family $\setfamily{F}{i}{r}{v} = $
$\{ P ~\colon \!\!\textit{~$P$ is a $s$ to $v$ path in $H$ on $i\!+\!1$ vertices satisfying } \Prop{i}{r}\}.$
 
The following family contains the arc labels on the path in the aforementioned family \setfamily{F}{i}{r}{v}. 
\begin{align*}
& \setfamily{Q}{i}{r}{v} = \big\{\Lset(P) \colon P \in \setfamily{F}{i}{r}{v} \big\} .
\end{align*}

Note that for each value of $i \in \{1, \ldots, k+1\}$, $r $ defined above and $v \in V(H)$, each set in $\setfamily{Q}{i}{r}{v}$ is actually a subset of $\Lset(H)$ of size $i-r$. If there is a path from $s$ to $t$ on $k+2$ vertices with $k-k^\star$ arcs with distinct labels, then $ \setfamily{Q}{k+1}{k^{\star}+1}{t}\neq \emptyset$ and vice versa. 
That is, $\setfamily{Q}{k+1}{k^{\star}+1}{t}\neq \emptyset$ if and only if $\setfamily{F}{k+1}{k^{\star}+1}{t}\neq \emptyset$.  
Hence, to solve our problem, it is sufficient to check if $\setfamily{Q}{k+1}{k^{\star}+1}{t}$ is non-empty.  To decide this, we design a dynamic programming algorithm using representative  families over $\Lset(H)$. 
In this algorithm, for each value of $i \in \{1,\ldots, k+1\}$, $r\in\{1,\ldots,\min\{i, k^{\star}+1\}\}$, and $v \in V(H)$, we compute a $(k-k^\star-(i-r))$ representative family of $\setfamily{{Q}}{i}{r}{v}$, denoted by $\setfamily{\widehat{Q}}{i}{r}{v}$, using Proposition~\ref{thm:fastRepUniform}, where $x=\frac{i-r}{2(k-k^{\star})-(i-r)}$. Here, the value of $x$ is set with the goal to optimize the running time of our algorithm, as is the case for the algorithm for {\sc $k$-Path} in~\cite{fomin2016efficient}. At the end our algorithm outputs ``\yes'' if and only if $\setfamily{\widehat{Q}}{k+1}{k^{\star}+1}{t}\neq \emptyset$. 

\medskip
\noindent
{\bf The ``big picture''.} The big picture behind our algorithm is that essentially we want to decide if $\setfamily{Q}{k+1}{k^{\star}+1}{t} \neq \emptyset$, but computing that as part of the dynamic program would require a table with $(k+1) (k^{\star} +1)n^{2}$ entries and each entry may contain $\OO(n^{i})$, where $i\leq k+1$ ``partial solutions''. Using a $(k-k^\star-(i-r))$ representative family of \setfamily{Q}{i}{r}{v} instead of the family itself allows us to save on the size of each entry significantly to $2^{k-k^{\star}}$ as follows: The set $\setfamily{Q}{i}{r}{v}$ contains the labels of a path from $s$ to $v$ on $i+1$ vertices with $i-r$ distinct labels. For $\setfamily{Q}{k+1}{k^{\star}+1}{t} \neq \emptyset$, there must exist some $i$, $r$ and $v$ such that $\setfamily{Q}{i}{r}{v}$ contains the set of labels that appear on an $s$ to $v$ path, denoted by $P_{0}$, on $i$ vertices with $i-r$ distinct labels. Moreover, there exists a path $P$ from $v$ to $t$ on $k+1-i$ vertices with $k-k^{\star}- (i-r)$ distinct labels such that $(P_{0}, P)$ is an $s$ to $t$ path on $k+1$ vertices with $k-k^{\star}$ distinct labels. From the definition of representative family it follows that there exists a set $S' \in \setfamily{\widehat{Q}}{i}{r}{v}$, where $S'$ is the set of labels on a path, denoted by $P'$  
from $s$ to $v$ on $i+1$ vertices with $i-r$ distinct labels, such that $(P', P)$ is an $s$ to $t$ path on $k+1$ vertices with $k-k^{\star}$ distinct labels. Moreover, the size of \setfamily{\widehat{Q}}{i}{r}{v} is at most ${k-k^{\star}\choose k-k^{\star}- (i-r)}\leq 2^{k-k^{\star}}$.


%

\smallskip

\noindent
{\bf Algorithm.} We now formally describe how we recursively compute the family $\setfamily{\widehat{Q}}{i}{r}{v}$, for each $i \in \{1, \ldots, k+1\}$, $r\in \{1, \ldots, \min\{i, k^{\star}+1\}\}$, and $v \in V(H)$. 

\noindent
{\bf Base Case:} We set $\setfamily{\widehat{Q}}{1}{r}{v}=\setfamily{{Q}}{1}{r}{v} $ \begin{align}\label{eq:base-detpath} 
 & = \left\{ \begin{array}{c l}
\{\emptyset \} &  \mbox{ if }  \eLabel{s,v} \text{ is an arc in $H$} \mbox{ and } r=1 \\
\emptyset & \mbox{otherwise}  
\end{array}\right.
\end{align}
For each $i\in\{1, \ldots, k+1\}$, $r\in \{1, \ldots, k-k^{\star}\} \cup \{0\}$, and $v \in V(H)$, we set
\begin{align}\label{eq:base-detpathsecond}
\quad \setfamily{\widehat{Q}}{i}{r}{v}= \setfamily{{Q}}{i}{r}{v}=\emptyset \mbox{ if } r=0 \mbox{ or } r>i. 
\end{align}


Note that \eqref{eq:base-detpathsecond} is defined so that the recursive definition 
(\Cref{eq:rec-detpath}) has a simple description.





 \noindent
{\bf Recursive Step:} 
 For each $i \in \{2, \ldots, k+1\}$, $r \in \{1, \ldots, \min\{i, k^\star+1\}\}$, and $v \in V(H)$,  we compute 
 $\setfamily{\widehat{Q}}{i}{r}{v}$ as follows. We first compute  $\setfamily{{Q'}}{i}{r}{v}$ 
 from the previously computed families and then  we compute a $(k-k^{\star}-(i-r))$-representative family $\setfamily{\widehat{Q}}{i}{r}{v}$ of $\setfamily{{Q'}}{i}{r}{v}$. The family $\setfamily{{Q'}}{i}{r}{v}$ is computed using the representative family: 

\vspace{-1em}
\begin{align} \label{eq:rec-detpath}
\nonumber
 \setfamily{{Q'}}{i}{r}{v} = &
\bigg(\bigcup_{ \substack{w \in N^{-}(v),\\ \win{w}=p}}    \setfamily{\widehat{Q}}{i-1}{r-1}{w} \bigg)  \bigcup \\ 
\bigg(\! \bigcup_{\substack{ w \in N^{-}(v),\\ \win{w}\neq p}} & \!\!\setfamily{\widehat{Q}}{i-1}{r}{w} \ast \{\{\eLabel{\win{w},j}\}: 1\!
\leq j< \!k^{\star}\} \bigg)  
\end{align}


 
 

Next, we compute a $(k-k^{\star}-(i-r))$-representative family $\setfamily{\widehat{Q}}{i}{r}{v}$ of $\setfamily{{Q'}}{i}{r}{v}$ using Proposition~\ref{thm:fastRepUniform}, where $x=\frac{i-r}{2(k-k^{\star})-(i-r)}$.  
Our algorithm (call it ${\mathscr A}$) to decide if the desired $s-t$ path exists in $H$ works as follows:  we compute 
$\setfamily{\widehat{Q}}{i}{r}{v}$ 
using  Equations~\eqref{eq:base-detpath}--\eqref{eq:rec-detpath}, and Proposition~\ref{thm:fastRepUniform}. At the end  ${\mathscr A}$ outputs ``\yes'' if and only if $\setfamily{\widehat{Q}}{k+1}{k^{\star}+1}{t}\neq \emptyset$. 
%
%
%
%

\smallskip
\noindent 
{\bf Correctness proof and running time analysis.} 
We prove that  for every $ i \in \{1, \ldots, k+1\}$, $r\in \{1, \ldots, \min\{i, k^{\star}+1\}\}$, and $v \in V(H)$,  $\setfamily{\widehat{Q}}{i}{r}{v}$ is indeed a $(k-k^{\star} - (i-r))$ representative family of $\setfamily{Q}{i}{r}{v}$, and not just that of $\setfamily{Q'}{i}{r}{v}$. From the definition of $0$-representative family of  $\setfamily{{Q}}{k+1}{k^{\star}+1}{t}$, we have that  $\setfamily{{Q}}{k+1}{k^{\star}+1}{t}\neq \emptyset$ if and only if $\setfamily{\widehat{Q}}{k+1}{k^{\star}+1}{t}\neq \emptyset$. Thus, to prove the correctness of the algorithm it is enough to prove the following. 

\begin{lemma}
\label{lem:rec-detpath}For each $ i \in \{1, \ldots, k+1\}$, $r\in \{1, \ldots, \min\{i, k^{\star}+1\}\}$, and $v \in V(H)$, family
$\setfamily{\widehat{Q}}{i}{r}{v}$ is a $(k-k^\star-(i-r))$-representative of $\setfamily{{Q}}{i}{r}{v}$.
\end{lemma}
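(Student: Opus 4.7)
The plan is a straightforward induction on $i$. The base case $i = 1$ is immediate: by~\eqref{eq:base-detpath} we have $\setfamily{\widehat{Q}}{1}{r}{v} = \setfamily{{Q}}{1}{r}{v}$, and any family $q$-represents itself for every $q$. For the inductive step, the combinatorial core is a decomposition of $\setfamily{{Q}}{i}{r}{v}$ according to the in-neighbor $w$ through which an $s$-to-$v$ path $P$ enters $v$. If $\win{w} = p$ then the last arc of $P$ is unlabeled, the prefix lies in $\setfamily{F}{i-1}{r-1}{w}$, and $\Lset(P)$ equals the label set of the prefix; if $\win{w} \neq p$ then the last arc carries some label $\eLabel{\win{w},j}$, the prefix lies in $\setfamily{F}{i-1}{r}{w}$, and the distinct-label condition of $\Prop{i}{r}$ amounts precisely to requiring $\eLabel{\win{w},j}$ to be disjoint from the prefix's label set. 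Pushing this down to the label-set families yields
\begin{align*}
\setfamily{{Q}}{i}{r}{v} \;=\; &\bigcup_{\substack{w \in N^{-}(v)\\ \win{w}=p}}\!\! \setfamily{{Q}}{i-1}{r-1}{w} \\
&\cup \bigcup_{\substack{w \in N^{-}(v)\\ \win{w}\neq p}}\!\! \setfamily{{Q}}{i-1}{r}{w} \ast \{\{\eLabel{\win{w},j}\} : 1 \leq j < k^{\star}\},
\end{align*}
where the disjointness built into $\ast$ exactly models the fresh-label requirement.

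Next I would invoke the inductive hypothesis together with three standard properties of representative families from Chapter~12 of~\cite{ParamAlgorithms15b}. The concatenation lemma, applied with $p_{1} = i-1-r$, $p_{2} = 1$, and total budget $k - k^{\star}$, shows that $\setfamily{\widehat{Q}}{i-1}{r}{w} \ast \{\{\eLabel{\win{w},j}\}: 1 \leq j < k^{\star}\}$ is a $(k-k^{\star}-(i-r))$-representative of the ``$\win{w}\neq p$'' term, while the inductive hypothesis on $\setfamily{\widehat{Q}}{i-1}{r-1}{w}$ directly supplies the same guarantee for the ``$\win{w} = p$'' term. The union lemma then upgrades these term-wise guarantees to the assertion that $\setfamily{{Q'}}{i}{r}{v}$ of~\eqref{eq:rec-detpath} is a $(k-k^{\star}-(i-r))$-representative of $\setfamily{{Q}}{i}{r}{v}$. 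Finally, Proposition~\ref{thm:fastRepUniform} yields that $\setfamily{\widehat{Q}}{i}{r}{v}$ is a $(k-k^{\star}-(i-r))$-representative of $\setfamily{{Q'}}{i}{r}{v}$, and transitivity (Proposition~\ref{prop:rep-set2}) composes the two into the desired representation of $\setfamily{{Q}}{i}{r}{v}$.

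The main obstacle is the parameter bookkeeping across the recursive case: when the last arc is labeled the set size grows by one while the representation slack shrinks by one, so the inductive hypothesis must deliver representation at exactly parameter $(k-k^{\star})-(i-1-r) = (k-k^{\star}-(i-r))+1$ in order to match the hypothesis $(k-p_{1})$ of the concatenation lemma. One should also verify that the singleton family $\{\{\eLabel{\win{w},j}\} : 1 \leq j < k^{\star}\}$ trivially serves as its own $((k-k^{\star})-1)$-representative, that the degenerate cases $r = 0$ and $r > i$ handled by~\eqref{eq:base-detpathsecond} are compatible with this bookkeeping, and that $k-k^{\star}\geq i-r$ holds throughout the recursion so that the representation parameters remain non-negative. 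Once these alignments are checked, the induction goes through and the lemma follows.
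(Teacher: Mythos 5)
Your proposal is correct and follows essentially the same route as the paper: establish the exact recurrence $\setfamily{{Q}}{i}{r}{v} = \bigl(\bigcup_{w} \setfamily{{Q}}{i-1}{r-1}{w}\bigr) \cup \bigl(\bigcup_{w} \setfamily{{Q}}{i-1}{r}{w} \ast \{\{\eLabel{\win{w},j}\}\}\bigr)$ by case analysis on the last arc (the paper isolates this as Claim~\ref{lem:rec-detpath1}), then combine the inductive hypothesis with the union lemma (Proposition~\ref{prop:rep-set1}), the convolution lemma (Proposition~\ref{prop:rep-set3}) with exactly your parameter choices $p_1=i-1-r$, $p_2=1$, and transitivity (Proposition~\ref{prop:rep-set2}) applied to the computed family from Proposition~\ref{thm:fastRepUniform}. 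The parameter bookkeeping you flag, namely that the hypothesis delivers representation at slack $(k-k^{\star})-(i-1-r)=(k-k^{\star}-(i-r))+1$ in the labeled case, is precisely the alignment the paper's induction relies on.
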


%

To prove Lemma~\ref{lem:rec-detpath}, we first prove that the following recurrence for $\setfamily{{Q}}{i}{r}{v}$ is correct.  
 \begin{align} \label{eq:rec-detpath1}
\nonumber
 \setfamily{{Q}}{i}{r}{v} & =
 \bigg(\bigcup_{ \substack{w \in N^{-}(v),\\ \win{w}=p}}  \!\!\!  \setfamily{{Q}}{i-1}{r-1}{w} \bigg)     
 \bigcup \\ \bigg(\bigcup_{\substack{ w \in N^{-}(v),\\ \win{w}\neq p}} & \!\!\! \setfamily{{Q}}{i-1}{r}{w} \ast \{\{\eLabel{\win{w},j}\}:1\leq j< k^{\star}\}\bigg)  
\end{align}
 

 
\begin{claim}\label{lem:rec-detpath1}Equations \eqref{eq:base-detpath}, 
\eqref{eq:base-detpathsecond}, and \eqref{eq:rec-detpath1} correctly compute $\setfamily{{Q}}{i}{r}{v}$, for each $ i \in \{1, \ldots, k+1\}$, $r\in \{1, \ldots, \min\{i, k^{\star}+1\}\}$, and $v \in V(H)$. 
\end{claim}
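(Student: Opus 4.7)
The plan is to prove the claim by induction on $i$. The base case $i=1$ is immediate from \eqref{eq:base-detpath}: a path from $s$ on $2$ vertices is of the form $(s, e, v)$ for some arc $e$; by construction every such arc in $H$ is unlabeled (those are the arcs incident with $s$), so such a path trivially satisfies $\Prop{1}{r}$ only when $r=1$, and in that case its label set is $\emptyset$. For the degenerate cases covered by \eqref{eq:base-detpathsecond}, observe that an $s$-initial path of length $i$ has exactly one unlabeled arc incident with $s$, so it has at least $1$ unlabeled arc and at most $i$ unlabeled arcs, forcing $1 \le r \le i$; hence $\setfamily{Q}{i}{r}{v}=\emptyset$ whenever $r=0$ or $r>i$.

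For the inductive step, fix $i \ge 2$ and assume the claim holds for $i-1$. I would prove the two inclusions separately.

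For $\setfamily{Q}{i}{r}{v} \subseteq \text{RHS}$, take any $S \in \setfamily{Q}{i}{r}{v}$ witnessed by a path $P \in \setfamily{F}{i}{r}{v}$, and let $w$ be the penultimate vertex of $P$, so $w \in N^-(v)$. Let $P'$ be the prefix of $P$ ending at $w$; then $P'$ is an $s$-to-$w$ path on $i$ vertices. I would split on whether the last arc $e$ of $P$ is labeled.
\begin{itemize}
\item If $e$ is unlabeled then, by the construction of $H$, $\win{w}=p$. The prefix $P'$ inherits $r-1$ unlabeled arcs and $i-1-(r-1)=i-r$ distinct-labeled arcs, so $P' \in \setfamily{F}{i-1}{r-1}{w}$ and $\Lset(P')=\Lset(P)=S$; hence $S \in \setfamily{Q}{i-1}{r-1}{w}$, which is one of the sets unioned in the first block of \eqref{eq:rec-detpath1}.
\item If $e$ is labeled then $\win{w}\neq p$ and the label of $e$ has the form $\eLabel{\win{w},j}$ for some $j \in \{1,\dots,k^\star-1\}$. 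Now $P'$ has $r$ unlabeled arcs and $(i-1)-r$ distinctly labeled arcs, so $P' \in \setfamily{F}{i-1}{r}{w}$ and $\Lset(P')=S \setminus \{\eLabel{\win{w},j}\}$. Distinctness of labels in $P$ guarantees $\eLabel{\win{w},j}\notin \Lset(P')$, so $S = \Lset(P') \cup \{\eLabel{\win{w},j}\}$ belongs to $\setfamily{Q}{i-1}{r}{w} \ast \{\{\eLabel{\win{w},j}\}\}$, which is subsumed by the second block of \eqref{eq:rec-detpath1}.
\end{itemize}

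For the reverse inclusion, take $S$ in the RHS. If $S$ comes from the first block, then $S \in \setfamily{Q}{i-1}{r-1}{w}$ for some $w \in N^-(v)$ with $\win{w}=p$, so there is a path $P' \in \setfamily{F}{i-1}{r-1}{w}$ with $\Lset(P')=S$; appending the unlabeled arc $\eLabel{w,v}$ (which exists in $H$ because $\win{w}=p$) yields a path $P \in \setfamily{F}{i}{r}{v}$ with $\Lset(P)=S$, so $S \in \setfamily{Q}{i}{r}{v}$. If $S$ comes from the second block, then $S = S' \cup \{\eLabel{\win{w},j}\}$ with $S' \in \setfamily{Q}{i-1}{r}{w}$, $\eLabel{\win{w},j}\notin S'$, $\win{w}\neq p$, and $1 \le j < k^\star$. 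A witness path $P'$ for $S'$ extended by the arc $\eLabel{w,v,j}$ (present in $H$ by construction) gives a path $P \in \setfamily{F}{i}{r}{v}$; the distinctness of labels in $P$ follows because $\Lset(P')=S'$ does not contain $\eLabel{\win{w},j}$. Thus $S \in \setfamily{Q}{i}{r}{v}$.

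The only subtle point, which I would state as a small observation before the case analysis, is that the $\ast$-operation enforces disjointness of the two operands, so the label just appended is automatically distinct from all previous labels on the reconstructed path; symmetrically, distinctness of labels along $P$ in the forward direction ensures we land in the $\ast$-product rather than in a set where the new label is already present. Once that is fixed, the rest is bookkeeping of how the counters $i$ and $r$ decrease when the last arc is peeled off. I do not anticipate a genuine obstacle; the only risk is an off-by-one error in the $r$ counter depending on whether the newly appended arc is unlabeled, which the case split above handles explicitly.
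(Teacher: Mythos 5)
Your proposal is correct and follows essentially the same route as the paper's proof: induction on $i$, with the two inclusions handled by peeling off the last arc of a witness path and case-splitting on whether it is labeled (equivalently, whether $p$ wins the penultimate district), using the disjointness built into the $\ast$-operation to account for label distinctness. No gaps.
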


\begin{proof}
Recall that $\setfamily{Q}{i}{r}{v} = \{\Lset(P) \colon P \in \setfamily{F}{i}{r}{v}\}$. 
We prove the lemma using induction on $i$. 
When $i =1$ and $r=1$ (the base case), we are looking for paths on $2$ vertices from $s$ to $v$ with no labeled arcs.
  Moreover, notice that all the arcs incident with $s$ are unlabeled. Hence, for $i =1$ and $r=1$, \eqref{eq:base-detpath} correctly computes $\setfamily{{Q}}{1}{1}{v}$ for all $v\in V(H)$.  Also note that when $r>i$ or ($i=1$ and $r=0$), 
$\setfamily{\widehat{Q}}{i}{r}{v}=\setfamily{{Q}}{i}{r}{v}=\emptyset$ for any $v\in V(H)$.

Now, consider the induction step. 
For $ i \in \{2,\ldots, k+1\}$, $r\in \{1 \ldots, \min\{ i,k^\star +1\}\}$, and $v\in V(H)$, we compute $\setfamily{{Q}}{i}{r}{v}$ using \eqref{eq:rec-detpath1}. We show that the recursive formula is correct.
By induction hypothesis, we assume that for each $i' < i$, 
$r'\in \{1, \ldots, \min\{ i,k^\star +1\}\}$,  and $v \in V(H)$, \eqref{eq:base-detpath},\eqref{eq:base-detpathsecond},  and \eqref{eq:rec-detpath1} correctly computed $\setfamily{{Q}}{i'}{r'}{v}$. 

First, we show that $\setfamily{{Q}}{i}{r}{v}$ is a subset of the R.H.S.  of  \eqref{eq:rec-detpath1}. 
Recall that $\setfamily{{Q}}{i}{r}{v}$ contains label sets of paths from $s$ to $v$ on $i+1$ vertices satisfying $\Prop{i}{r}$. Let $X\in \setfamily{{Q}}{i}{r}{v}$. Then, there exists 
a path $P\in \setfamily{F}{i}{r}{v}$ and  $X=\Lset(P)$. That is, 
$P$ is  a path  from $s$ to $v$ on $i+1$ vertices satisfying $\Prop{i}{r}$ 
and $X=\Lset(P) \in \setfamily{Q}{i}{r}{v}$.
Let $P$ be denoted by $(s,e_0,v_1,e_1,v_2,\ldots,v_{i-1},e_{i-1},v)$. Let the subpath $(s,e_0,v_1,e_1,v_2, \ldots, e_{i-2},v_{i-1})$ be denoted by $P-v$.  Since $P$ satisfy $\Prop{i}{r}$, there are exactly $r$ unlabeled arcs including the arc from $s$. Recall that due to construction of $H$, there is an unlabeled arc from a vertex $u \in V(H)\setminus \{s,t\}$ if $\win{u}=p$. Therefore, there are $r-1$ vertices in $\{v_1, v_2, \ldots, v_{i-1}\}$ where the target candidate $p$ wins.

\medskip
\noindent 
 {\bf Case 1:} Suppose that $\win{v_{i-1}} = p$. 
 Therefore, the arc $e_{i-1}$ is unlabeled. Hence, $P-v$ has $r-1$ unlabeled arcs, and it is a path on $i$ vertices from $s$ to $v_{i-1}$.
Therefore, $P-v$ satisfy $\Prop{i-1}{r-1}$. Hence, $\Lset(P-v)$ must be in 
$\setfamily{{Q}}{i-1}{r-1}{v_{i-1}}$. Moreover, $\Lset(P-v)=\Lset(P)$ and 
$\setfamily{{Q}}{i-1}{r-1}{v_{i-1}}$ is a subset of the R.H.S. of \eqref{eq:rec-detpath1}. 
This implies that $\Lset(P)$ is a subset of the R.H.S. of \eqref{eq:rec-detpath1}.


\smallskip
\noindent
{\bf Case 2:} Suppose that $\win{v_{i-1}} \neq p$. 
Therefore, the arc $e_{i-1}$ is labeled with $\eLabel{\win{v_{i-1}}, j}$, where $j \in \{1, \ldots, k^{\star} -1\}$. Since the arcs of $P$ has distinct labels, $\eLabel{\win{v_{i-1}}, j} \notin \Lset(P-v)$. That is, $|\Lset(P-v)| = |\Lset(P)| -1$. Recall that $P$ is a path on $i+1$ vertices from $s$ to $v$ and $P-v$ is a path on $i$ vertices from $s$ to $v_{i-1}$. Therefore, the number of unlabeled arcs in the path $P-v$ is the same as in $P$. That is $P-v$ has $r$ unlabeled arcs.
Hence, $P-v$ satisfy $\Prop{i-1}{r}$ implying that $\Lset(P-v) \in \setfamily{{Q}}{i-1}{r}{v_{i-1}}$.
Hence, $\Lset(P)$ is in R.H.S of \eqref{eq:rec-detpath1}. 

For the other direction, we show that R.H.S of \eqref{eq:rec-detpath1} is a subset of $\setfamily{{Q}}{i}{r}{v}$. 
Let $X$ be a set that belongs to the R.H.S. of \eqref{eq:rec-detpath1}. Since it is a disjoint union, we have the following two cases.

\medskip
\noindent
{\bf Case 1: $X\in \bigcup_{ \substack{w \in N^{-}(v),\\ \win{w}=p}}    \setfamily{{Q}}{i-1}{r-1}{w}$.} That is, there exists $w\in V(H)$ and a path $P\in  \setfamily{{F}}{i-1}{r-1}{w}$ such that $X=\Lset(P)$, $w \in N^{-}(v)$, and $\win{w}=p$. Let $e$ be the unique arc in $H$ from $w$ to $v$ (because $\win{w}=p$). 
Let $P'=(P,e,v)$ denote the path obtained  by connecting $w$ to $v$ in $P$ using the arc $e$. Due to Observation~\ref{obs:H-is-DAG}, $P'$ is a path in $H$. 
Since $P \in \setfamily{{F}}{i-1}{r-1}{w}$, $P$ satisfy $\Prop{i-1}{r-1}$. Hence, $P$ has $r-1$ unlabeled arcs. Therefore, there are $r$ unlabeled arcs in $P'=(P,e,v)$. 
Note that $P'$ is a path on $i+1$ vertices from $s$ to $v$.   Hence, $P'$ satisfy $\Prop{i}{r}$.
Therefore, $P'\in \setfamily{{F}}{i}{r}{v}$. Since $e$ is an unlabeled arc, $\Lset(P')=\Lset(P)=X$. 
Hence, $X=\Lset(P')\in \setfamily{{Q}}{i}{r}{v}$

\smallskip
\noindent
{\bf Case 2: $X\in \bigcup_{\substack{ w \in N^{-}(v),\\ \win{w}\neq p}} \setfamily{{Q}}{i-1}{r}{w} \ast \{\{\eLabel{\win{w},j}\}\colon j\in \{1, \ldots, k^{\star}-1\}\}$.}
That is, there exist $w\in V(H)$ and a path $P\in  \setfamily{{F}}{i-1}{r}{w}$ such that 
$w \in N^{-}(v)$, $\win{w}\neq p$, and $X\in \{\Lset(P)\}\ast \{\{\eLabel{\win{w},j}\}\colon j\in \{1, \ldots, k^{\star}-1\}\}$. 
That is, there exits $j\in \{1, \ldots, k^{\star}-1\}$ such that $\eLabel{\win{w},j}\notin \Lset(P)$ and $X=\Lset(P)\cup \{\eLabel{\win{w},j}\}$. 
Let $e$ be the  arc in $H$ from $w$ to $v$ that is labeled with $\eLabel{\win{w},j}$. 
Let $P'=(P,e,v)$ denote the path obtained  by connecting $w$ to $v$ in $P$ using the arc $e$. 
Due to Observation~\ref{obs:H-is-DAG}, $P'$ is a path in $H$. 
Since $P \in \setfamily{{F}}{i-1}{r}{w}$, $P$ satisfy $\Prop{i-1}{r}$. Hence, $P$ has $r$ unlabeled arcs. Therefore, there are $r$ unlabeled arcs in $P'=(P,e,v)$. 
Note that $P'$ is a path on $i+1$ vertices from $s$ to $v$ 
and $\eLabel{\win{w},j}\notin \Lset(P)$. Hence, $P'$ satisfy $\Prop{i}{r}$.
Therefore, $P'\in \setfamily{{F}}{i}{r}{v}$.  Also, 
since $\Lset(P')=\Lset(P)\cup \{\eLabel{\win{w},j}\}=X$, 
we have that $X=\Lset(P')\in \setfamily{{Q}}{i}{r}{v}$ 

This competes the proof of the claim. 
\end{proof}

Next, we state some of the properties of representative family in order to prove Lemma~\ref{lem:rec-detpath}.
\begin{proposition}{\rm \cite[Lemma~12.26]{ParamAlgorithms15b}}\label{prop:rep-set1}
If $A_1$ and $A_2$ are both 
$m$-families, $A'_1$ $q$-represents $A_1$ and $A'_2$ $q$-represents $A2$, then $A'_1 \cup A'_2$ $q$-represents $A_1 \cup A_2$.
\end{proposition}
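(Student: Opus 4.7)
The plan is to prove this directly from the definition of $q$-representation by a short case analysis, since both $A_1'$ and $A_2'$ already carry representation guarantees that transfer onto their union almost for free. The only thing to verify is that if some member of $A_1 \cup A_2$ avoids a given $q$-sized set $B$, then some member of $A_1' \cup A_2'$ avoids it as well.

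Concretely, I would fix an arbitrary set $B$ of size $q$ and assume there exists $A \in A_1 \cup A_2$ with $A \cap B = \emptyset$; the goal is to produce $A'' \in A_1' \cup A_2'$ with $A'' \cap B = \emptyset$. Split on which family contains $A$. If $A \in A_1$, then by the hypothesis that $A_1'$ $q$-represents $A_1$, there is some $A'' \in A_1'$ with $A'' \cap B = \emptyset$, and $A'' \in A_1' \cup A_2'$ as required. Symmetrically, if $A \in A_2$, the representation property of $A_2'$ yields $A'' \in A_2' \subseteq A_1' \cup A_2'$ with $A'' \cap B = \emptyset$. (Note that $A$ could belong to both $A_1$ and $A_2$, but in that case either case above suffices.) This exhausts the possibilities, so the definition of $q$-representation is satisfied for $A_1' \cup A_2'$ with respect to $A_1 \cup A_2$.

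I do not anticipate any real obstacle here: the statement is essentially a definitional observation, and the uniform-size condition ($m$-family) is not actually used in the argument above — it is retained in the statement only because representative families are typically used and defined in the uniform setting where every set in the family has the same cardinality $m$. The entire argument is a couple of lines of definition-chasing and does not require any of the machinery (such as subset convolution or the efficient construction of representative families in Proposition~\ref{thm:fastRepUniform}) used elsewhere in the paper.
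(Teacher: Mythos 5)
Your proof is correct and complete: the case analysis on whether the witness $A$ lies in $A_1$ or $A_2$, combined with the respective representation guarantees, is exactly the standard definition-chasing argument, and you are right that the uniform-size hypothesis plays no role. The paper itself does not prove this proposition at all — it imports it verbatim as Lemma~12.26 of Cygan et al. — so there is nothing in the paper to diverge from; your write-up supplies the (routine) proof the citation stands in for.
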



\begin{proposition}{\rm \cite[Lemma~12.28]{ParamAlgorithms15b}}\label{prop:rep-set3}
Let $A_1$ be a $p_1$-family and $A_2$ be a $p_2$-family. Suppose $A'_1$ $(k - p_1)$-represents $A_1$ and $A'_2$ $(k - p_2)$-represents $A_2$. Then $A'_1 \ast A'_2$  $(k- p_1 - p_2)$-represents $A_1 \ast A_2$.
\end{proposition}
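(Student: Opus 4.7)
\medskip

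The plan is to unfold the definitions and chase disjointness through a two-step substitution argument. Fix a set $B$ of size $k - p_1 - p_2$ over the universe, and suppose there exists some $C \in A_1 \ast A_2$ with $C \cap B = \emptyset$; our task is to produce a set $C' \in A_1' \ast A_2'$ with $C' \cap B = \emptyset$. By definition of the convolution $\ast$, we can write $C = X_1 \cup X_2$ with $X_1 \in A_1$, $X_2 \in A_2$, and $X_1 \cap X_2 = \emptyset$; since $C \cap B = \emptyset$, each of $X_1$ and $X_2$ is individually disjoint from $B$.

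The first substitution replaces $X_1$ by a witness from $A_1'$. Consider the ``forbidden'' set $B_1 := X_2 \cup B$, which has size $|X_2| + |B| = p_2 + (k - p_1 - p_2) = k - p_1$ (the union is disjoint since $X_2 \cap B = \emptyset$). The set $X_1 \in A_1$ satisfies $X_1 \cap B_1 = \emptyset$ because $X_1 \cap X_2 = \emptyset$ and $X_1 \cap B = \emptyset$. Since $A_1'$ is a $(k - p_1)$-representative of $A_1$ and $|B_1| = k - p_1$, there exists $X_1' \in A_1'$ with $X_1' \cap B_1 = \emptyset$, hence $X_1' \cap X_2 = \emptyset$ and $X_1' \cap B = \emptyset$.

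The second substitution replaces $X_2$ analogously. Consider $B_2 := X_1' \cup B$, which is a disjoint union of size $p_1 + (k - p_1 - p_2) = k - p_2$. Since $X_2 \in A_2$ and $X_2 \cap X_1' = \emptyset = X_2 \cap B$, we have $X_2 \cap B_2 = \emptyset$, so by the $(k - p_2)$-representation property of $A_2'$ there exists $X_2' \in A_2'$ with $X_2' \cap B_2 = \emptyset$. In particular $X_1' \cap X_2' = \emptyset$, so $C' := X_1' \cup X_2' \in A_1' \ast A_2'$, and $C' \cap B = \emptyset$ as required.

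The argument is essentially a ``swap one coordinate at a time, absorbing the other coordinate into the forbidden set.'' The only delicate point is bookkeeping of sizes: one must ensure that the auxiliary sets $B_1$ and $B_2$ have exactly the sizes $k - p_1$ and $k - p_2$ that match the representation hypotheses for $A_1'$ and $A_2'$ respectively, which is why the statement uses the particular parameter $k - p_1 - p_2$ for $|B|$. There is no other obstacle; once the substitutions are set up correctly, the disjointness conditions propagate automatically.
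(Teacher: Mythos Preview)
Your proof is correct and is exactly the standard two-step substitution argument for this lemma. Note, however, that the paper does not actually give its own proof of this proposition: it is stated with a citation to \cite[Lemma~12.28]{ParamAlgorithms15b} and used as a black box. Your argument matches the textbook proof from that reference essentially line for line.
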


\begin{proof}[Proof of Lemma~\ref{lem:rec-detpath}]
We prove the lemma using induction on $i$. Base case is given by $i=1$. When $i=1$, $r=1$. 
By \eqref{eq:base-detpath}, $\setfamily{\widehat{Q}}{1}{1}{v}=\setfamily{{Q}}{1}{1}{v}$ for all $v\in V(H)$. Thus,  $\setfamily{\widehat{Q}}{1}{1}{v}$ is a $(k-k^{\star})$-representative of $\setfamily{{Q}}{1}{1}{v}$ for any $v\in V(H)$. 
Notice that  when $r>i$ or ($i=1$ and $r=0$), $\setfamily{\widehat{Q}}{i}{r}{v}=\setfamily{{Q}}{i}{r}{v}=\emptyset$ for any $v\in V(H)$.

Now consider the induction step. That is, $i>1$. By induction hypothesis, we have that for any 
$r'\in \{1,\ldots,\min \{i-1,k^{\star}+1\}\}$ and any $v\in V(H)$, $\setfamily{\widehat{Q}}{i-1}{r'}{v}$ is a $(k-k^{\star})-(i-1-r')$-representative of $\setfamily{{Q}}{i-1}{r'}{v}$. Next, we fix an arbitrary $r\in \{1,\ldots,\min\{i,k^{\star}+1\}\}$ and $v\in V(H)$, and prove that  $\setfamily{\widehat{Q}}{i}{r}{v}$ is a $((k-k^{\star})-(i-r))$-representative of $\setfamily{{Q}}{i}{r}{v}$. 

Consider \Cref{eq:rec-detpath}. Let $i'=i-1$ and $r'=r-1$.  By induction hypothesis and Proposition~\ref{prop:rep-set1}, 

\begin{itemize}
\item[$(a)$] $\bigcup_{ \substack{w \in N^{-}(v),\\ \win{w}=p}}    \setfamily{\widehat{Q}}{i'}{r'}{w}$ is $((k-k^{\star})-(i-r))$-representative of $\bigcup_{ \substack{w \in N^{-}(v),\\ \win{w}=p}}    \setfamily{{Q}}{i'}{r'}{w}$. 
\end{itemize}

By induction hypothesis, Proposition~\ref{prop:rep-set1} and Proposition~\ref{prop:rep-set3}, 

\begin{itemize}
\item[$(b)$] $\bigcup_{\substack{ w \in N^{-}(v),\\ \win{w}\neq p}} \setfamily{\widehat{Q}}{i'}{r}{w} \ast \{\{\eLabel{\win{w},j}\}:1\leq j< \!k^{\star}\}$ is $(k-k^{\star})-(i-r)$-representative of \\$\bigcup_{\substack{ w \in N^{-}(v),\\ \win{w}\neq p}} \setfamily{{Q}}{i'}{r}{w} \ast \{\{\eLabel{\win{w},j}\}:1\leq j< \!k^{\star}\}$
\end{itemize}

Statements $(a)$ and $(b)$, Lemma~\ref{lem:rec-detpath1}, Proposition~\ref{prop:rep-set1}, and \Cref{eq:rec-detpath} imply that  $\setfamily{{Q}'}{i}{r}{v}$ is a $((k-k^{\star})-(i-r))$-representative of $\setfamily{{Q}}{i}{r}{v}$.
By construction, we have that $\setfamily{\widehat{Q}}{i}{r}{v}$ is a $((k-k^{\star})-(i-r))$-representative of $\setfamily{{Q}'}{i}{r}{v}$. 
Thus, by Proposition~\ref{prop:rep-set2}, 
$\setfamily{\widehat{Q}}{i}{r}{v}$ is a $((k-k^{\star})-(i-r))$-representative of $\setfamily{{Q}}{i}{r}{v}$. 
This completes the proof. 
\end{proof}

Next we analyse the  running time of algorithm ${\mathscr A}$.  


\begin{lemma}\label{lem:time-detpath}
For each $i\in \{1, \ldots, k+1\}$, $r\in\{1, \ldots, \min\{i, k^{\star}+1\}\}$, and $v\in V(H)$, the cardinality of $\setfamily{\widehat{Q}}{i}{r}{v}$ is at most 
  \begin{align*}
 &\left(\frac{2(k-k^{\star})-(i-r)}{i-r}\right)^{i-r}\! \left(\frac{2(k-k^{\star})-(i-r)}{2((k-k^{\star})-(i-r))}\right)^{(k-k^{\star})-(i-r)}\!\!2^{o(k-k^\star)}
\end{align*}

where $d = i-r$ and $q = k-k^\star - (i-r)$, 
and the algorithm ${\mathscr A}$ takes time $2.619^{k-k^\star} n^{\OO(1)}$.
\end{lemma}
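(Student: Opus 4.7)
The plan is to prove the size bound by a direct application of Proposition~\ref{thm:fastRepUniform} to $\setfamily{Q'}{i}{r}{v}$, with the specific choice $x = (i-r)/(2(k-k^{\star})-(i-r))$, $p = i-r$, and $q = k-k^{\star}-(i-r)$. A routine simplification gives $x^{-p} = ((p+2q)/p)^{p}$ and $(1-x)^{-q} = ((p+2q)/(2q))^{q}$, which rewritten in terms of $i, r, k, k^{\star}$ yield the two bracketed expressions in the stated bound; Proposition~\ref{thm:fastRepUniform} also supplies the trailing $2^{o(p+q)} = 2^{o(k-k^{\star})}$ factor.

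For the running time, I will first bound the cost of one table cell. By \eqref{eq:rec-detpath}, the intermediate family $\setfamily{Q'}{i}{r}{v}$ is assembled by ranging over the $\OO(n^{2})$ in-neighbors $w \in N^{-}(v)$ and convolving (in the labeled case) with a singleton family of size $\OO(k^{\star})$, so $|\setfamily{Q'}{i}{r}{v}| \leq (n+m)^{\OO(1)} \cdot R_{\mathrm{prev}}$, where $R_{\mathrm{prev}}$ denotes the largest representative-family size used at the previous level. By the time bound in Proposition~\ref{thm:fastRepUniform}, computing $\setfamily{\widehat{Q}}{i}{r}{v}$ then takes time $(1-x)^{-q} \cdot |\setfamily{Q'}{i}{r}{v}| \cdot 2^{o(k-k^{\star})} \cdot \log|\Lset(H)|$. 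Summing over the $\OO(k \cdot k^{\star} \cdot n^{2})$ table cells and substituting the size bound for $R_{\mathrm{prev}}$, the overall running time becomes $(n+m)^{\OO(1)} \cdot 2^{o(k-k^{\star})} \cdot \max_{p+q=k-k^{\star}} x^{-p}(1-x)^{-2q}$ with $x = p/(p+2q)$.

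The final step is to show this maximum is bounded by $2.619^{k-k^{\star}}$. Writing $\alpha = p/(k-k^{\star})$ so that $x = \alpha/(2-\alpha)$, the base of the exponential simplifies to $\frac{(2-\alpha)^{2-\alpha}}{\alpha^{\alpha}\,(2(1-\alpha))^{2(1-\alpha)}}$. Differentiating its logarithm yields the critical-point equation $4(1-\alpha)^{2} = \alpha(2-\alpha)$, i.e., $5\alpha^{2} - 10\alpha + 4 = 0$, whose relevant root is $\alpha^{\star} = 1 - 1/\sqrt{5} \approx 0.553$. Evaluating the base at $\alpha^{\star}$ gives the well-known $2.618\ldots$, the same constant that drives the representative-families analysis of $k$-\textsc{Path} in~\cite{fomin2016efficient}. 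Absorbing the $2^{o(k-k^{\star})}$ factor then yields the claimed $2.619^{k-k^{\star}}(n+m)^{\OO(1)}$ running time. The most delicate step will be this numerical maximization; the remaining pieces are straightforward bookkeeping.
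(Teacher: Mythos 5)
Your proposal is correct and follows essentially the same route as the paper: the size bound is a direct substitution of $x=\frac{d}{d+2q}$, $p=d=i-r$, $q=k-k^\star-(i-r)$ into Proposition~\ref{thm:fastRepUniform}, and the running time comes from maximizing $x^{-d}(1-x)^{-2q}$ over $d+q=k-k^\star$, attained at $d=(1-\tfrac{1}{\sqrt{5}})(k-k^\star)$ with base $\approx 2.618$, exactly as in the paper. The one step you gloss over is that the two branches of \eqref{eq:rec-detpath} feed families with parameters $(d,q)$ and $(d-1,q+1)$ into $\setfamily{{Q'}}{i}{r}{v}$, so replacing your ``$R_{\mathrm{prev}}$'' by $s_{d,q}$ (with the current cell's $(1-x_{d,q})^{-q}$ factor) requires the comparison $s_{d-1,q+1}\leq e^{2}d\, s_{d,q}$, which the paper imports as \cite[Claim~12.34]{ParamAlgorithms15b}.
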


\begin{proof}
We prove the lemma using induction on $i$. 
 First, the algorithm computes the base case as described in \Cref{eq:base-detpath}. 
 By \Cref{eq:base-detpath} the cardinality of $\setfamily{\widehat{Q}}{1}{1}{v}$ is at most $1$ for 
 each $v \in V(H)$ and their computation takes  polynomial time. Also note that when $r>i$ or ($i=1$ and $r=0$), $\vert \setfamily{\widehat{Q}}{i}{r}{v}\vert=0$.
 
 Now we fix  integers $i\in \{2,\ldots k+1\}$ and $r\in \{1,\ldots,\min \{i,k^{\star}+1\}\}$, and $v\in V(H)$. Next,  we compute the cardinality of $\setfamily{\widehat{Q}}{i}{r}{v}$ and the time taken to compute it. 
 Let $d=i-r$ and $q=(k-k^{\star})-(i-r)$. 
 For any two positive integers $a$ and $b$, let $x_{a,b}=\frac{a}{a+2b}$ and $s_{a,b}=(x_{a,b})^{-a}(1-x_{a,b})^{-b}$.  By Proposition~\ref{thm:fastRepUniform}, the cardinality of $\setfamily{\widehat{Q}}{i}{r}{v}$ is at most 
 $x_{d,q}^{-d}(1-x_{d,q})^{-q} \cdot 2^{o(d+q)}$. By substituting the values for $d,q$ and $x_{d,q}$, we have that  $ \vert \setfamily{\widehat{Q}}{i}{r}{v}\vert$
 
 
 \begin{eqnarray*}
 \leq \left(\frac{2(k-k^{\star})-(i-r)}{i-r}\right)^{i-r}\!\!\left(\frac{2(k-k^{\star})-(i-r)}{2((k-k^{\star})-(i-r))}\right)^{(k-k^{\star})-(i-r)}2^{o(k-k^\star)}
 \end{eqnarray*}
 Next, we compute the running time to compute $\setfamily{\widehat{Q}}{i}{r}{v}$. Towards that we first need to bound the cardinality of $\setfamily{{Q}'}{i}{r}{v}$. By \Cref{eq:rec-detpath} and induction hypothesis, 
 the cardinality of $\setfamily{{Q}'}{i}{r}{v}$ is bounded by  $(s_{d,q}+s_{d-1,q+1}) \cdot 2^{o(k-k^{\star})}n^2$.  
 
\begin{claim}\cite[Claim~12.34]{ParamAlgorithms15b}\label{claimLspq}
For any $d\geq 3$ and $q\geq 1$, $s_{d-1,q+1} \leq e^2\cdot d \cdot s_{d,q}$.
 \end{claim}
 
 Thus, by \Cref{claimLspq}, when $d\geq 3$, $\vert \setfamily{{Q}'}{i}{r}{v}\vert \leq s_{d,q} \cdot 2^{o(k-k^{\star})}n^2$. Then, by Proposition~\ref{thm:fastRepUniform}, 
 when $d\geq 3$, the running time to compute $\setfamily{\widehat{Q}}{i}{r}{v}$ is upper bounded by 
 
 \begin{equation}
 \label{eqnruntime}
 s_{d,q} (1-x_{d,q})^{-q} \cdot 2^{o(d+q)} n^{\OO(1)} \leq \left(\frac{d+2q}{d}\right)^{d} \left(\frac{d+2q}{2q}\right)^{2q}2^{o(k-k^\star)} n^{\OO(1)}
 \end{equation}
 When $d\leq 3$, by Proposition~\ref{thm:fastRepUniform}, the running time to compute $\setfamily{{Q}'}{i}{r}{v}$ is $n^{\OO(1)}$. 
 
 Now for the total running time of the algorithm, the value for $d$ and $q$ in \eqref{eqnruntime}
varies as follows:   $0\leq d\leq k-k^{\star}$ and $q=k-k^{\star}-d$. The 
R.H.S. of \Cref{eqnruntime} is maximized when $d = (1-\frac{1}{\sqrt{5}})(k-k^\star)$, and it is upper bounded by  
 $2.619^{k-k^\star} n^{\OO(1)}$. Therefore, the total running time of the algorithm is $2.619^{k-k^\star} n^{\OO(1)}$. 
 \end{proof}

Thus, Theorem~\ref{thm:dettrgm} is proved.
\end{proof}
%


\subsection{Randomized Algorithm on Paths}\label{sec:rand-fpt}

\
In this section, we will prove \Cref{thm:ranfpt}. The randomized algorithm works by detecting the existence of the desired path in the auxiliary graph by interpreting each of the labeled paths in the graph as a multivariate monomial, and then using a result by Williams~\cite{Williams09} to detect a multilinear monomial in the resulting (multivariate) polynomial. The underlying idea being that each path with the desired properties is a multilinear monomial in the polynomial thus constructed, and vice versa. Williams~\cite{Williams09} gave an algorithm with one sided error that allows us to detect a multilinear monomial in time $\OO^{\star}(2^{d})$, where $d$ denotes the degree of the multivariate polynomial\footnote{$\OO^{\star}()$ hides factors that are polynomial in the input size.}. 
 Due to \Cref{lem:tgm-gm} 
it is sufficient to prove the following. 

\begin{theorem}\label{lemma:rand-algo}
There is a one-sided error randomized algorithm that given an instance $\mathcal{I}$ of \targm, runs in time $\OO^{\star}(2^{k-k^{\star}})$, outputs
 ``yes'' with high probability (at least $2/3$) if $\mathcal{I}$ is a \yes-instance, and always outputs ``no'' if $\mathcal{I}$ is a \no-instance.
\end{theorem}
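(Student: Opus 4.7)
The plan is to convert, via Lemma~\ref{lem:stpath}, the task of finding a suitable $s$-$t$ path in $H$ into the task of detecting a multilinear monomial of degree $k-k^\star$ in an explicitly computable polynomial, and then to invoke the one-sided-error algorithm of Williams~\cite{Williams09}, which detects a multilinear monomial in a polynomial of degree $d$ given by an arithmetic circuit in $\OO^\star(2^d)$ time.

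I first construct the auxiliary digraph $H$ exactly as before Lemma~\ref{lem:stpath}. For every label $\ell$ that appears on some arc of $H$, I introduce a fresh formal variable $x_\ell$. For a walk $W$ of $H$, let $\mathrm{mon}(W)$ denote the product of $x_{\labelf{e}}$ taken over all labeled arcs $e$ traversed by $W$ (the empty product being $1$). Mirroring the recurrence~\eqref{eq:rec-detpath1}, I define polynomials $P_{i,r,v}(\mathbf{x})$ by $P_{1,1,v} = 1$ if $\eLabel{s,v}\in A(H)$ and $0$ otherwise; $P_{i,r,v} = 0$ whenever $r = 0$ or $r > i$; and for $i\ge 2$ with $r\in\{1,\dots,\min\{i,k^\star+1\}\}$,
\[
P_{i,r,v} \;=\; \sum_{\substack{w\in N^-(v)\\ \win{w}=p}} P_{i-1,r-1,w} \;+\; \sum_{\substack{w\in N^-(v)\\ \win{w}\neq p}} \sum_{j=1}^{k^\star-1} x_{\eLabel{\win{w},j}} \cdot P_{i-1,r,w}.
\]
An easy induction (completely analogous to the proof of Claim~\ref{lem:rec-detpath1}) shows that $P_{i,r,v} = \sum_{W}\mathrm{mon}(W)$, where $W$ ranges over all $s$-to-$v$ walks of $H$ on $i+1$ vertices that satisfy $\Prop{i}{r}$. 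Thus the target polynomial $P := P_{k+1,k^\star+1,t}$ is computed by a polynomial-size arithmetic circuit with non-negative integer coefficients, and it has degree exactly $k - k^\star$.

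Next I establish the key equivalence: \emph{$P$ contains a multilinear monomial if and only if $H$ admits the $s$-$t$ path described in Lemma~\ref{lem:stpath}.} Because every coefficient of $P$ is non-negative, a monomial $\mu$ appears in $P$ with nonzero coefficient iff some walk $W$ contributes $\mathrm{mon}(W) = \mu$; such $\mu$ is multilinear precisely when the labels on the labeled arcs of $W$ are pairwise distinct. Moreover, $H$ is a directed acyclic graph by Observation~\ref{obs:H-is-DAG}, so every walk from $s$ to $t$ is automatically a simple path. Combining these two observations with Lemma~\ref{lem:stpath} yields the claimed equivalence, reducing \targm on paths to multilinear monomial detection in $P$.

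To conclude, I invoke Williams' algorithm~\cite{Williams09}: given an arithmetic circuit that computes a polynomial of degree $d$ with non-negative integer coefficients, it decides in $\OO^\star(2^d)$ time whether the polynomial contains a multilinear monomial, with no false positives and with false-negative probability at most $1/3$. Applied to the circuit for $P$ with $d = k - k^\star$, this gives the claimed $\OO^\star(2^{k-k^\star})$ randomized algorithm with the required one-sided error. The main point to be handled carefully is that Williams' method proceeds by evaluating the circuit in a specially chosen algebraic structure (a group algebra over an extension of $\mathbb{F}_2$) designed so that non-multilinear terms cancel while a multilinear monomial survives with high probability; I must check that my DP circuit can be evaluated in this ring and that no cancellation destroys the correspondence above. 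This is standard — the argument is the same one used for \textsc{$k$-Path} — but it is the step that requires the most attention.
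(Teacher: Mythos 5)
Your proposal is correct and follows essentially the same route as the paper: the same auxiliary digraph $H$, the same recursively defined polynomial (the paper's $\polynf[i,r,v]$, with the identical base cases and recurrence), the same correspondence between multilinear monomials of degree $k-k^\star$ and the distinctly-labeled $s$--$t$ paths of Lemma~\ref{lem:stpath}, and the same final invocation of Williams' circuit-based multilinear-monomial detection. The only cosmetic difference is that you phrase the correctness via walks in the DAG $H$ rather than via the paper's two-directional containment between $\polynfpath$ and $\polynf$ (Lemma~\ref{lem:function-equivalence}); these are equivalent arguments.
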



\begin{supress}
\il{}
\begin{multline*}
\setfamily{F}{i}{r}{v} = \{ P : \text{$P$ is a path from $s$ to $v$ on $i+1$ vertices,} \\ \hspace{2cm} \text{where $r$ arcs are unlabeled and the other $i - r$ arcs have distinct labels}\}, 
 \end{multline*}
and 
$\setfamily{Q}{i}{r}{v} = \{ \Co{L}(P): V(P) \in \setfamily{F}{i}{r}{v}\}$,  where 
$\Co{L}(P)$ denotes the set of labels that appear in the path $P$.
\il{}
\end{supress}

Recall the definition of $\setfamily{F}{i}{r}{v}$ and $\setfamily{Q}{i}{r}{v}$  from Section~\ref{subsubsec:detalg} and that $\setfamily{F}{k+1}{k^{\star}+1}{t}\neq \emptyset$ if and only if $\setfamily{Q}{k+1}{k^{\star}+1}{t} \neq \emptyset$. 
In essence, our randomized algorithm will decide if for the given instance of \targm, the family $\setfamily{Q}{k+1}{k^{\star}+1}{t} \neq \emptyset$. That in conjunction with \Cref{lem:stpath} will allow us to decide if the given instance is a \yes-instance.

\medskip
\noindent{\bf Interpreting \setfamily{Q}{i}{r}{v} as a multivariate polynomial:} 
Note that there are exactly $(k^{\star}-1)(m-1)$ distinct arc labels in $H$, where $m$ denotes the number of candidates. We will associate each of these labels (denoted by \eLabel{c, j} for some candidate $c\in \Co{C}\setminus \{p\}$ and $j \in \{1, \ldots, k^{\star}-1\}$) with a distinct variable; and use $\Co{\widehat{L}}(P)$ to denote the monomial associated with the labels appearing in the path $P$ in $H$. We assume that the unlabeled arcs contribute $1$ to the monomial. 

For any $i\in \{1, \ldots, k+1\}$, $r\in \{1, \ldots, \min\{i, k^{\star} +1\}\}$, and vertex $v\in V(H)\sm\{s\}$, we will define a polynomial function $\polynf[i,r,v]$ that will contain all the monomials that correspond to each of the paths in $\setfamily{F}{i}{r}{v}$. Towards that, we define a {\it helper} function $\polynf'[i,r, v]$ as follows. For any $v\in V(H) \setminus \{s\}$,  $ i\in \{1, \ldots, k+1\}$, and $r\in \{0, \ldots, \min\{i, k^{\star} +1\}\}$ such that $r>i$ or $r\leq 1$, we define
\begin{align} 
\label{tag:function1base}  
 \polynfpath[i,r,v] = \left\{ \begin{array}{c l}
1 & \mbox{if }  \langle s,v\rangle \text{ is an arc in $H$} \mbox{ and } i=r=1 \\
0 & \mbox{otherwise}  
\end{array}\right.
\end{align}


\begin{align}
 \polynfpath[i,r, v] ~ = \sum_{\substack{\text{Path }P :\\ \Co{L}(P) \in \setfamily{Q}{i}{r}{v}}}\!\Co{\widehat{L}}(P) 
\label{tag:function1}
\end{align}


We will prove that the function $\polynf[i,r,v]$, defined below, contains all the monomials of $\polynfpath[i,r,v]$.

For any $v\in V(H) \setminus \{s\}$,  $ i\in \{1, \ldots, k+1\}$, and $r\in \{0, \ldots, \min\{i, k^{\star} +1\}\}$ such that $r>i$ or $r\leq 1$, we define 
\begin{align} 
\label{tag:function2base}  
 \polynf[i,r,v] = \left\{ \begin{array}{c l}
1 & \mbox{if }  \langle s,v\rangle \text{ is an arc in $H$} \mbox{ and } i=r=1 \\
0 & \mbox{otherwise}  
\end{array}\right.
\end{align}


\noindent
For any $i\in \{2, \ldots, k+1\}$, $r\in \{1, \ldots \min\{i, k^{\star} +1\}\}$, and  $v\in V(H) \setminus \{s\}$,
\begin{align}
 \polynf[i,r, v] &= \sum_{\substack{w \,\in N^{-}(v),\\ \win{w}\ne p}} 
\polynf[i-1,r,w]\times \left(\sum_{j=1}^{k^{\star}-1}\eLabel{\win{w}, j}\right) + 
\sum_{\substack{w \in N^{-}(v),\\\win{w}=p} } \polynf[i-1,r-1,w] 
\label{tag:function2}  
\end{align}

%

A {\it multilinear monomial} in a multivariate polynomial is defined to be a monomial in which every variable has degree at most one. The next result establishes the connection between \polynf\ and \polynfpath.

\begin{lemma}
\label{lem:function-equivalence}

For each value of $i\in \{1, \ldots, k+1\}$, $r \in \{1, \ldots, \min\{i, k^{\star}+1\}\}$, and $v\in V(H) \setminus \{s\}$, we have the following properties:
each monomial in $\polynfpath[i,r,v]$ is also a monomial in $\polynf[i,r,v]$, and every multilinear monomial in $\polynf[i,r,v]$ is a monomial in $\polynfpath[i,r,v]$.




\end{lemma}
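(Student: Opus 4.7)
\textbf{Proof proposal for Lemma~\ref{lem:function-equivalence}.} The plan is to prove both containments simultaneously by induction on $i$. The base case $i=1$ (which forces $r=1$) is immediate: by \eqref{tag:function1base} and \eqref{tag:function2base}, both $\polynfpath[1,1,v]$ and $\polynf[1,1,v]$ equal $1$ if $\langle s,v\rangle$ is an arc in $H$ (the only path in $\setfamily{F}{1}{1}{v}$ is $(s,v)$, whose single unlabeled arc contributes the empty monomial $1$), and equal $0$ otherwise. For the inductive step, fix $i\ge 2$, $r\in\{1,\ldots,\min\{i,k^{\star}+1\}\}$, and $v\in V(H)\setminus\{s\}$.

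\emph{First containment (every monomial of $\polynfpath[i,r,v]$ lies in $\polynf[i,r,v]$).} Take an arbitrary path $P\in \setfamily{F}{i}{r}{v}$, write $P=(P',e,v)$ where $w$ is the second-to-last vertex and $P'$ is the prefix ending at $w$. Split into cases according to $\win{w}$. If $\win{w}=p$, then $e$ is unlabeled, $P'$ lies in $\setfamily{F}{i-1}{r-1}{w}$, so $\Co{\widehat{L}}(P')$ is a monomial of $\polynfpath[i-1,r-1,w]$. The inductive hypothesis places it in $\polynf[i-1,r-1,w]$, and hence $\Co{\widehat{L}}(P)=\Co{\widehat{L}}(P')$ is a summand of the second term of \eqref{tag:function2}. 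If $\win{w}\neq p$, then $e$ carries some label $\eLabel{\win{w},j}$ with $j\in\{1,\ldots,k^{\star}-1\}$ not appearing in $\Co{L}(P')$; by the inductive hypothesis, $\Co{\widehat{L}}(P')\in\polynf[i-1,r,w]$, so $\Co{\widehat{L}}(P)=\Co{\widehat{L}}(P')\cdot\eLabel{\win{w},j}$ appears in the first term of \eqref{tag:function2}.

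\emph{Second containment (every multilinear monomial of $\polynf[i,r,v]$ lies in $\polynfpath[i,r,v]$).} Let $M$ be a multilinear monomial with nonzero coefficient in $\polynf[i,r,v]$. Pick any contribution in the recursion \eqref{tag:function2} that produces $M$. If the contribution comes from $\polynf[i-1,r-1,w]$ for some $w\in N^-(v)$ with $\win{w}=p$, then $M$ is a monomial of $\polynf[i-1,r-1,w]$; since $M$ is multilinear, the inductive hypothesis yields a path $P'\in\setfamily{F}{i-1}{r-1}{w}$ with $\Co{\widehat{L}}(P')=M$, and appending the unique unlabeled arc from $w$ to $v$ produces a path in $\setfamily{F}{i}{r}{v}$ with monomial $M$. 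If the contribution comes from $\polynf[i-1,r,w]\cdot\eLabel{\win{w},j}$ for some $w\in N^-(v)$ with $\win{w}\neq p$ and some $j\in\{1,\ldots,k^{\star}-1\}$, write $M=M'\cdot\eLabel{\win{w},j}$; multilinearity of $M$ forces $M'$ to be multilinear and the variable $\eLabel{\win{w},j}$ to be absent from $M'$. By induction, $M'$ is the label-monomial of some $P'\in\setfamily{F}{i-1}{r}{w}$; extending $P'$ to $v$ through the arc labeled $\eLabel{\win{w},j}$ yields a path $P\in\setfamily{F}{i}{r}{v}$ (the distinctness of labels is guaranteed precisely by the absence of $\eLabel{\win{w},j}$ from $M'=\Co{\widehat{L}}(P')$), and $\Co{\widehat{L}}(P)=M$.

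\emph{Main obstacle.} The only delicate point is the reverse direction: a multilinear monomial in $\polynf[i,r,v]$ could in principle be assembled from contributions whose individual inductive monomials are themselves non-multilinear or repeat labels; we must be careful to argue multilinearity propagates to at least one contributing branch. This is handled by observing that coefficients in $\polynf$ are nonnegative integers, so no cancellation is possible; if $M$ is multilinear in $\polynf[i,r,v]$, then some contributing summand must already be multilinear, and on that summand the invariant ``label absent from $M'$'' immediately encodes the distinct-labels condition required to form a valid element of $\setfamily{F}{i}{r}{v}$.
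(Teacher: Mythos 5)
Your proof is correct and follows essentially the same route as the paper's: induction on $i$ over the recursion \eqref{tag:function2}, decomposing a path into its prefix plus last arc for the first containment, and using non-negativity of coefficients plus the divisibility structure of a multilinear monomial for the second. The only difference is cosmetic — you spell out the reverse direction case by case where the paper compresses it into a one-line appeal to ``a simple induction'' — and your ``main obstacle'' worry is in fact vacuous (if $M=M'\cdot\eLabel{\win{w},j}$ is multilinear then $M'$ is automatically multilinear and omits that variable), though your resolution of it is sound.
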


\begin{proof}
We will first prove that every monomial in $\polynf'[i,r,v]$ is also a monomial in $\polynf[i,r,v]$ for any  $i\in \{1, \ldots, k+1\}$, $r \in \{1, \ldots, \min\{i, k^{\star}+1\}\}$, and $v\in V(H)$. We will prove this by induction on the value of $i$.  
We say that the entry $[i,r,v]$ is {\it correct} if each monomial in $\polynfpath[i,r,v]$ is also a monomial in $\polynf[i,r,v]$.

The base case of the recursive definition ensures that the base case of the induction holds as well. Suppose that for some value of $i' \in \{1, \ldots, k\}$ the induction hypothesis holds for all entries $[i,r,v]$ where $i \in \{1, \ldots, i'\}$, $r \in \{1, \ldots, \min\{i, k^{\star}+1\}\}$, and $v\in V(H)$.

We want to show that entry $[i'+1, r, v]$ is correct for every value of $r\in \{1, \ldots, \min\{i'+1, k^{\star}+1\}\}$ and vertex $v\in V(H)$. To this end, consider an arbitrary path $P\in \setfamily{F}{i'+1}{r}{v}$, denoted by $P=(P_{0},e,v)$,  where $P_{0}$ denotes the prefix of the path $P$ that ends at the penultimate vertex, denoted by $\bar{w}$ 
and $e$ is an arc from $\bar{w}$ to $v$. Thus, based on whether the arc $e$ is labeled, we have two cases: $P_{0}\in \setfamily{F}{i'}{r}{\bar{w}}$ or $P_{0} \in \setfamily{F}{i'}{r-1}{\bar{w}}$, respectively. The two cases pinpoint the summand in \Cref{tag:function2} to which the monomial $\Co{\widehat{L}}(P)$ belongs.

\medskip
\noindent{\bf Case 1: Path $P_{0}\in \setfamily{F}{i'}{r}{\bar{w}}$}. In this case $e$ is labeled and $\win{\bar{w}}\neq p$. Since arc $e$ is labeled, path $P$ (has $i'+1$ arcs) can contain at most $i'$ unlabeled arcs, so $r\leq \min\{i', k^{\star}+1\}$. 

By induction hypothesis, we know that the entry $[i',r,\bar{w}]$ is correct. Thus, $\polynf[i',r,\bar{w}]$ contains the term $\Co{\widehat{L}}(P_{0})$, the monomial that corresponds to the labels on the arcs in the path $P_{0}$. The label of the arc $e$ in the path $P$ is $\eLabel{\win{\bar{w}}, j}$ where $j\in \{1, \ldots, k^{\star}-1\}$. Thus, monomial $\Co{\widehat{L}}(P)=\Co{\widehat{L}}(P_{0})\times \eLabel{\win{\bar{w}}, j}$; and this monomial is present in the summation $\sum_{\substack{w \,\in N^{-}(v),\\ \win{w}\ne p}} \polynf[i',r,w]\times \left(\sum_{j=1}^{k^{\star}-1}\eLabel{\win{w}, j}\right)$. Thus, the monomial $\Co{\widehat{L}}(P)$ is in $\polynf[i'+1,r,v]$.

\smallskip
\noindent{\bf Case 2: Path $P_{0}\in \setfamily{F}{i'}{r-1}{\bar{w}}$}. 
In this case $e$ is unlabeled and $\win{\bar{w}}= p$.
Since arc $e$ is unlabeled, we know that  $\Co{\widehat{L}}(P)=\Co{\widehat{L}}(P_{0})$. 
 By induction hypothesis, we know that the entry $[i',r-1,w]$ is correct, and so $\Co{\widehat{L}}(P_{0})$ is part of the summation $\!\sum_{\substack{w \in N^{-}(v),\\\win{w}=p} } \polynf[i',r-1,w]$. 
 Thus, the monomial $\Co{\widehat{L}}(P)$ is in the polynomial $\polynf[i'+1,r,v]$.

 This completes the inductive argument and we can conclude that each of the entries $[i, r, v]$ is correct, i.e, every monomial in $\polynfpath[i,r,v]$ is also a monomial in $\polynf[i,r,v]$.


Next, we will argue that every multilinear monomial in $\polynf[i,r,v]$ is a monomial in $\polynfpath[i,r,v]$. We begin by noting that a simple induction on the value of $i$ and the fact that the summation enumerates over all the in-neighbors of $v$ (the arc may be labeled or unlabeled) yields the property that every monomial in $\polynf[i,r,v]$ corresponds to a path on $i+1$ vertices from $s$ to $v$ with $r$ unlabeled arcs (including the first), where the labels need not be distinct. In fact, a multilinear monomial in $\polynf[i,r,v]$, corresponds to a path, denoted by $P$, on $i+1$ vertices from $s$ to $v$ with $r$ unlabeled arcs and $i-r$ {\it distinctly labeled} arcs. Therefore, the path $P \in \setfamily{F}{i}{r}{v}$, and so $\Co{L}(P)$ is a monomial in $\polynfpath[i,r,v]$. This completes the proof. 
\end{proof}

Next, we establish a correspondence between the multilinear monomials and distinctly labeled paths. We infer the following result due to \Cref{lem:function-equivalence} and the fact that an existence of a path $P$ in $\setfamily{F}{i}{r}{v}$ implies that there is a monomial $\Co{\widehat{L}}(P)$ in $\polynfpath[i,r,v]$. 

\begin{corollary}\label{multilinear-term-is-path}For each $ i\in \{1, \ldots, k+1\}$, $r\in \{1, \ldots, \min\{i, k^{\star}+1\}\}$, and $v\in V(H)\setminus \{s\}$, we can conclude that there is a multilinear monomial in $\polynf[i,r,v]$ if and only if $\setfamily{F}{i}{r}{v} \neq \emptyset$.

\end{corollary}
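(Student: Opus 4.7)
The plan is to derive the corollary as a direct consequence of \Cref{lem:function-equivalence}, which has already established the two-way correspondence between the monomials appearing in $\polynf[i,r,v]$ and those in the combinatorially-defined $\polynfpath[i,r,v]$. I will handle the two implications separately, and in each case the argument reduces to unfolding the definition of the relevant family and invoking one half of \Cref{lem:function-equivalence}.

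For the $(\Leftarrow)$ direction, I will fix an arbitrary path $P \in \setfamily{F}{i}{r}{v}$. By the very definition of $\setfamily{F}{i}{r}{v}$, the path $P$ has exactly $r$ unlabeled arcs (each contributing the factor $1$) and $i-r$ labeled arcs whose labels are pairwise distinct; therefore the product $\Co{\widehat{L}}(P)$ is, by construction, a multilinear monomial in the variables associated with the arc labels. Since $\Co{L}(P) \in \setfamily{Q}{i}{r}{v}$, this monomial appears as one of the summands defining $\polynfpath[i,r,v]$ in \Cref{tag:function1}, and the first assertion of \Cref{lem:function-equivalence} then transfers it into $\polynf[i,r,v]$, producing the desired multilinear monomial.

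For the $(\Rightarrow)$ direction, I will take any multilinear monomial $M$ appearing in $\polynf[i,r,v]$ and apply the second assertion of \Cref{lem:function-equivalence} to conclude that $M$ is also a monomial of $\polynfpath[i,r,v]$. Unfolding the defining sum of $\polynfpath[i,r,v]$ then yields at least one path $P$ with $\Co{L}(P) \in \setfamily{Q}{i}{r}{v}$; since $\setfamily{Q}{i}{r}{v}$ is by definition the image of $\setfamily{F}{i}{r}{v}$ under $\Co{L}$, such a $P$ must arise from some $P' \in \setfamily{F}{i}{r}{v}$, forcing the family to be non-empty.

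Since \Cref{lem:function-equivalence} has already done all of the combinatorial work, no real obstacle remains. The one point that I will briefly flag for the $(\Rightarrow)$ direction is that the summands of $\polynfpath[i,r,v]$ are indexed by paths rather than by monomials, so distinct paths could in principle contribute the same multilinear monomial; however, because the coefficients live in a ring without cancellation for nonnegative counts, the presence of a multilinear monomial in $\polynfpath[i,r,v]$ genuinely witnesses at least one path in $\setfamily{F}{i}{r}{v}$, and the argument goes through cleanly.
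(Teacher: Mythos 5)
Your proposal is correct and follows essentially the same route as the paper: both directions are obtained by unfolding the definitions of $\setfamily{F}{i}{r}{v}$, $\setfamily{Q}{i}{r}{v}$, and $\polynfpath[i,r,v]$ and invoking the two halves of \Cref{lem:function-equivalence}. Your explicit remark that $\Co{\widehat{L}}(P)$ is multilinear because the $i-r$ labeled arcs of $P$ carry distinct labels is a small but welcome addition that the paper leaves implicit.
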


\begin{proof} By \Cref{lem:function-equivalence}, we know that every multilinear monomial in $\polynf[i,r,v]$ is a monomial in $\polynfpath[i,r,v]$. Thus, if there is a multilinear monomial in $\polynf[i,r,v]$, then $\setfamily{F}{i}{r}{v}\neq \emptyset$. Conversely, if there exists a path $P$ in \setfamily{F}{i}{r}{v}, then there is a monomial $\Co{\widehat{L}}(P)$ in $\polynfpath[i,r,v]$, and by  \Cref{lem:function-equivalence}, $\Co{\widehat{L}}(P)$ exists in $\polynf[i,r,v]$.
\end{proof}



Our randomized algorithm uses Corollary~\ref{multilinear-term-is-path} in the following manner: It tests if the polynomial $\polynf[k+1,k^{\star}+1, t]$ contains a 
 a multilinear monomial, as that would be a sufficient condition to conclude that \setfamily{F}{k+1}{k^{\star}+1}{t} is non-empty, i.e., there is an $s$ to $t$ path on $k+2$ vertices in which exactly $k-k^{\star}$ arcs have distinct labels. 
%
%
%
%
%
%
%
%
%
%
%
%
Towards this, we will construct an {\it arithmetic circuit} for the polynomial $\polynf[k+1,k^{\star}+1, t]$ and use a result by Williams~\cite{Williams09} to test if it has a multilinear monomial. We begin by formally defining an arithmetic circuit.

\begin{definition}
 An {\em arithmetic circuit} $C$ over a commutative ring $R$ is a simple labeled directed acyclic graph with 
its  internal nodes are labeled by $+$ or $\times$ and leaves (in-degree zero nodes) are labeled from 
$X\cup R$, where $X = \{x_1,\ldots,x_n\}$, a set of variables. There is a node of out-degree zero, 
called the root node or the output gate. 
\end{definition}


\begin{proposition}\label{prop:Williams}{\rm \cite[Theorem~3.1]{Williams09}}
Let $P(x_{1},\ldots,x_{n})$ be a polynomial of degree at most $d$, represented by an arithmetic circuit of size $s(n)$ with $+$ gates (of unbounded fan-in), $\times$ gates (of fan-in two), and no scalar multiplications. There is a randomized algorithm that on every $P$ runs in $\OO^{\star}(2^{d}s(n))$ time, outputs
 ``yes'' with high probability (at least $2/3$) if there is a multilinear term in the sum-product expansion of $P$, and always outputs ``no'' if there is no multilinear term.
\end{proposition}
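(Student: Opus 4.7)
The plan is to follow the group-algebra randomization strategy of Koutis and Williams: evaluate the arithmetic circuit for $P$ over a cleverly chosen commutative $\mathbb{F}_{2^\ell}$-algebra of dimension $2^d$ in which non-multilinear monomials vanish deterministically, while multilinear monomials survive a random substitution with constant probability. Choose $\ell$ so that $2^\ell \geq 3d$ and work in
\[
R := \mathbb{F}_{2^\ell}[z_1, \ldots, z_d]\,\big/\,\langle z_1^2, \ldots, z_d^2\rangle,
\]
a $2^d$-dimensional commutative algebra with basis $\{z_S : S \sse [d]\}$, where $z_S := \prod_{j \in S} z_j$ and $z_S \cdot z_T$ equals $z_{S \cup T}$ if $S \cap T = \emptyset$ and $0$ otherwise.

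For each input variable $x_i$ of $P$, I draw independently uniform $\alpha_{i,1}, \ldots, \alpha_{i,d} \in \mathbb{F}_{2^\ell}$ and substitute $x_i \mapsto v_i := \sum_{j=1}^{d} \alpha_{i,j} z_j$. Because the characteristic is $2$ and $z_j^2 = 0$, we have $v_i^2 = \sum_j \alpha_{i,j}^2 z_j^2 = 0$ deterministically; hence every non-multilinear monomial of the sum--product expansion of $P$ contributes $0$ to the evaluation $P(v_1, \ldots, v_n) \in R$. In particular, if $P$ contains no multilinear monomial, the evaluation is $0$ with probability $1$, giving the one-sided ``never a false positive'' guarantee.

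For the converse, let $m^{\star} = \prod_{i \in T^{\star}} x_i$ be a nonzero multilinear monomial of $P$ of maximum multilinear degree $d^{\star} := |T^{\star}|$, and fix any bijection $\sigma^{\star} : T^{\star} \to [d^{\star}]$. The contribution of $m^{\star}$ to the coefficient of $z_{\{1, \ldots, d^{\star}\}}$ in $P(v_1, \ldots, v_n)$ is $\mathrm{perm}\bigl((\alpha_{i,j})_{i \in T^{\star},\, j \in [d^{\star}]}\bigr)$, which contains the formal $\alpha$-monomial $\prod_{i \in T^{\star}} \alpha_{i, \sigma^{\star}(i)}$ with coefficient $1$. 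The crucial non-cancellation observation is that no other multilinear monomial $\prod_{i \in T} x_i$ of $P$ can reproduce this $\alpha$-monomial, because its contribution involves only $\alpha$-variables with first index in $T$, and $T \neq T^{\star}$ rules it out. Thus the coefficient of $z_{\{1, \ldots, d^{\star}\}}$ in $P(v_1, \ldots, v_n)$ is a nonzero polynomial in $\{\alpha_{i,j}\}$ of total degree at most $d$, and Schwartz--Zippel over $\mathbb{F}_{2^\ell}$ yields $P(v_1, \ldots, v_n) \neq 0$ with probability at least $1 - d/2^\ell \geq 2/3$.

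To implement the algorithm, I traverse the circuit bottom-up, representing each intermediate value as a length-$2^d$ vector over $\mathbb{F}_{2^\ell}$. Each $+$-gate costs $O(2^d)$ field operations (componentwise addition), and each fan-in-$2$ $\times$-gate performs a subset convolution in $R$, computable in $\widetilde{O}(2^d)$ field operations via the Bj\"orklund--Husfeldt--Kaski--Koivisto fast subset-convolution algorithm; absorbing the $\mathrm{poly}(\ell, d) = \mathrm{polylog}(d)$ factors per field operation into the $\OO^{\star}$ notation yields the claimed $\OO^{\star}(2^d \cdot s(n))$ running time. The main obstacle is the non-cancellation lemma in the previous paragraph: a priori the contributions of distinct multilinear monomials to the same $z_S$-coefficient could cancel, and isolating a single leading $\alpha$-monomial uniquely attributable to the maximum-degree multilinear $m^{\star}$ is precisely what rules that out.
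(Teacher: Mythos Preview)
The paper does not prove this proposition; it is cited as a black box from Williams (2009). Your argument follows the Koutis--Williams algebraic strategy, and the ``no false positives'' direction, the choice of the algebra $R$, the squaring computation $v_i^2=0$, and the running-time analysis via fast subset convolution are all fine. There is, however, one genuine gap in the ``yes with probability $\geq 2/3$'' direction.

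Your non-cancellation step actually shows that the $\alpha$-monomial $\prod_{i\in T^\star}\alpha_{i,\sigma^\star(i)}$ occurs in the coefficient of $z_{[d^\star]}$ with coefficient equal to $c_{T^\star}$, the \emph{integer} coefficient of $m^\star$ in the sum--product expansion of $P$ --- not with coefficient $1$ as you write. The hypothesis only guarantees $c_{T^\star}>0$ over $\mathbb{Z}$; if $c_{T^\star}$ is even (for instance the circuit $(x_1\times x_2)+(x_1\times x_2)$, whose sum--product expansion is $2x_1x_2$), then $c_{T^\star}\equiv 0$ in $\mathbb{F}_{2^\ell}$, the entire $\alpha$-polynomial you isolate may vanish identically, and your algorithm outputs ``no'' with probability $1$ --- a false negative. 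Williams' fix is to attach an independent random scalar $y_e\in\mathbb{F}_{2^\ell}$ to every input wire of every $+$-gate before evaluating: distinct proof trees of the circuit then carry distinct $y$-monomials, so no two proof trees producing the same $x$-monomial can cancel in characteristic $2$, and a joint Schwartz--Zippel argument in the $y$'s and the $\alpha$'s (taking $\ell$ with $2^\ell$ exceeding the total degree, which is still polynomial in $s(n)$ and $d$) recovers the stated success probability. With that one addition your proof goes through.
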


The number of variables in $\polynf[k+1,k^{\star}+1,t]$ is $(k^{\star}-1)(m-1)$. We prove that the size, denoted by  $s((k^{\star}-1)(m-1))$, of an arithmetic circuit that represents $\polynf[k+1,k^{\star}+1,t]$ is bounded by a  polynomial function in $n$ and $m$, and that it can be constructed in time $(n+m)^{\OO(1)}$.  

\begin{lemma}
\label{circuit-size}
For each $ i\in\{1, \ldots, k+1\}$, $r\in \{1, \ldots, \min\{i, k^{\star}+1\}\}$, and $v\in V(H) \setminus \{s\}$, the polynomial $\polynf[i,r,v]$ is represented by an arithmetic circuit whose size is bounded by ${\rm poly}(n+m)$ and can be constructed in that time. Moreover, for each $ i\in \{1, \ldots, k+1\}$, $r\in \{1, \ldots, \min\{i, k^{\star}+1\}\}$, 
the degree of each monomial in $\polynf[i,r,v]$ is $i-r$. 
\end{lemma}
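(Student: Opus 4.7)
My plan is to build the arithmetic circuit bottom-up, following the recursive definition in \Cref{tag:function2base} and \Cref{tag:function2}. For each admissible triple $(i,r,v)$, I would introduce a gate $g_{i,r,v}$ computing $\polynf[i,r,v]$. The base cases of \Cref{tag:function2base} are realized as constant leaves labeled $0$ or $1$. For the recursive step in \Cref{tag:function2}, for every $w \in N^-(v)$ with $\win{w} \neq p$, introduce an unbounded fan-in $+$ gate that sums the label variables $\eLabel{\win{w},1}, \ldots, \eLabel{\win{w}, k^\star-1}$, followed by a fan-in two $\times$ gate that multiplies this sum gate with the previously constructed gate $g_{i-1, r, w}$; for every $w \in N^-(v)$ with $\win{w} = p$, simply reuse $g_{i-1, r-1, w}$. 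Finally, $g_{i,r,v}$ is the $+$ gate that collects the outputs of all such subcircuits. Processing triples in order of increasing $i$ guarantees that every sub-gate referenced has already been defined.

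Next I would bound the total size. The number of output gates $g_{i,r,v}$ is at most $(k+1)(k^\star+2)\cdot|V(H)| = O(n^4)$, using $k, k^\star \leq n$ and $|V(H)| = O(n^2)$. For each recursive gate, the newly added local subcircuit contributes at most $|N^-(v)| \cdot O(k^\star) = O(n^2 \cdot k)$ new gates and wires (one $+$ gate aggregating $k^\star-1$ label leaves and one $\times$ gate per labeled in-neighbor, plus one $+$ gate aggregating everything). Thus the total circuit size is bounded by a polynomial in $n+m$, and the construction can clearly be executed in time proportional to this size (the adjacency structure of $H$ and the values $\win{\cdot}$ can be precomputed in polynomial time from the input).

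To establish the degree claim, I would proceed by induction on $i$. For the base case $i=1$, $\polynf[1,r,v]$ is either the constant $1$ or the constant $0$, which we treat as the zero polynomial; in the non-trivial case $i=r=1$, the degree is $0 = i-r$. For $i \geq 2$, every monomial in the first summand of \Cref{tag:function2} has the form $M \cdot \eLabel{\win{w}, j}$ where $M$ is a monomial of $\polynf[i-1, r, w]$; by the induction hypothesis $\deg(M) = (i-1) - r$, so the product has degree $i - r$. Similarly, each monomial of the second summand is a monomial of $\polynf[i-1, r-1, w]$, which by induction has degree $(i-1) - (r-1) = i - r$. Hence every monomial of $\polynf[i,r,v]$ has degree exactly $i-r$, as required.

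The only mild technicality is respecting the fan-in two restriction on $\times$ gates and handling triples that evaluate to the zero polynomial consistently; both are routine since the only products introduced are between the local label-sum gate and the recursively computed $g_{i-1,r,w}$ (fan-in exactly two), and multiplication by or addition with the zero constant is valid in the arithmetic circuit model. I expect this bookkeeping to be the most delicate step, but it is purely syntactic, so no serious obstacle is anticipated.
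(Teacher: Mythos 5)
Your construction is essentially identical to the paper's: both build the circuit bottom-up from the recursion in \Cref{tag:function2base}--\Cref{tag:function2}, using a $+$ gate over the $k^\star-1$ label variables, a fan-in-two $\times$ gate per labeled in-neighbor, and a final aggregating $+$ gate, and both bound the size by a per-triple cost of ${\rm poly}(n+m)$ times the $\OO(k^2 n^2)$ choices of $(i,r,v)$. The only difference is that you spell out the induction for the degree claim, which the paper explicitly omits as straightforward; your argument for it is correct.
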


\begin{proof}
The recursive formula given in \Cref{tag:function2} actually describes the circuit representing the function $\polynf[i,r,v]$. We can inductively describe the construction as follows.  

In the base case, the arithmetic circuits have just one (input) gate which is either $0$ or $1$ (See \Cref{tag:function2base}).

Suppose that we have a family of polynomial sized arithmetic circuits representing each of the polynomials in $\Co{A}_{1}=\{\polynf[i-1,r,w] : w \in N^{-}(v), \win{w}\ne p\}$, the set of polynomials $\polynf[i-1,r,w]$ such that the arc $(w,v)$ in $H$ is labeled, as well as those in $\Co{A}_{2}=\{\polynf[i-1,r,w] : w \in N^{-}(v), \win{w}= p\}$, the set of polynomials $\polynf[i-1,r,w]$ such that the arc $(w,v)$ is unlabeled. Then, the circuit representing the polynomial $\polynf[i,r,v]$ can be constructed from the circuits representing the polynomials in $\Co{A}_{1}$ and $\Co{A}_{2}$ as follows: For each  circuit in $\Co{A}_{1}$, we take additional input variables $\{\eLabel{\win{w}, j}: 1\leq j\leq k^{\star}-1\}$ and use a $+$  gate to add these variables followed by a $\times$ gate that multiplies the obtained sum with the output of the $\polynf[i-1,r,w]$-circuit. After doing this for each element of $\Co{A}_{1}$, we use additional $+$ gates to add the outputs so obtained to the outputs of the circuits in $\Co{A}_{2}$. Since there are $\OO(k^{\star}m)$ arc labels in total, and the in-degree of any vertex in $H$ is $\OO(n^2)$ (where $n=\vert V(G)\vert$), the number of new gates added to construct a circuit for $\polynf[i,r,v]$ from the previously computed 
circuits for $i'=i-1$, is upper bounded by $\OO(k^{\star}m n^2)$. As the number of choices for $i,r$ and $v$ is bounded by $\OO(k^2n^2)$, the total number of gates we created is bounded by a polynomial function in $m+n$, since $k\leq n$. 
Therefore, the circuit for $\polynf[i,r,v]$ has size polynomial in $n+m$ and is constructable in time $(n+m)^{\OO(1)}$.  

The second statement of the lemma can be proved using a straightforward induction on $i$ and the proof is omitted. 
\end{proof}

Hence, for the purpose of our algorithm, we may assume that we have a circuit that represents the polynomial $\polynf[k+1,k^{\star}+1,t]$. Our algorithm can be described in a snapshot as follows: it uses the circuit for $\polynf[k+1,k^{\star}+1,t]$ to decide if there exists a multilinear monomial in the polynomial, and returns an answer accordingly. 



\hide{
\begin{supress}
\il{From here the rest appears in Appendix. I think we leave it here}
}

Our algorithm constructs the graph $H$ as described at the beginning of \Cref{sec:gm_path}, pp~\pageref{sec:gm_path}. Then, it constructs the arithmetic circuit that represents the polynomial $\polynf[k+1,k^{\star}+1,t]$, as described in \Cref{circuit-size}. Following that we apply the algorithm described in \Cref{prop:Williams}, to detect the existence of a multilinear term in $\polynf[k+1,k^{\star}+1,t]$, and return the answer accordingly.
Proof of \Cref{lemma:rand-algo} completes the analysis. Now we are ready to prove \Cref{lemma:rand-algo}. 

\begin{proof}[Proof of \Cref{lemma:rand-algo}] Let $\Co{I}=(G,\Co{C}, \{w_{v}\}_{v\in V},\eta, p, k^{\star}, k)$ denote the given instance of \targm. By \Cref{circuit-size}, we know that the size of the circuit representing polynomial $\polynf[k+1, k^{\star}+1,t]$ is polynomial, hence the $s(\cdot)$ function in \Cref{prop:Williams} is bounded by a polynomial in $(n+m)$. Thus, we can conclude that our algorithm runs in time $\OO^{\star}(2^{k-k^{\star}})$. Next, we will prove that our randomized algorithm has one-sided error. 

Suppose that $\Co{I}$ is a \yes-instance of \targm. We will prove that with high probability, our algorithm will return ``yes''.  Since $\Co{I}$ is a \yes-instance, by the definition of $\setfamily{F}{k+1}{k^{\star}+1}{t}$ and Lemma~\ref{lem:stpath}, we have that $\setfamily{F}{k+1}{k^{\star}+1}{t} \neq \emptyset$. Thus, by  Corollary~\ref{multilinear-term-is-path}, $\polynf[k+1, k^{\star}+1,t]$ contains a multilinear monomial. By   \Cref{circuit-size}, 
we know that the degree of each monomial in $\polynf[k+1, k^{\star}+1,t]$  is $k-k^{\star}$.  Thus, 
by \Cref{prop:Williams}, our algorithm outputs ``yes'' with probability   at least $1/3$. 
Suppose that our algorithm returns ``yes''. Then, by  Corollary~\ref{multilinear-term-is-path}, $\setfamily{F}{k+1}{k^{\star}+1}{t} \neq \emptyset$. 
By Lemma~\ref{lem:stpath} and the definition of $\setfamily{F}{k+1}{k^{\star}+1}{t}$, $\Co{I}$ is a \yes-instance of \targm. 
\end{proof}

\hide{\il{}
\end{supress}}









\subsubsection{{\bf Multiple winners in a district}}\label{sec:multiwinner}
In the presence of multiple winners in a district, our above analysis can be modified to yield a result analogous to Theorem~\ref{thm:detfpt} without the application of a tie-breaking rule.
\begin{figure}
\definecolor{ffvvqq}{rgb}{1,0.3333333333333333,0}
\definecolor{qqwuqq}{rgb}{0,0.39215686274509803,0}
\definecolor{ududff}{rgb}{0.30196078431372547,0.30196078431372547,1}
\begin{center}
\includegraphics[scale=0.355]{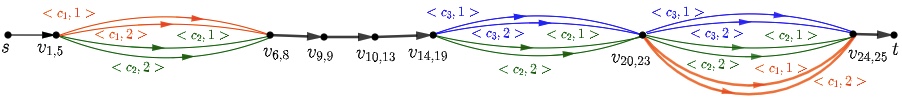}
\caption{Depicts $H'$ where $n=25$, $k=7$, $k^\star=3$, $\Co{C}=\{p,c_1,c_2,c_3\}$, $\chi(c_1) =$ {\color{ffvvqq}orange}, $\chi(c_2) =${\color{qqwuqq}green}, and $\chi(c_3) =${\color{ududff}blue}. Candidates ${\color{ffvvqq}c_1}$ and ${\color{qqwuqq}c_2}$ win in district $P_{1,5}$; ${\color{qqwuqq}c_2}$ and ${\color{ududff}c_3}$ win in $P_{14,19}$; ${\color{ffvvqq}c_1}$, ${\color{qqwuqq}c_2}$and ${\color{ududff}c_3}$ win in $P_{20,23}$.
Candidate $p$ wins in districts $P_{6,8}$, $P_{9,9}$,$P_{10,13}$, and $P_{24,25}$.}
\label{fig:multi-winner}
\end{center}
\end{figure}

Specifically, the auxiliary graph $H$ (\Cref{sec:gm_path}) will have labeled arcs that have different colors, where each color represents a single candidate who wins the district represented by the tail vertex of the arc. Formally stated, there exists a coloring function $\chi:\!\Co{C}\!\rightarrow \{1, \ldots, m\}$. Suppose that for some $\{i,j\} \sse \{1, \ldots, n\}$ there are at least two winners in district $P_{i,j}$ other than $p$. Then consider all the out-going arcs from vertex $v_{i,j}$ in $H$. For any $r\in \{j+1, \ldots, n\}$, every arc from  $v_{i,j}$ to $v_{j+1, r}$ must be labelled.  For a candidate $c\in \Co{C} \sm \{p\}$ such that $c$ wins the district $P_{i,j}$, every arc from $v_{i,j}$ to $v_{j+1,r}$ is colored by $\chi(c)$ and labeled as before $\eLabel{c,1},\ldots, \eLabel{c,k^{\star}-1}$. 
(Note that we do not store the information that $p$ also won this district.) 

The goal now is to decide if there exists a subgraph $H'$ on $k+2$ vertices (i.e a spanning subgraph of $H$) with possible parallel arcs, that contains a directed (spanning) path from $s$ to $t$ (i.e passing through every vertex in $V(H')$) such that the path has $k^{\star} +1$ unlabeled arcs and all other arcs must have distinct labels (as depicted in Figure~\ref{fig:multi-winner}). Moreover, the parallel arcs in $H'$ must be of the following form: for the vertex $ v_{i,j} \in V(H')$ and $r\in \{j+1, \ldots, n\}$, such that $ v_{i,j}$ has out-going labeled arcs to $v_{j+1,r}$ in $H$ with more than one color, we require that in $H'$, the set of edges leaving $v_{i,j}$. contain one arc of each color that leaves $v_{i,j}$ in $H$. 

By modifying the definition of the set families $\setfamily{F}{i}{r}{v}$ and $\setfamily{Q}{i}{r}{v}$, and accordingly the representative family $\setfamily{\widehat{Q}}{i}{r}{v}$, we obtain algorithms that can deal with multiple winners in a district.  The increase in the number of arcs in the graph $H$ and the possibly parallel arcs in the $s$ to $t$ path-like subgraph $H'$, leads to an increase in the size of the labels stored in $\setfamily{Q}{i}{r}{v}$, an increase by a factor of $\alpha$ where $\alpha$ is the maximum number of winners (besides $p$) in a district. Arguing as in the proof of \Cref{thm:dettrgm}, yields the property that the size of $\setfamily{\widehat{Q}}{i}{r}{v}$ is at most ${(k-k^{\star})\alpha \choose (k-k^{\star})\alpha - (i-r)\alpha} \leq 2^{k\alpha} $. 
We note that interpreting $\setfamily{Q}{i}{r}{v}$ as a multivariate polynomial yields a polynomial with at most $k^{\star}m\alpha$ variables and the degree of each monomial is at most $(i-r)\alpha$, where $\alpha$ is the maximum number of winners (excluding $p$) in any district. Thus, by modifying appropriately, we obtain a randomized algorithm with desired properties. Since $\alpha\leq m$, the time complexities are $2.619^{km} (n+m)^{\OO(1)}$ and $2^{km}(n+m)^{\OO(1)}$, for deterministic and randomized algorithms respectively.


\newcommand{\bin}[1]{\ensuremath{{\rm binary}(#1)}\xspace}

\section{FPT Algorithm For General Graphs}\label{sec:Exact Algorithm}

In this section we will prove \Cref{theorem:exact}. 
Towards the proof of Theorem~\ref{theorem:exact}, we use polynomial algebra that carefully keeps track of the number of districts won by each candidate so that nobody wins (if at all possible) more than $p$. 
\shortversion{

\noindent
{\bf Intuition:} Suppose that we are given a \yes-instance of the problem. Of the $k$ possibilities, we first ``guess'' in a solution the number of districts that are won by the distinguished candidate $p$. Let this number be denoted by $k^\star$. Next, for every candidate $c \in \Co{C}$, we consider the family $\Co{F}_c$, the set of districts  of $V(G)$ in which  $c$ wins in each of them. These families are pairwise disjoint because each
district has a unique winner. Our goal is to find $k^\star$ disjoint sets from the family $\Co{F}_p$ and at most $k^\star-1$ disjoint sets from any other family so that in total we obtain $k$ pairwise disjoint districts that partition $V(G)$. The exhaustive algorithm to find the districts from these families would take time $\OO^\star(2^{nmk^\star})$. We reduce our problem to polynomial multiplication involving polynomial-many multiplicands, each with degree at most $\OO(2^n)$. 

\smallskip
\noindent
{\bf Why use polynomial algebra ?}\label{why-poly-algebra} Every district $S$ can be viewed as a subset of $V(G)$. Let $\chi(S)$ denotes the characteristic vector corresponding to $S$. We view $\chi(S)$ as an $n$ digit  binary number, particularly, if $u_i \in S$, then $i^\text{th}$ bit of $\chi(S)$ is $1$, otherwise $0$. A crucial observation guiding our algorithm is that two sets $S_1$ and $S_2$ are disjoint if and only if the number of $1$ in $\chi(S_1)+ \chi(S_2)$ (binary sum/modulo $2$) is equal to $|S_1|+|S_2|$. So, for each set $\Co{F}_c$, we make a polynomial $P_c(y)$, where for each set $S\in \Co{F}_c$, there is a monomial 
$y^{\chi(S)}$.  

Let $c_1$ and $c_2$ be two candidates, and for simplicity assume that each set in 
$\Co{F}_{c_1}$ has size exactly $s$ and each set in $\Co{F}_{c_2}$ has size  exactly $t$. Let $P^\star(y)$ be the polynomial obtained by multiplying  $P_{c_1}(y)$ and $P_{c_2}(y)$; and let $y^z$ be a monomial of 
 $P^\star$. Then, the $z$  has exactly $s+t$ ones if and only if ``the sets which corresponds to $z$ are disjoint''. Thus, the polynomial method allows us to capture disjointness and hence, by multiplying appropriate subparts of polynomial described above, we obtain our result. Furthermore, note that  $\chi(S) \in \{0,1\}^n$, throughout the process as they correspond to some set in $V(G)$, and hence the maximum degree of the considered polynomials is upper bounded by $2^n$. Hence, the algorithm itself is about applying  an $\OO(d \log d)$ algorithm to multiply two polynomials of degree $d$;  here $d\leq 2^n$.  Thus, we obtain an algorithm that runs in time $2^n (n+m)^{\OO(1)}$.

}
Due to \Cref{lem:tgm-gm} it is sufficient to prove the following. 
\begin{theorem}\label{lem-exact-algo}
 There is an algorithm that given an instance $I$ of \targm and a tie-breaking rule $\eta$, runs in time $2^n |I|^{\OO(1)}$, and solves the instance $I$.
 \end{theorem}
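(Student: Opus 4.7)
My plan is to reduce the problem to a polynomial-product computation in which the coefficient of a single carefully chosen monomial certifies the existence of a valid partition. By Lemma~\ref{lem:tgm-gm}, it suffices to solve \targm. First I would enumerate all $2^n$ subsets of $V(G)$, discard the disconnected ones (each check is polynomial-time), and, for every connected subset $S$, apply the tie-breaking rule $\eta$ to determine the unique winning candidate $c$ and place $S$ into $\Co{F}_c$. This preprocessing runs in $2^n (n+m)^{\OO(1)}$ time and partitions the set of candidate districts by their winners.

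Next I would encode each district $S$ as a monomial that simultaneously tracks three quantities: its characteristic vector $\chi(S)$, viewed as an integer in $\{0,\dots,2^n-1\}$; its size $|S|$; and a unit count. Formally, for each $c \in \Co{C}$ define
$$P_c(y,z,w) \;=\; w \sum_{S \in \Co{F}_c} z^{|S|}\, y^{\chi(S)},$$
with ordinary integer addition on exponents in all three variables. I would then compute
$$R(y,z,w) \;=\; P_p(y,z,w)^{k^\star} \cdot \prod_{c \in \Co{C} \setminus \{p\}} \Bigl(\sum_{j=0}^{k^\star - 1} P_c(y,z,w)^j\Bigr)$$
and output \yes iff the coefficient of $w^{k}\, z^{n}\, y^{2^n - 1}$ in $R$ is nonzero. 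Conceptually, $R$ enumerates ordered tuples consisting of $k^\star$ sets from $\Co{F}_p$ together with at most $k^\star - 1$ sets from each other $\Co{F}_c$; the target monomial selects exactly those tuples that have $k$ sets totalling $n$ vertices and whose characteristic vectors sum (as integers) to $2^n - 1$.

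The correctness rests on the following partition lemma, which I would prove as the technical heart of the argument: for any sets $S_1,\dots,S_r \subseteq V(G)$, the equalities $\sum_i \chi(S_i) = 2^n - 1$ and $\sum_i |S_i| = n$ hold simultaneously if and only if $\{S_1,\dots,S_r\}$ partitions $V(G)$. The forward direction is immediate since a partition produces no binary carries. For the reverse, I would use the elementary fact that for any nonnegative integers $b_0,\dots,b_{n-1}$, the integer $N = \sum_j b_j 2^j$ satisfies $\mathrm{popcount}(N) \le \sum_j b_j$ with equality iff every $b_j \in \{0,1\}$; applied with $b_j = |\{i : v_j \in S_i\}|$, the popcount-$n$ sum $2^n-1$ forces each $b_j \in \{0,1\}$, hence pairwise disjointness, and then $\sum_i \chi(S_i) = 2^n - 1$ forces $\bigcup_i S_i = V(G)$. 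Because each $S \in \Co{F}_c$ already has $c$ as its $\eta$-winner, a positive coefficient of the target monomial in $R$ is equivalent to $\Co{I}$ being a \yes-instance of \targm.

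For the running time I would either manipulate the trivariate polynomial directly or fold the triple $(1,|S|,\chi(S))$ into a single integer exponent using sufficiently large powers of two as separators; either way, the intermediate polynomials have degree $\OO(2^n \cdot \mathrm{poly}(n))$, so each FFT-based multiplication runs in $2^n (n+m)^{\OO(1)}$ time. Since computing $R$ requires only $\OO(mk)$ such multiplications, the overall running time is $2^n (n+m)^{\OO(1)}$ as required. I expect the main obstacle to be formalizing the partition lemma (particularly the no-carry argument) and choosing the separator constants when folding into a univariate FFT so that the degree remains $\OO(2^n \cdot \mathrm{poly}(n))$; the rest is routine bookkeeping.
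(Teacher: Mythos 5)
Your proof is correct, but it takes a genuinely different route from the paper's. The paper also reduces \targm to FFT-based polynomial multiplication over characteristic-vector exponents, but it enforces disjointness \emph{locally}: after every single multiplication it applies a Hamming projection (keeping a monomial $y^{\chi(S_1)+\chi(S_2)}$ only when its Hamming weight equals $|S_1|+|S_2|$) followed by a representative-polynomial truncation, so every intermediate exponent stays in $\{0,1\}^n$ and genuinely encodes a disjoint union of districts; the computation is then organized as a nested dynamic program (polynomials $Q_{1,j}^s$ and $T_{k^\star+\ell}$) indexed by the candidate, that candidate's running win count, the total number of districts, and the number of vertices used so far. You instead defer all checking to the very end: you carry the total cardinality in $z$ and the district count in $w$, form the closed-form product $P_p^{k^\star}\prod_{c\neq p}\bigl(\sum_{j=0}^{k^\star-1}P_c^j\bigr)$, and certify correctness with one global no-carry lemma ($\sum_i\chi(S_i)=2^n-1$ together with $\sum_i|S_i|=n$ forces a partition). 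That lemma is right --- it is the same Hamming-weight disjointness criterion the paper uses per step (its Observation that $\Co{H}(\chi(S_1)+\chi(S_2))=|S_1|+|S_2|$ implies disjointness), applied once globally --- and the remaining worries are benign: repeated or overlapping sets in a tuple cannot reach the target monomial, all coefficients are nonnegative so nothing cancels, and the ordered-tuple overcounting only scales coefficients. The trade-off is that the paper's per-step projection keeps every intermediate degree below $2^n$ and all coefficients in $\{0,1\}$, whereas your folded univariate polynomials have degree $2^n\cdot\mathrm{poly}(n+m)$ and coefficients with $\mathrm{poly}(n+m)$ bits, which still fits in $2^n(n+m)^{\OO(1)}$ time (and you could truncate coefficients to $0/1$ after each multiplication if you wanted to match the paper); what you gain is a substantially shorter correctness argument that avoids the paper's delicate bookkeeping over the loop ranges for $j$ and $\ell$.
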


Before we discuss our algorithm, we must introduce some notations and terminologies. The {\em characteristic vector} of a set $S\subseteq U$, denoted by $\chi(S)$, is an $|U|$-length vector whose $i^{\text {th}}$ bit is $1$ if $u_i \in S$, otherwise $0$.  Two binary strings $S_1, S_2 \in \{0,1\}^{n}$ are said to be disjoint if for each $i\in \{1,\ldots,n\}$, the $i^{th}$ bit of $S_{1}$ and $S_{2}$ are different.  The {\em Hamming weight} of a binary string $S$, denoted by $\Co{H}(S)$, is defined to be the number of $1$s in the string $S$. 

\begin{observation}\label{obs:disjoint-binary-vectors}
Let $S_1$ and $S_2$ be two binary vectors, and let $S=S_1+S_2$. If $\Co{H}(S)=\Co{H}(S_1)+\Co{H}(S_2)$, then $S_1$ and $S_2$ are disjoint binary vectors. 
\end{observation}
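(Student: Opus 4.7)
The statement is really a coordinate-wise counting identity, so the plan is to reduce it to a position-by-position case analysis. First, I would fix the convention: since the earlier discussion in the paper speaks of ``$\chi(S_1)+\chi(S_2)$ (binary sum/modulo $2$)'', I read $S = S_1+S_2$ as the bitwise XOR. Let $S_1=(a_1,\ldots,a_n)$ and $S_2=(b_1,\ldots,b_n)$ with $a_i,b_i\in\{0,1\}$, so the $i$-th bit of $S$ is $a_i\oplus b_i$. The Hamming weight of any $\{0,1\}$-vector is then just the integer sum of its bits.

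The key step is to compare the two quantities coordinate by coordinate. I would write
\[
\Co{H}(S_1)+\Co{H}(S_2)\;=\;\sum_{i=1}^{n}(a_i+b_i), \qquad \Co{H}(S)\;=\;\sum_{i=1}^{n}(a_i\oplus b_i),
\]
and check the four possible values of the pair $(a_i,b_i)$. In the three cases $(0,0),(0,1),(1,0)$ one has $a_i+b_i=a_i\oplus b_i$, while in the remaining case $(1,1)$ one has $a_i+b_i=2$ but $a_i\oplus b_i=0$. Subtracting coordinate-wise therefore gives
\[
\Co{H}(S_1)+\Co{H}(S_2)-\Co{H}(S)\;=\;2\cdot\bigl|\{\,i\in\{1,\ldots,n\}:a_i=b_i=1\,\}\bigr|.
\]

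The hypothesis $\Co{H}(S)=\Co{H}(S_1)+\Co{H}(S_2)$ forces the right-hand side to vanish, so no index $i$ satisfies $a_i=b_i=1$; equivalently, at every coordinate at most one of $S_1,S_2$ has a one, which is exactly the definition of disjoint binary vectors given in the paragraph above the observation. The argument is genuinely elementary; the only thing to be careful about is pinning down the addition convention up front, and (should the convention instead be coordinate-wise integer addition without carry) redoing the same case check, where the entry at position $i$ becomes $a_i+b_i\in\{0,1,2\}$ and $\Co{H}(S)$ counts the nonzero coordinates, giving the identical conclusion. There is no serious obstacle beyond this bookkeeping.
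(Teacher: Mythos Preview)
Your argument is correct, and the paper states this as an Observation without proof, so there is nothing in the text to compare your approach against; a coordinate-wise case check is exactly the natural justification.

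One minor point on the convention: in the algorithm of Section~\ref{sec:Exact Algorithm} the binary vectors occur as exponents of $y$, and polynomial multiplication adds exponents as ordinary integers (with carry), not bitwise modulo~$2$. The statement still holds under that reading: the standard base-$2$ digit-sum identity $\Co{H}(a)+\Co{H}(b)-\Co{H}(a+b)=(\text{number of carries when adding }a\text{ and }b)$ shows that equality forces zero carries, and zero carries means no position has a $1$ in both summands. Your XOR analysis already isolates the same mechanism, and all three conventions you and the paper mention coincide whenever the vectors are disjoint, so this is purely bookkeeping rather than a gap.
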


\begin{proposition}{\rm \cite{DBLP:journals/tcs/CyganP10}}\label{prop:disjoint-set}
Let $S=S_{1} \cup S_{2}$, where $S_1$ and $S_2$ are two disjoint subsets of the set $V=\{v_{1}, \ldots, v_{n}\}$. Then, $\chi(S)=\chi(S_1)+\chi(S_2)$ and $\Co{H}(\chi(S))=\Co{H}(\chi(S_1))+\Co{H}(\chi(S_2))=|S_1|+|S_2|$.
\end{proposition}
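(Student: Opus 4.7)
The plan is to verify both equalities by a direct bit-by-bit analysis of the characteristic vectors, exploiting the disjointness hypothesis $S_1 \cap S_2 = \emptyset$. First I would fix an index $i \in \{1,\ldots,n\}$ and examine the $i^{\text{th}}$ bit of each of $\chi(S)$, $\chi(S_1)$, and $\chi(S_2)$ by a case split on whether $v_i$ belongs to $S_1$, to $S_2$, or to neither (the fourth case, $v_i \in S_1 \cap S_2$, is ruled out by disjointness). In the first case the $i^{\text{th}}$ bit of $\chi(S_1)$ is $1$ and that of $\chi(S_2)$ is $0$, so their sum is $1$, which matches the $i^{\text{th}}$ bit of $\chi(S)$ since $v_i \in S$; the second case is symmetric; in the third case both bits are $0$ and so is the $i^{\text{th}}$ bit of $\chi(S)$. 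This establishes $\chi(S)=\chi(S_1)+\chi(S_2)$ coordinatewise, and since disjointness prevents any $1+1$ collisions, the addition produces no carries, so the identity holds as an ordinary vector sum (not merely modulo $2$).

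For the Hamming weight identity, I would use the definition $\Co{H}(\chi(T))=|T|$ valid for any $T\subseteq V$, which follows directly from the definition of the characteristic vector: the number of $1$s in $\chi(T)$ is precisely the number of indices $i$ with $v_i\in T$. Applying this to $S$, $S_1$, and $S_2$, it remains to check that $|S|=|S_1|+|S_2|$. This is the standard inclusion–exclusion identity $|S_1\cup S_2|=|S_1|+|S_2|-|S_1\cap S_2|$ combined with the hypothesis $S_1\cap S_2=\emptyset$, which zeros out the intersection term.

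Since both conclusions reduce to elementary set-theoretic and coordinatewise arguments, I do not anticipate any substantive obstacles; the only point that requires a moment of care is noting that the coordinatewise sum $\chi(S_1)+\chi(S_2)$ does not overflow a single bit, which is exactly where the disjointness hypothesis is used. This ensures that the equality $\chi(S)=\chi(S_1)+\chi(S_2)$ holds in $\{0,1\}^n$ viewed inside $\mathbb{Z}^n$, so that counting $1$s on the left equals summing the counts of $1$s on the right, yielding $\Co{H}(\chi(S))=\Co{H}(\chi(S_1))+\Co{H}(\chi(S_2))$, which equals $|S_1|+|S_2|$ by the observation above.
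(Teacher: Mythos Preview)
Your argument is correct and complete. Note, however, that the paper does not give its own proof of this proposition: it is stated as a cited fact from~\cite{DBLP:journals/tcs/CyganP10} and used as a black box, so there is no in-paper proof to compare against. Your direct coordinatewise case analysis is exactly the elementary verification one would expect for this statement.
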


A monomial $x^i$, where $i$ is a binary vector, is said to have Hamming weight $h$, if $i$ 
has Hamming weight $h$.
The {\em Hamming projection} of a polynomial $P(x)$ to $h$, denoted by $\Co{H}_h(P(x))$, is the sum of all the monomials of $P(x)$ which have Hamming weight $h$. We define the representative polynomial of $P(x)$, denoted by $\Co{R}(P(x))$, as the sum of all the monomials that have non-zero coefficient in $P(x)$ but have coefficient $1$ in $\Co{R}(P(x))$, i.e., it 
only remembers whether the coefficient is non-zero. We say that a polynomial $P(x)$ {\it contains the monomial} $x^i$ if the coefficient of $x^{i}$ is non-zero. In the zero polynomial, the coefficient of each monomial is $0$. 

\medskip

\noindent{\bf Algorithm.} Given an instance $I=(G,\Co{C},\{w_v\}_{v \in V(G)},p,k, k^\star)$  of \targm,  
we proceed as follows. We assume that $k^\star \geq 1$, otherwise $k=1$ and it is a trivial instance.

For each candidate $c_i$ in \Co{C}, we construct a family $\Co{F}_{i}$ that contains all possible districts won by $c_i$. Due to the application of tie-breaking rule, we may assume that every district has a unique winner. Without loss of generality, let $c_1=p$, the distinguished candidate. Note that we want to find a family $\mathcal{S}$ of $k$ districts, that contains $k^\star$ elements of the family $\Co{F}_{1}$ and at most $k^\star -1$ elements from each of the other family $\Co{F}_{i}$, where $i>1$. Furthermore, the union of these districts gives $V(G)$ and any two districts in $\Co{S}$ are pairwise disjoint.  To find such $k$ districts, we use the method of polynomial multiplication appropriately. Due to Observation~\ref{obs:disjoint-binary-vectors} and Proposition~\ref{prop:disjoint-set}, we know that subsets $S_{1}$ and $S_{2}$ are disjoint if and only if the Hamming weight of the monomial $y^{\chi(S_1)+\chi(S_2)}$ is $|S_{1}|+|S_{2}|$. 

We use the following well-known result about polynomial multiplication.
\begin{proposition}{\rm \cite{moenck1976practical}}\label{prop:poly-multiplication}
There exists an algorithm that multiplies two polynomials of degree $d$ in $\OO(d \log d)$ time.
\end{proposition}

For every $i\in \{1,\ldots,m\}$, $\ell \in \{1,\ldots,n\}$, if $\Co{F}_i$ has a set of size $\ell$, then we construct a polynomial $P_i^\ell(y) = \sum_{\substack{Y \in \Co{F}_i \\ \lvert Y \rvert = \ell}}y^{\chi(Y)}$.  Next, using polynomials $P_1^\ell(y)$, where $\ell \in \{1,\ldots,n\}$, we will create a sequence of polynomials $Q_{1,j}^s$, where $j \in \{1,\ldots,k^\star -1\}$, $s \in \{j+1,\ldots,n\}$, in the increasing order of $j$, such that every monomial in the polynomial $Q_{1,j}^s$ has Hamming weight $s$. For $j=1$, we construct $Q_{1,1}^s$ 
by summing all the polynomials obtained by multiplying $P_1^{s'}$ and $P_1^{s''}$, for all possible values of $s',s'' \in \{1,\ldots,n\}$ such that $s'+s''=s$, and then by taking the representative polynomial of its Hamming projection to $s$. Formally, we define \[Q_{1,1}^s= \Co{R}\Big(\Co{H}_s\Big(\sum_{\substack{1\leq s',s'' \leq s \\ s'+s'' =s}}P_1^{s'} \times P_1^{s''}\Big)\Big).\]

Thus, if $Q_{1,1}^s$ contains a monomial $x^{t}$, then there exists a set $S\sse V(G)$ of size $s$ such that $t=\chi(S)$ and $S$ is formed by the union of two districts won by $c_1$. Next, for $ j \in \{2,\ldots,k^\star -1\}$ and $s\in \{j+1,\ldots,n\}$, we create the polynomial $Q_{1,j}^s$ similarly, using $Q_{1, (j-1)}^{s''}$ in place of $P_{1}^{s''}$. Formally, 
 \[Q_{1,j}^s=\Co{R}\Big(\Co{H}_s\Big(\sum_{\substack{1\leq s',s'' \leq s \\ s'+s'' =s}}P_1^{s'} \times Q_{1,(j-1)}^{s''}\Big)\Big).\] Thus, if $Q_{1,j}^s$ contains a monomial $x^{t}$, then there exists a set $S\sse V(G)$ of size $s$ such that $t=\chi(S)$ and $S$ is formed by the union of $j+1$ districts won by $c_1$. In this manner, we can keep track of the number of districts won by $c_1$. Next, we will take account of the wins of the other candidates.


Towards this we create a family of polynomials $\Co{T}=\{T_{k^\star},\ldots,T_{k}\}$ such that the polynomial $T_{k^\star+\ell}$, where $\ell \in \{0,\ldots,k-k^\star\}$, encodes the following information: the existence of a monomial $x^t$ in $T_{k^\star+\ell}$ implies that there is a subset $X\sse V(G)$ such that $t=\chi(X)$ and $X$ is the union of $k^{\star}+\ell$  districts in which $c_1$ wins in $k^{\star}$ districts and every other candidate wins in at most $k^{\star}-1$ districts. Therefore, it follows that if $T_{k}$ contains the monomial $y^{\chi(V(G))}$ (the all 1-vector) then our algorithm should return ``\yes'', otherwise it should return ``\no''. We define $T_{k^\star+\ell}$ recursively, with the base case given by $T_{k^{\star}}=\sum_{s=k^\star}^n Q_{1,(k^\star-1)}^s$. If $T_{k^\star}=0$, then we return ``\no''. 

We initialize $T_{k^\star+\ell}=0$, for each $\ell \in \{1,\ldots,k-k^\star\}$. For each $ i \in \{2,\ldots,m\}$, we proceed as follows in the increasing order of $i$.

\begin{tcolorbox}[colback=gray!5!white,colframe=gray!75!black]
 \begin{itemize}
 \item For each $j \in \{1,\ldots,\min\{k-1,k-k^\star\}\}$
 \begin{itemize}
 \item For each $\ell \in \{j,\ldots,k-k^\star\}$ and $s \in \{k^\star+1,\ldots,n\}$
\begin{itemize}
 \item Computer the polynomial $Q_\ell^s = \sum\limits_{\substack{1\leq s',s''\leq s \\ s'+s'' =s}}P_i^{s'}\times \Co{H}_{s''}(T_{k^\star+\ell-1})$
 \item Compute the Hamming projection of $Q_{\ell}^s$ to $s$, that is,  $Q_{\ell}^s = \Co{H}_s(Q_{\ell}^s)$
 \end{itemize}
      \item For each $\ell \in \{j,\ldots,k-k^\star\}$
      \begin{itemize}
      \item Set $T_{k^\star+\ell}=\Co{R}(T_{k^\star+\ell}+\sum_{s=k^\star+1}^n Q_{\ell}^s)$
      \end{itemize}
 \end{itemize}
 \end{itemize}
\end{tcolorbox} 

The range of $j$ is dictated by the fact that since $c_1$ wins $k^{\star}$ districts, all other candidates combined can only win $k-k^{\star}$ districts and each individually may only win at most $k^{\star}-1$ districts. Thus, overall candidate $c_{i}$, for any $i\geq 2$ can win at most $\min\{k^{\star}-1, k-k^{\star}\}$ districts. The range of $\ell$ is dictated by the fact that (assuming that first $k^\star$ districts are won by $c_1$) $j^\text{th}$ district won by $c_i$ is either $(k^\star+j)^\text{th}$ district, or $(k^\star+j+1)^\text{th}$ district, ..., or $k^\text{th}$ district. The range of $s$ is dictated by the fact that the number of vertices in the union of all the districts is at least $k^\star+1$ as $c_1$ wins $k^\star$ districts.  

Note that $Q_{\ell}^s$ is a non-zero polynomial if there exists a subset of vertices of size $s$ that are formed by the union of $k^{\star}+\ell$ pairwise disjoint districts, $k^{\star}$ of which are won by $c_1$ and every other candidate wins at most $k^{\star}-1$. Thus, the recursive definition of $T_{k^\star+\ell}$ is self explanatory. 
%
%
%
Next, we prove the correctness of the algorithm. In particular, we prove \Cref{lem-exact-algo}.
%
%
%
%
%
%
\begin{supress}

 \begin{theorem}\label{lem-exact-algo}
 There is an algorithm that given an instance $I$ of \targm and a tie-breaking rule $\eta$, runs in time $2^n |I|^{\OO(1)}$, and solves the instance $I$.\ma{App}
 \end{theorem}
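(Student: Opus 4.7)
By \Cref{lem:tgm-gm} it suffices to verify that the algorithm described above correctly decides \targm in time $2^n|I|^{\OO(1)}$. The correctness will be proved via two inductive invariants that say exactly what each constructed polynomial encodes, and the running-time bound will follow from the fact that every monomial exponent arising in the algorithm fits into $n$ bits.

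The starting point for correctness is the following consequence of \Cref{obs:disjoint-binary-vectors} and \Cref{prop:disjoint-set}: whenever we multiply $y^{\chi(S_1)}$ and $y^{\chi(S_2)}$, the product $y^{\chi(S_1)+\chi(S_2)}$ has Hamming weight $|S_1|+|S_2|$ if and only if $S_1\cap S_2=\emptyset$, in which case it equals $y^{\chi(S_1\cup S_2)}$. Projecting a product polynomial to a fixed Hamming weight therefore retains exactly those monomials corresponding to disjoint unions of that cardinality. Using this, I would first prove by induction on $j$ the ``$Q$-invariant'': $Q_{1,j}^s$ contains $y^t$ iff $t=\chi(S)$ for some $S\sse V(G)$ of cardinality $s$ that is a pairwise disjoint union of $j+1$ districts from $\Co{F}_1$. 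Both the base case $j=1$ and the inductive step are immediate from the disjointness observation. In particular $T_{k^\star}=\sum_s Q_{1,k^\star-1}^s$ encodes precisely the disjoint unions of $k^\star$ districts won by $c_1$. Then I would prove the main ``$T$-invariant'' by nested induction, outer on $i$ and inner on the loop counter $j$: at the end of inner iteration $j$ within the outer iteration for candidate $c_i$, the polynomial $T_{k^\star+\ell}$ contains $y^t$ iff $t=\chi(X)$ for some disjoint union $X$ of $k^\star+\ell$ districts in which $c_1$ wins in exactly $k^\star$ districts, $c_i$ wins in at most $j$ districts, and each of $c_2,\ldots,c_{i-1}$ wins in at most $k^\star-1$ districts. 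The inductive step uses that $\Co{H}_{s''}(T_{k^\star+\ell-1})$, evaluated at the start of iteration $j$, encodes the state with $c_i$ winning at most $j-1$ districts; multiplying by $P_i^{s'}$ and projecting to Hamming weight $s$ correctly appends one more district won by $c_i$ while enforcing disjointness. The application of $\Co{R}(\cdot)$ is harmless since only the presence of monomials matters. The final ``yes'' decision is justified because after the outer loop finishes, $T_k$ contains $y^{\chi(V(G))}$ exactly when $V(G)$ admits a partition into $k$ districts meeting the \targm requirements.

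For the running time, a straightforward induction on the construction shows that every monomial built by the algorithm has the form $y^{\chi(S)}$ for some $S\sse V(G)$, so when the exponent is interpreted as an integer it is at most $2^n-1$. Hence each polynomial has degree at most $2^n-1$ and at most $2^n$ distinct monomials. By \Cref{prop:poly-multiplication} each multiplication costs $2^n\cdot n^{\OO(1)}$ time, and the Hamming projection, the representative operation, and polynomial addition all cost time linear in the number of monomials. The total number of such operations, ranging over the choices of $i,j,\ell,s,s',s''$, is bounded by a polynomial in $n$, $m$, and $k$, yielding an overall running time of $2^n(n+m)^{\OO(1)}$, as required. The main obstacle will be carefully formulating and proving the inner invariant on $T_{k^\star+\ell}$: one must verify that the ordering of updates across iterations of $j$ (which first computes all $Q_\ell^s$ from the state of $T$ at the end of iteration $j-1$ and then simultaneously updates all $T_{k^\star+\ell}$ for $\ell\in\{j,\ldots,k-k^\star\}$) is consistent with the intended accumulation of ``$c_i$ wins at most $j$ districts'', and that the $\ell$-range $\{j,\ldots,k-k^\star\}$ is exactly what is needed to prevent double counting while still capturing every valid union.
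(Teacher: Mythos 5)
Your proposal follows essentially the same route as the paper: the same disjointness-via-Hamming-weight observation, the same inductive invariant on $Q_{1,j}^s$ (disjoint unions of $j+1$ districts won by $c_1$), the same nested induction (outer on the candidate index $i$, inner on the loop counter $j$) establishing what each $T_{k^\star+\ell}$ encodes, and the same degree-$2^n$ accounting for the running time. The paper merely splits your ``iff'' invariant into separate completeness and soundness lemmas, but the content and the delicate point you flag about the update ordering and the $\ell$-range are handled exactly as you anticipate.
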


Before we present the proof of this theorem, we wish to introduce some terminology and notations.


Let $U=\{u_1,\ldots,u_\ell\}$ be a set of $\ell$ elements. The {\em characteristic vector} of $S\subseteq U$, denoted by $\chi(S)$, is an $\ell$ length vector whose $i^{\text {th}}$ bit is $1$ if $u_i \in S$, otherwise $0$. We say that two binary strings $S_1$ and $S_2$ are disjoint if there is no $i$ such that $i^{\text {th}}$ bit of both $S_1$ and $S_2$ is $1$. The {\em Hamming weight} of a binary string $t$, denoted by $\Co{H}(t)$, is the number of $1$s in the the string $t$. 
\begin{observation}\label{obs:disjoint-binary-vectors}
Let $S_1$ and $S_2$ be two binary vectors, and let $S=S_1+S_2$. If $\Co{H}(S)=\Co{H}(S_1)+\Co{H}(S_2)$, then $S_1$ and $S_2$ are disjoint binary vectors.
\end{observation}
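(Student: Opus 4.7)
The plan is to prove the observation by a simple position-wise counting argument, partitioning the coordinates of the binary vectors according to the pattern of bits in $S_1$ and $S_2$. Writing the observation contrapositively, it suffices to show that if $S_1$ and $S_2$ share a $1$ in at least one coordinate, then $\Co{H}(S)$ is strictly smaller than $\Co{H}(S_1)+\Co{H}(S_2)$.

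First, I would fix the interpretation of the sum. Based on the earlier text (``binary sum/modulo $2$'') and on the fact that the proposition immediately after this observation speaks of $\chi(S)$ for a set $S$, the natural reading is that $+$ is coordinate-wise XOR, so that $S = S_1 + S_2$ is itself a binary vector. (The argument works identically if one reads $+$ as integer coordinate-wise addition and $\Co{H}$ as ``number of nonzero coordinates''.) Assume $S_1, S_2 \in \{0,1\}^n$ for some $n$, and partition the index set $\{1,\ldots,n\}$ into four classes according to the pair $(S_1[i],S_2[i])$: let $a$ be the number of positions where both bits equal $1$, let $b_1$ (respectively $b_2$) be the number of positions where only $S_1$ (respectively only $S_2$) has a $1$, and let $c$ be the number of positions where both are $0$.

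Next, I would read off the three Hamming weights from this partition. Directly from the definitions,
\[
\Co{H}(S_1) = a + b_1, \qquad \Co{H}(S_2) = a + b_2, \qquad \Co{H}(S) = b_1 + b_2,
\]
since a coordinate of $S = S_1 \oplus S_2$ equals $1$ precisely at the $b_1 + b_2$ positions where exactly one of $S_1, S_2$ has a $1$. Combining these three identities yields
\[
\Co{H}(S_1) + \Co{H}(S_2) - \Co{H}(S) \;=\; 2a.
\]

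Finally, the hypothesis $\Co{H}(S) = \Co{H}(S_1)+\Co{H}(S_2)$ forces $2a = 0$, i.e.\ $a = 0$. But $a$ is by definition the number of coordinates on which $S_1$ and $S_2$ both take the value $1$, so $a=0$ is precisely the statement that $S_1$ and $S_2$ are disjoint as binary vectors, completing the proof. There is essentially no obstacle here; the only thing to be careful about is fixing the convention for $+$ and for $\Co{H}$, after which the argument is a one-line arithmetic identity.
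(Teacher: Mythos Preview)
Your proof is correct. The paper itself does not prove this observation at all; it is stated as a bare \texttt{Observation} and used immediately, so there is no ``paper's proof'' to compare against. Your partition of coordinates into the four types $(1,1),(1,0),(0,1),(0,0)$ with counts $a,b_1,b_2,c$ and the resulting identity $\Co{H}(S_1)+\Co{H}(S_2)-\Co{H}(S)=2a$ is exactly the one-line argument the authors are implicitly relying on.

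One small remark on conventions: the paper actually uses $\chi(S)$ as the exponent of $y$ in a univariate polynomial, so in the algorithm the relevant addition is ordinary \emph{integer} addition of the binary strings read as numbers, not coordinate-wise XOR. Your XOR computation $\Co{H}(S)=b_1+b_2$ is therefore not literally the quantity that arises in the algorithm. This does not affect the validity of the observation: with integer addition one has the standard fact $\Co{H}(S_1+S_2)\le \Co{H}(S_1)+\Co{H}(S_2)$ with equality iff no carries occur, i.e.\ iff $a=0$. Your parenthetical that ``the argument works identically'' under the alternative reading is morally right but would need this carry argument rather than your formula $\Co{H}(S)=b_1+b_2$. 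Since the paper also explicitly says ``binary sum/modulo $2$'' in the overview, your chosen interpretation is defensible and the proof stands.
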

\begin{proposition}{\rm \cite{DBLP:journals/tcs/CyganP10}}\label{prop:disjoint-set}\ma{Main body} 
Let $S_1$ and $S_2$ be two disjoint subsets of $U=\{u_1,\ldots,u_\ell\}$, and let $S =S_1 \uplus S_2$. Then

\begin{itemize} 
\item $\chi(S)=\chi(S_1)+\chi(S_2)$
\item $\Co{H}(\chi(S))=\Co{H}(\chi(S_1))+\Co{H}(\chi(S_2))$, that is, $\Co{H}(\chi(S))=|S_1|+|S_2|$
\end{itemize}
\end{proposition}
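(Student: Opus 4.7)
The plan is to prove the two claimed equalities by a direct case analysis on each coordinate, followed by a counting argument. The proposition is essentially definitional once one unpacks $\chi$ and $\Co{H}$, so the proof will be short and elementary rather than requiring any combinatorial machinery.

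First I would establish the vector equality $\chi(S)=\chi(S_1)+\chi(S_2)$ by fixing an arbitrary index $i\in\{1,\ldots,n\}$ and comparing the $i^{\text{th}}$ bits of both sides. By definition of the characteristic vector, the $i^{\text{th}}$ bit of $\chi(S_j)$ equals $1$ if $v_i\in S_j$ and $0$ otherwise, for $j\in\{1,2\}$, and similarly for $\chi(S)$. Since $S_1$ and $S_2$ are disjoint, at most one of them contains $v_i$, so the sum of the $i^{\text{th}}$ bits of $\chi(S_1)$ and $\chi(S_2)$ is either $0$ (if $v_i\notin S_1\cup S_2$) or exactly $1$ (if $v_i$ lies in one of them), with no possibility of producing a carry. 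In both cases this equals the $i^{\text{th}}$ bit of $\chi(S_1\cup S_2)=\chi(S)$, so the vector equality holds coordinatewise.

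Next I would handle the Hamming weight statement. By the definition of $\Co{H}$, the quantity $\Co{H}(\chi(T))$ counts the number of coordinates of $\chi(T)$ equal to $1$, and by the definition of $\chi(T)$, this count is precisely $|T|$ for any $T\subseteq V$. Applying this to $S$, $S_1$, and $S_2$ yields $\Co{H}(\chi(S))=|S|$, $\Co{H}(\chi(S_1))=|S_1|$, and $\Co{H}(\chi(S_2))=|S_2|$. Since $S_1$ and $S_2$ are disjoint, $|S|=|S_1\cup S_2|=|S_1|+|S_2|$, which gives the chain of equalities $\Co{H}(\chi(S))=|S_1|+|S_2|=\Co{H}(\chi(S_1))+\Co{H}(\chi(S_2))$, as required.

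There is essentially no obstacle in this proof; the only subtle point is that the step $\chi(S)=\chi(S_1)+\chi(S_2)$ uses integer addition (not addition modulo $2$), and the disjointness of $S_1$ and $S_2$ is precisely what ensures that no coordinate of the sum exceeds $1$, so the resulting vector is again a $\{0,1\}$-vector and coincides with the characteristic vector of the union. Once this observation is made explicit, the Hamming weight equality follows immediately from the inclusion--exclusion identity $|S_1\cup S_2|=|S_1|+|S_2|$ for disjoint sets, completing the proof.
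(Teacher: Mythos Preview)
Your proof is correct; the coordinatewise case analysis together with the observation that disjointness prevents carries is exactly the right way to verify both equalities. The paper itself does not prove this proposition but simply cites it as a known fact from \cite{DBLP:journals/tcs/CyganP10}, so there is no approach to compare against.
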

A monomial $x^i$ is said to have Hamming weight $h$, if the binary representation of $i$ has Hamming weight $h$. Here, we slightly abuse the notation and write $ \Co{H}(i) = h$. For any $h \in [\ell]$, the {\em Hamming projection} of a polynomial $P(x)= \sum_{i=1}^\ell a_ix^i$ to $h$ is defined by the following polynomial.

\begin{align*}
 & \Co{H}_h(P(x)) = \sum_{i=1}^\ell b_ix^i  \quad \textrm{where} \quad b_i = \begin{cases} a_i \quad \textrm{if} \quad \Co{H}(i) = h \\
0 \quad \textrm{otherwise}
\end{cases}
\end{align*} 
That is, $\Co{H}_h(P(x))$ is the sum of all the monomials in $P$ which has Hamming weight $h$. 
For a polynomial $P(x) = \sum_{i=1}^\ell a_ix^i$, we define its {\em representative polynomial} $\Co{R}(P)$ as follows.\ma{Appendix}
\begin{align*}
 & \Co{R}(P(x)) = \sum_{i=1}^\ell b_ix^i  \quad \textrm{where} \quad b_i = \begin{cases} 1 \quad \textrm{if} \quad a_ i \neq 0 \\
0 \quad \textrm{otherwise}
\end{cases}
\end{align*}
That is, the representative polynomial $\Co{R}(P(x))$ of a polynomial $P(x)$ ignores the actual coefficients and only remembers whether the coefficient of a monomial is non-zero.
We say that a polynomial $P(x) = \sum_{i=1}^\ell a_ix^i$ contains a monomial $x^i$ if $a_i \neq 0$. The zero polynomial is a polynomial in which the coefficient of every monomial is $0$. 
\par

\begin{supress}

%
%


\Ma{Due to~\Cref{lem:tgm-gm} and~\Cref{lem-exact-algo}, we obtain~\Cref{theorem:exact}.} In the following subsection, we describe the algorithm specified in Theorem~\ref{lem-exact-algo}. Then, we prove its correctness in Subsection~\ref{subsec:correctness} and show that the algorithm runs in specified time in Subsection~\ref{subsec:time}.

\begin{supress}

\subsection{Algorithm}\label{subsec:exact-algo}
\il{Most of this will be in main body, so we can omit.}

Given an instance $I=(G,\Co{C},\{w_v\colon \Co{C}\rightarrow \mathbb{Z}^+\}_{v \in V(G)},p,k, k^\star)$  of \targm,  
we proceed as follows. \Ma{We assume that $k^\star \geq 1$, otherwise $k=1$ and it is the trivial instance. Specifically,   it is a {\sf Yes}-instance only if $V(G)$ is the district won by the candidate $p$.}\ma{Reword}
 For each candidate $c_i$ in \Co{C}, we construct a family $\Co{F}_{i}$, that contains all possible districts won by $c_i$. 
Notice that every district has a unique winner because of the tie-breaking rule $\eta$. 
Without loss of generality, let $c_1$ be the distinguished candidate $p$. Note that we want to find a family $\cal S$ of $k$ districts, that contains $k^\star$ elements of the family $\Co{F}_{1}$ and at most $k^\star -1$ elements from any other family $\Co{F}_{i}$, where $i \in \{2,\ldots, m\}$. Further, the union of these districts  is $V(G)$ and any two districts in $\cal S$ are pairwise disjoint.  To find such $k$ districts, we use the method of polynomial multiplication appropriately. Due to Observation~\ref{obs:disjoint-binary-vectors} and Proposition~\ref{prop:disjoint-set}, we have that 
 $S_1$ and $S_2$ are two disjoint sets if and only if the Hamming weight of the monomial $y^{\chi(S_1)+\chi(S_2)}$ is $|S_{1}|+|S_{2}|$. 
 
We use the following known result for polynomial multiplication.
\begin{proposition}{\rm \cite{moenck1976practical}}\label{prop:poly-multiplication}
There exists an algorithm that multiplies two polynomials of degree $d$ in $\OO(d \log d)$ time.
\end{proposition}
Now, given the families $\Co{F}_i$, where $i\in \{1,\ldots,m\}$, and an integer $k^\star$, we proceed as follows.

For every $i\in \{1,\ldots,m\}$, $\ell \in \{1,\ldots, n\}$, if $\Co{F}_i$ has a set of size $\ell$, then we construct a polynomial $P_i^\ell$ in $y$ as follows. 
\begin{align*}
 P_i^\ell(y) = \sum_{\substack{Y \in \Co{F}_i \\ \lvert Y \rvert = \ell}}y^{\chi(Y)}
\end{align*}

Next, using polynomials $P_1^\ell(y)$, where $\ell \in \{1,\ldots,n\}$, we will create a sequence of polynomials $Q_{1,j}^s$, where $j \in \{1,\ldots, k^\star -1\}$, $s \in \{j+1,\ldots,n\}$, in the increasing order of $j$, such that every term in the polynomial $Q_{1,j}^s$ has Hamming weight $s$. For $j=1$, we create polynomials as follows. 
 \begin{align*} 
  Q_{1,1}^s= \Co{R}\Big(\Co{H}_s\Big(\sum_{\substack{1\leq s',s'' \leq s \\ s'+s'' =s}}P_1^{s'} \times P_1^{s''}\Big)\Big)
  \end{align*}
  That is, we first construct a polynomial by summing all the polynomials obtained by multiplying polynomials $P_1^{s'}$ and $P_1^{s''}$, for all $s',s'' \in \{1,\ldots,n\}$ such that $s'+s''=s$, and then construct representative polynomial of its Hamming projection to $s$. 
  %
  Next, for $ j \in \{2,\ldots, k^\star -1\}$ and $s\in \{j+1,\ldots, n\}$, we create the polynomial $Q_{1,j}^s$ as follows. 
 \begin{align*}
  Q_{1,j}^s=\Co{R}\Big(\Co{H}_s\Big(\sum_{\substack{1\leq s',s'' \leq s \\ s'+s'' =s}}P_1^{s'} \times Q_{1,(j-1)}^{s''}\Big)\Big)
 \end{align*}

\par
  We next create a family of polynomials $\Co{T}=\{T_{k^\star},\ldots,T_{k}\}$ such that each monomial $x^q$ in the polynomial $T_{k^\star+h}$, where $h \in \{0, \ldots, k-k^\star\}$, 
 indicates that there are 
  $k^\star+h$ districts in which $c_1$ wins in $k^\star$ districts and any other candidate wins in at most $k^\star-1$ districts, and the characteristic vector of the set  of vertices used in these districts is $q$.  
  Therefore, if $T_{k}\neq 0$ and it contains a monomial $y^{\chi(V(G))}$, then we return \yes, otherwise \no. We next discuss the construction of the family $\Co{T}$.

We construct the set $T_{k^\star}$ as follows: 
\begin{align*}
  T_{k^\star}  =  \sum_{s=k^\star}^n Q_{1,(k^\star-1)}^s  
\end{align*}
If $T_{k^\star}=0$, then we return \no. 
Initially, let $T_{k^\star+h}=0$, for all $h \in \{1,\ldots,k-k^\star\}$. Next, we update polynomials $T_{k^\star+1},\ldots,T_k$. For every $ i \in \{2,\ldots, m\}$, we proceed as follows in the increasing order of $i$.  
\begin{tcolorbox}[colback=red!5!white,colframe=red!75!black]
 \begin{itemize}
 \item For each $j \in \{1,\ldots,\min\{k^\star-1,k-k^\star\}\}$
 \begin{itemize}

 \item For each $\ell \in \{1,\ldots, k-k^\star\}$ 
 \begin{itemize}
\item For each $s \in \{k^\star+1,\ldots, n\}$
\begin{itemize}
 \item Computer the polynomial $Q_\ell^s = \sum_{\substack{1\leq s',s''\leq s \\ s'+s'' =s}}P_i^{s'}\times \Co{H}_{s''}(T_{k^\star+\ell-1})$
 %
 \item Compute the Hamming projection of $Q_{\ell}^s$ to $s$, that is,  $Q_{\ell}^s = \Co{H}_s(Q_{\ell}^s)$
 \end{itemize}
   \item Set $T_{k^\star+\ell}=\Co{R}(T_{k^\star+\ell}+\sum_{s=k^\star+1}^n Q_{\ell}^s)$
   \end{itemize}
 \end{itemize}
 \end{itemize}
\end{tcolorbox} 

\end{supress}

\smallskip
\noindent{\bf Correctness.}
 %
  In the following lemma, we prove the completeness of the algorithm. 
  \begin{lemma}\label{lem:algo-complete}
  If $(G,\Co{C},\{w_v\colon \Co{C}\rightarrow \mathbb{Z}^+\}_{v \in V(G)},p,k, k^\star)$ is a \yes-instance of \targm\ under a tie-breaking rule, then 
  the algorithm in \Cref{sec:Exact Algorithm} returns ``\yes''. 
    \end{lemma}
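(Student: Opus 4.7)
The plan is to fix any solution partition $D_1,\ldots,D_k$ of $V(G)$ witnessing that the input is a \yes-instance, and then to trace the monomial $y^{\chi(V(G))}$ through the sequence of polynomials built by the algorithm until it appears in $T_k$. Under the tie-breaking rule we may assume $c_1 = p$ is the unique winner of $D_1,\ldots,D_{k^\star}$ and, for each $i \in \{2,\ldots,m\}$, candidate $c_i$ is the unique winner of exactly $k_i$ districts among $D_{k^\star+1},\ldots,D_k$ with $k_i \le k^\star - 1$ and $\sum_{i \ge 2} k_i = k - k^\star$. In particular $k_i \le \min\{k^\star - 1, k - k^\star\}$, so the inner-loop variable $j$ sweeps over every value I will need to invoke.

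First, I would establish by induction on $j \in \{1,\ldots,k^\star-1\}$ the auxiliary claim: whenever $D_{i_1},\ldots,D_{i_{j+1}} \in \Co{F}_1$ are pairwise disjoint with $s = \sum_r |D_{i_r}|$, the polynomial $Q_{1,j}^s$ contains $y^{\chi(D_{i_1} \cup \cdots \cup D_{i_{j+1}})}$. For the base case, $P_1^{|D_{i_1}|} \times P_1^{|D_{i_2}|}$ contains $y^{\chi(D_{i_1}) + \chi(D_{i_2})}$, which by Proposition~\ref{prop:disjoint-set} equals $y^{\chi(D_{i_1} \cup D_{i_2})}$ and has Hamming weight exactly $s$, so it survives $\Co{H}_s$; because every polynomial in the construction has nonnegative integer coefficients (no cancellation is possible) it also survives $\Co{R}$. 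The inductive step combines the hypothesis on $Q_{1,j-1}^{s''}$ with the factor $P_1^{|D_{i_{j+1}}|}$ in the identical way. Specializing to $\{D_1,\ldots,D_{k^\star}\}$ shows that $T_{k^\star} = \sum_s Q_{1,k^\star-1}^s$ contains $y^{\chi(D_1 \cup \cdots \cup D_{k^\star})}$, which handles the ``base candidate'' $c_1$.

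Next, I would state the main invariant and prove it by induction on the outer-loop index $i \in \{1,\ldots,m\}$. Let $A_i = D_1 \cup \cdots \cup D_{k^\star} \cup \bigcup_{i'=2}^{i} (\text{districts won by } c_{i'})$ and $\ell_i = \sum_{i'=2}^{i} k_{i'}$. The invariant is that once the outer iteration for $c_i$ has completed, $T_{k^\star + \ell_i}$ contains $y^{\chi(A_i)}$. For the inductive step from $c_{i-1}$ to $c_i$ I would run a nested induction on the inner-loop index $j = 1,\ldots,k_i$ to show that after the $j$-th inner iteration, $T_{k^\star + \ell_{i-1} + j}$ contains $y^{\chi(A_{i-1} \cup D_{j_1} \cup \cdots \cup D_{j_j})}$, where $D_{j_1},\ldots,D_{j_{k_i}}$ are the solution's districts won by $c_i$. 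The key algebraic step is: with $s'' = |A_{i-1}| + |D_{j_1}| + \cdots + |D_{j_{j-1}}|$ and $s' = |D_{j_j}|$, the inductive hypothesis places $y^{\chi(A_{i-1} \cup D_{j_1} \cup \cdots \cup D_{j_{j-1}})}$ inside $\Co{H}_{s''}(T_{k^\star + \ell_{i-1} + j - 1})$ and $y^{\chi(D_{j_j})}$ inside $P_i^{s'}$, and their product contains $y^{\chi(D_{j_j})} \cdot y^{\chi(A_{i-1} \cup D_{j_1} \cup \cdots \cup D_{j_{j-1}})}$; disjointness of the solution districts together with Proposition~\ref{prop:disjoint-set} makes this equal to $y^{\chi(A_{i-1} \cup D_{j_1} \cup \cdots \cup D_{j_j})}$ with Hamming weight $s = s' + s''$, so the monomial survives $\Co{H}_s$ and then $\Co{R}$. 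I would also verify that the index $\ell = \ell_{i-1} + j$ lies in the allowed range $\{j,\ldots,k-k^\star\}$ (which follows from $0 \le \ell_{i-1}$ and $\ell_{i-1} + k_i = \ell_i \le k - k^\star$). Taking $i = m$ and $j = k_m$ then gives $A_m = V(G)$ and $\ell_m = k - k^\star$, so $T_k$ contains $y^{\chi(V(G))}$ and the algorithm returns \yes.

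The main obstacle I anticipate is pedantic bookkeeping: (i) confirming that the loop ranges for $\ell$ and $s$ actually hit the values dictated by the solution at every step of the double induction, (ii) confirming that every Hamming projection preserves the traced monomial, which is precisely where disjointness of the solution districts and Observation~\ref{obs:disjoint-binary-vectors} are essential because they force the integer exponent sum to remain a binary vector with the intended Hamming weight, and (iii) confirming that $\Co{R}$ never removes the monomial, which is automatic since all coefficients are built up from $\{0,1\}$ via additions and multiplications of polynomials with nonnegative integer coefficients. Once these are dispatched, the nested inductions carry the targeted monomial all the way into $T_k$.
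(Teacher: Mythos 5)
Your proposal is correct and follows essentially the same route as the paper's proof: an outer induction over the candidates $c_1,\ldots,c_m$ with a nested inner induction over the districts each candidate wins, first tracing the monomial for $D_1\cup\cdots\cup D_{k^\star}$ through the $Q_{1,j}^s$ polynomials into $T_{k^\star}$ and then extending it one district at a time into $T_k$, using Proposition~\ref{prop:disjoint-set} to show the exponent stays a characteristic vector of the intended Hamming weight so it survives $\Co{H}_s$ and $\Co{R}$. The only differences are cosmetic (explicit bookkeeping of the loop ranges and of why $\Co{R}$ cannot cancel the monomial, both of which the paper handles implicitly or inline).
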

  \begin{proof}
Suppose that $V_1,\ldots,V_k$ is a solution to $(G,\Co{C},\{w_v\colon \Co{C}\rightarrow \mathbb{Z}^+\}_{v \in V(G)},p,k, k^\star)$. Recall that we assumed that $p=c_1$. Let $\Co{V}_i \subseteq \{V_1,\ldots,V_k\}$ be the set of districts won by the candidate $c_i$. Due to the application of a tie-breaking rule, $\Co{V}_i$s are pairwise disjoint. 
Without loss of generality, let $\Co{V}_1 = \{V_1,\ldots,V_{k^\star}\}$. We begin with the following claim that enables us to conclude that polynomial $T_k$ has monomial $y^{\chi(V(G))}$.
\begin{claim}\label{clm:algo-correct}
\begin{sloppypar}For each $i \in \{1,\ldots,m\}$, polynomial $T_{\sum_{|\Co{V}_1|+\ldots+|\Co{V}_i|}}$ contains the monomial $y^{\chi(\cup_{Y\in \Co{V}_1\cup \ldots \cup \Co{V}_i}Y)}$. \end{sloppypar}
\end{claim}

\begin{proof} The proof is by induction on $i$. 
\begin{description}[wide,labelindent=0pt]
\item[Base Case:] $i=1$. We first note that each of the districts $V_1,\ldots,V_{k^\star}$ belong to the family $\Co{F}_1$ as all these districts are won by $c_1$ uniquely. Clearly, every polynomial $P_1^{|V_i|}$, where $i\in \{1,\ldots,k^\star\}$, contains the monomial $y^{\chi(V_i)}$. Since $k^\star\geq 2$, 
due to the construction of the polynomial $T_{k^\star}$, it is sufficient to prove that $Q_{1,(k^\star-1)}^{s}$ has the monomial $y^{\chi(V_1\cup\ldots\cup V_{k^\star})}$, for some $s \in \{k^\star,\ldots,n\}$. Towards this, we prove that for every $\tilde{\ell} \in \{2,\ldots,k^\star\}$,  $Q_{1,(\tilde{\ell}-1)}^{|V_1|+\ldots+|V_{\tilde{\ell}}|}$ has the monomial $y^{\chi(V_1\cup\ldots\cup V_{\tilde{\ell}})}$.  We prove it by induction on $\tilde{\ell}$.  Observe that these polynomials are computed in the algorithm because $\tilde{\ell}-1 \in \{1,\ldots,k^\star-1\}$ and $|V_1|+\ldots+|V_{\tilde{\ell}}| \in \{\tilde{\ell},\ldots,n\}$. 
\begin{description}[labelindent=0.75cm]
\item[Base Case:] $\tilde{\ell} =2$. By definition, we  
consider the multiplication of 
polynomials $P_1^{|V_1|}$ and $P_1^{|V_2|}$ 
to construct the polynomial $Q_{1,1}^{|V_1|+|V_2|}$. Since $V_1 \cap V_2 = \emptyset$, using Proposition~\ref{prop:disjoint-set}, we have that 
$\chi(V_1)+\chi(V_2) = \chi (V_1\cup V_2)$ and $\Co{H}(\chi (V_1\cup V_2))=|V_1|+|V_2|$. Thus, $Q_{1,1}^{|V_1|+|V_2|}$ has the monomial $y^{\chi(V_1\cup V_2)}$. 
\item[Induction Step:] Suppose that the claim is true for $\tilde{\ell} =h-1$. We next prove it for $\tilde{\ell} =h$. To construct the polynomial $Q_{1,h-1}^{|V_1|+\ldots + |V_{h}|}$, we consider the multiplication of polynomials $Q_{1,h -2}^{|V_1|+\ldots + |V_{h-1}|}$ and $P_1^{|V_h|}$. By inductive hypothesis, $Q_{1,h-2}^{|V_1|+\ldots + |V_{h-1}|}$ has the monomial $y^{\chi(V_1\cup\ldots \cup V_{h-1})}$. Since $V_i\cap V_j =\emptyset$, for all $i,j \in \{1,\ldots,h\}$, using the same arguments as above, the polynomial $Q_{1,h-1}^{|V_1|+\ldots + |V_{h}|}$ has the monomial $y^{\chi(V_1\cup \ldots \cup V_h)}$. 
\end{description}

\item[Induction Step:] Suppose that the claim is true for $i = h-1$. We next prove it for $i=h$. If $\Co{V}_h = \emptyset$, then using the inductive hypothesis, polynomial $T_{|\Co{V}_1|+\ldots+|\Co{V}_{h}|}$ has the monomial $y^{\chi(\cup_{Y\in \Co{V}_1\cup \ldots \cup \Co{V}_h}Y)}$. Next, we consider the case when $\Co{V}_h \neq \emptyset$. Without loss of generality, let $\Co{V}_1\cup \ldots \cup \Co{V}_{h-1} = \{V_1,\ldots,V_q\}$ and $\Co{V}_h=\{V_{q+1}, \ldots, V_{q+|\Co{V}_h|}\}$. 
To prove our claim, we prove that for $t\in \{1,\ldots,|\Co{V}_h|\}$, the polynomial $T_{q+t}$ has the monomial $y^{\chi(V_1\cup \ldots \cup V_{q+t})}$. We again use induction on $t$.
\begin{description}[labelindent=0.75cm]
\item[Base Case:] $t=1$. Note that $q<k$ because $\Co{V}_h \neq \emptyset$ and $q\geq k^\star$ because $|\Co{V}_1|=k^\star$. In the algorithm, for $i=h,j=1$, we compute $Q_\ell^s$, where $\ell=q-k^\star+1$ (as $1\leq q-k^\star+1 \leq k-k^\star$), and $s=|V_1|+\ldots+|V_{q+1}|$. For constructing polynomial $Q_{q-k^\star+1}^{|V_1|+\ldots+|V_{q+1}|}$, we consider the multiplication of the polynomials $P_h^{|V_{q+1}|}$ and $\Co{H}_{|V_1|+\ldots+|V_q|}(T_{q})$ in the algorithm. Using the inductive hypothesis, $T_q$ has the monomial $y^{\chi(V_1\cup \ldots \cup V_q)}$ and as argued above $\Co{H}(\chi(V_1\cup \ldots \cup V_q))=|V_1|+\ldots+|V_q|$. 
Since $V_{q+1}$ is disjoint from the set $V_1\cup \ldots \cup V_q$, using the same argument as above, $Q_{\ell}^{s}$ has the monomial $y^{\chi(V_1\cup \ldots \cup V_{q+1})}$, where $\ell=q-k^\star+1$ and $s=|V_1|+\ldots+|V_{q+1}|$. Since for  $\ell=q-k^\star+1$, $T_{k^\star+\ell} = T_{q+1}$, it follows that $T_{q+1}$ has a monomial $y^{\chi(V_1\cup \ldots \cup V_{q+1})}$.
\item[Induction Step:] Suppose that the claim is true for $t=t'-1$. We next prove it for $t=t'$. Note that  $t'\leq k^\star-1$ because $|\Co{V}_h| \leq k^\star-1$. Also, $t' \leq k-k^\star$ because $|\Co{V}_1|=k^\star$. Therefore, in the algorithm we compute polynomials for $i=h$ and $j=t'$. We also note that $q+t'\leq k$, hence, for values $i=h$ and $j=t'$, we compute the polynomial $Q_\ell^s$, where $\ell=q+t'-k^\star$ and $s=|V_1|+\ldots+|V_{q+t'}|$. For constructing the polynomial $Q_{q+t'-k^\star}^{|V_1|+\ldots+|V_{q+t'}|}$, we consider the multiplication of polynomials $P_h^{|V_{q+t'}|}$ and $\Co{H}_{|V_1|+\ldots+|V_{q+t'-1}|}(T_{q+t'-1})$ in the algorithm. Using inductive hypothesis, the polynomial $T_{q+t'-1}$ has the  monomial $y^{\chi(V_1\cup \ldots \cup V_{q+t'-1})}$ whose Hamming weight is  
$|V_1|+\ldots+|V_{q+t'-1}|$ as argued above. Since the sets $V_{q+t'}$ and $V_1\cup \ldots \cup V_{q+t'-1}$ are disjoint, using the same arguments as above, 
 the polynomial $Q_{q+t'-k^\star}^{|V_1|+\ldots+|V_{q+t'}|}$ has the monomial $y^{\chi(V_1\cup \ldots \cup V_{q+t'})}$. Hence, we can conclude that the polynomial $T_{q+t'}$ has the monomial $y^{\chi(V_1\cup \ldots \cup V_{q+t'})}$. \qedhere
\end{description}
\end{description} 
\end{proof}


Hence, we can conclude that the polynomial
$T_k$ contains the monomial $y^{\chi(V(G))}$. Hence, the algorithm returns \yes. 
\end{proof}
  In the next lemma, we prove the soundness of the algorithm.
  \begin{lemma}
  If the algorithm in \Cref{sec:Exact Algorithm} returns ``\yes'' for an instance $I=(G,\Co{C},\{w_v\colon \Co{C}\rightarrow \mathbb{Z}^+\}_{v \in V(G)},c_1,k, k^\star)$ for the tie-breaking rule $\eta$, 
  then $I$ is a \yes-instance of \targm\ under the tie-breaking rule $\eta$. 
  \end{lemma}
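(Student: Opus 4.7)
\medskip

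\noindent\textbf{Proof plan.} The plan is to prove, by layered induction, a strong invariant that every monomial appearing in each intermediate polynomial corresponds to an actual collection of pairwise disjoint districts with a controlled winner distribution, and then apply the invariant to the monomial $y^{\chi(V(G))}$ in $T_k$. The single combinatorial tool used throughout is \Cref{obs:disjoint-binary-vectors}: after a Hamming projection to $s = s' + s''$, the only surviving products $y^{t'+t''}$ come from pairs with $\Co{H}(t')+\Co{H}(t'')=s$, which forces $t'$ and $t''$ to be disjoint $0/1$-vectors, hence the underlying sets to be disjoint.

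First, I would handle the polynomials $Q_{1,j}^s$. By induction on $j\in\{1,\ldots,k^\star-1\}$, I would show that every monomial $y^t$ in $Q_{1,j}^s$ satisfies $t = \chi(V_1\cup\cdots\cup V_{j+1})$ for some pairwise disjoint $V_1,\ldots,V_{j+1}\in\Co{F}_1$ with $|V_1|+\cdots+|V_{j+1}|=s$. The base case $j=1$ is immediate: a monomial surviving $\Co{H}_s(P_1^{s'}\times P_1^{s''})$ comes from $V_1,V_2\in\Co{F}_1$ with $|V_i|=s_i'$ and $\Co{H}(\chi(V_1)+\chi(V_2))=s'+s''=|V_1|+|V_2|$, so $V_1\cap V_2=\emptyset$ by \Cref{obs:disjoint-binary-vectors}. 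The inductive step is identical, using $Q_{1,j-1}^{s''}$ in place of $P_1^{s''}$. Summing over $s$ and setting $j=k^\star-1$ gives that every monomial $y^t$ in $T_{k^\star}$ encodes $k^\star$ pairwise disjoint districts all won by $c_1$.

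Second, I would perform an induction on the candidate index $i\in\{2,\ldots,m\}$ with the following invariant: after all updates corresponding to $c_i$, for every $\ell\in\{0,\ldots,k-k^\star\}$ each monomial $y^t$ in $T_{k^\star+\ell}$ corresponds to $k^\star+\ell$ pairwise disjoint districts in which $c_1$ wins exactly $k^\star$, each of $c_2,\ldots,c_i$ wins at most $k^\star-1$, and $t=\chi(\bigcup V_r)$. In the update step, any new contribution to $T_{k^\star+\ell}$ arises from $P_i^{s'}\times\Co{H}_{s''}(T_{k^\star+\ell-1})$ projected to Hamming weight $s=s'+s''$. The inductive hypothesis provides $k^\star+\ell-1$ pairwise disjoint districts recorded by the monomial of $T_{k^\star+\ell-1}$ with Hamming weight $s''$; the factor $P_i^{s'}$ contributes a district $Y\in\Co{F}_i$ with $|Y|=s'$; and the Hamming projection together with \Cref{obs:disjoint-binary-vectors} forces $Y$ to be disjoint from the previous union. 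The outer loop over $j\in\{1,\ldots,\min\{k^\star-1,k-k^\star\}\}$ limits how many districts can be attributed to $c_i$, preserving the ``at most $k^\star-1$'' bound for $c_i$, while the bound for $c_2,\ldots,c_{i-1}$ is inherited from the inductive hypothesis.

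Finally, if the algorithm returns \yes, then $T_k$ contains $y^{\chi(V(G))}$. Applying the invariant with $i=m$ and $\ell=k-k^\star$ produces $k$ pairwise disjoint districts $V_1,\ldots,V_k$ with the required winner distribution whose union has characteristic vector $\chi(V(G))$; since $\Co{H}(\chi(V(G)))=n$ and the $V_r$ are pairwise disjoint with total size $n$, they form a partition of $V(G)$, yielding a solution to the \targm-instance. The main technical care goes into the step that tracks Hamming weights exactly through the projections: one must verify at each recursive stage that every monomial present in the current polynomial has a Hamming weight equal to the size of the set it is meant to encode, so that the application of \Cref{obs:disjoint-binary-vectors} is justified.
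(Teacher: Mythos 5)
Your proposal is correct and follows essentially the same route as the paper: the paper's proof likewise first shows (by induction on $j$) that every monomial of $Q_{1,j}^s$, and hence of $T_{k^\star}$, encodes $j+1$ pairwise disjoint districts won by $c_1$, and then runs a double induction over the candidate index $i$ and the inner index $j$ to maintain exactly the invariant you state for $T_{k^\star+\ell}$, with \Cref{obs:disjoint-binary-vectors} and \Cref{prop:disjoint-set} supplying the disjointness from the Hamming projections. The concluding step --- reading off a partition of $V(G)$ from the monomial $y^{\chi(V(G))}$ in $T_k$ --- is also identical.
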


  \begin{proof}
We first prove the following claims. 
\begin{claim}
\label{clm:for Tk polynomial}
If $T_{k^\star}$ has a monomial $y^{S}$, then there are $k^\star$  pairwise disjoint districts $Y_1,\ldots,Y_{k^\star}$   such that $\chi(Y_1\cup \ldots \cup Y_{k^\star})=S$ and $c_1$ wins in all the districts. 
\end{claim}

\begin{proof}
To prove our claim, we prove that for every monomial $y^S$ in $Q_{1,j}^s$, where $j\in \{1,\ldots,k^\star-1\}$ and $s \in \{j+1,\ldots, n\}$, there exists $j+1$ pairwise disjoint districts won by $c_1$ such that the characteristic vector of their union is $S$. We prove it by induction on $j$. 
\begin{description}[wide=0pt]
\item[Base Case:] $j=1$. Note that to construct polynomial $Q_{1,1}^s$, where $s \in \{2,\ldots, n\}$, we consider the multiplication of polynomials $P_1^{s'}$ and $P_1^{s''}$ such that $s=s'+s''$. So, we have a monomial $y^{\chi(S')+\chi(S'')}$ in $Q_{1,1}^s$, where $S'$ and $S''$ are sets in the family $\Co{F}_1$ of size $s'$ and $s''$, respectively. Since $y^{\chi(S')+\chi(S'')}$ is a monomial in $Q_{1,1}^s$, 
the monomial $y^{\chi(S')+\chi(S'')}$ has Hamming weight $s=s'+s''$. Therefore, due to Observation~\ref{obs:disjoint-binary-vectors}, $\chi(S')$ and $\chi(S'')$ are disjoint binary vectors. This implies that $S'$ and $S''$ are two disjoint sets in the family $\Co{F}_1$ (thus, won by $c_1$ uniquely) and due to Proposition~\ref{prop:disjoint-set}, $\chi(S'\cup S'')=\chi(S')+\chi(S'')$.
\item[Induction Step:] Suppose that the claim is true for $j=j'-1$. We next prove it for $j=j'$.  Let $y^S$ be a monomial of Hamming weight $s$ in $Q_{1,j}^s$. To construct polynomial $Q_{1,j}^s$, we consider the multiplication of polynomials $P_1^{s'}$ and $Q_{1,j-1}^{s''}$ such that $s=s'+s''$. Therefore, $y^S=y^{S'}\times y^{S''}$, where $y^{S'}$ is a monomial of Hamming weight $s'$ in $P_1^{s'}$ and $y^{S''}$ is a monomial of Hamming weight $s''$ in $Q_{1,j-1}^{s''}$. 
Since $y^S$ is a monomial in $Q_{1,j}^s$, by the construction of polynomials we have that $\Co{H}(S'+S'') = s$. Therefore, due to Observation~\ref{obs:disjoint-binary-vectors}, $S'$ and $S''$ are disjoint vectors. 
%
Using inductive hypothesis, there exists $j-1$ pairwise disjoint districts, say $Y_1,\ldots,Y_{j-1}$, won by $c_1$ uniquely such that $\chi(Y_1\cup \ldots \cup Y_{j-1})=S''$. 
Let $Y$ be the set such that $\chi(Y)=S'$.  
Since $S'$ and $S''$ are disjoint characteristic vectors, $Y$ is disjoint from $Y_1\cup \ldots \cup Y_{j-1}$. Thus, $Y_1,\ldots,Y_{j-1}, Y$ are $j$ pairwise disjoint districts won by $c_1$ uniquely. Due to Proposition~\ref{prop:disjoint-set}, $\chi(Y_1\cup \ldots \cup Y_{j-1}\cup Y)=\chi(Y_1\cup \ldots \cup Y_{j-1})+\chi(Y)=S'+S''=S$. 
\end{description}
Recall that $T_{k^\star}=\sum_{s=k^\star}^nQ_{1,(k^\star-1)}^s$. Therefore, if $T_{k^\star}$ has a monomial $y^{S}$, then there are $k^\star$ pairwise disjoint districts won by $c_1$ and the characteristic vector of the union of the districts is $S$.
\end{proof}

\begin{claim}\label{clm:allother-win-in-k-1}
For a pair of integer $i,j$, where $ i \in \{2,\ldots,m\}$ and $ j \in \{1,\ldots,\min\{k-k^\star,k^\star-1\}\}$, let $T_{k^\star+1},\ldots,T_{k}$ be the family of polynomials constructed in the algorithm at the end of for loops for  $i$ and $j$,  in \Cref{sec:Exact Algorithm}. Let $y^S$ be a monomial in $T_t$, where $t \in \{k^\star+1,\ldots,k\}$. Then, the following hold: 
\begin{itemize}
\item  there are $t$ pairwise disjoint districts $Y_1,\ldots,Y_t$ such that $\chi(Y_1\cup \ldots \cup Y_t)= S$
\item $c_1$ wins in $k^\star$ districts in $\{Y_1,\ldots,Y_t\}$
\item $c_i$ wins in $j$ districts in $\{Y_1,\ldots,Y_t\}$
\item for $2\leq q<i$, $c_q$ wins in at most $k^\star-1$ districts in $\{Y_1,\ldots,Y_t\}$
\item for $q>i$, $c_q$ does not win in any district in $\{Y_1,\ldots,Y_t\}$
\end{itemize} 
\end{claim}

\begin{proof} 
We prove it by induction on $i$.
\begin{description}[wide=0pt]
\item[Base Case:] $i=2$. In this case we prove that the claim is true for $i=2$ and every $ j \in \{1,\ldots,\min\{k-k^\star,k^\star-1\}\}$. We again use induction on $j$. 
\begin{description}[labelindent=0.75cm]
\item[Base Case:] $j=1$. We first observe that for $i=2$ and $j=1$, $T_{k^\star+2},\ldots,T_k$ are zero polynomials. Let $y^S$ be a monomial in $T_{k^\star +1}$ with Hamming weight $s$. By the construction of monomials, there exists monomial $y^{S'}$ in $T_{k^\star}$ whose Hamming weight is $s'$ and $y^{S''}$ in $P_2^{\Co{H}(S'')}$ whose hamming weight is $s''$ such that $s=s'+s''$ and $y^S=y^{S'}\times y^{S''}$. 
Since $\Co{H}(S'+S'')=s=s'+s''$ and $\Co{H}(S')+\Co{H}(S'')=s'+s''$, due to Observation~\ref{obs:disjoint-binary-vectors}, %
%
vectors $S'$ and $S''$ are disjoint. Due to Claim~\ref{clm:for Tk polynomial}, there exists $k^\star$ pairwise disjoint districts won by $c_1$ uniquely, say $Y_1,\ldots,Y_{k^\star}$, such that $\chi(Y_1\cup \ldots \cup Y_{k^\star})=S'$. Since $y^{S''}$ is a monomial in $P_2^{s''}$, there exists a district in $\Co{F}_2$ of size $s''$ won by $c_2$ uniquely, say $Y$, such that $\chi(Y)=S''$. Since $S'$ and $S''$ are disjoint characteristic vectors, sets $Y$ and $Y_1\cup \ldots \cup Y_{k^\star}$ are disjoint.  Therefore, due to Proposition~\ref{prop:disjoint-set}, $\chi(Y_1\cup \ldots \cup Y_{k^\star}\cup Y)=\chi(Y_1\cup \ldots \cup Y_{k^\star})+\chi(Y)=S'+S''=S$. Therefore, $Y_1,\ldots,Y_{k^\star},Y$ are $k^\star +1$ districts satisfying all the properties mentioned in the claim. 
\item[Induction Step:] Suppose that the claim is true for $j=j'-1$. Next, we prove the claim for $j=j'$. Note that in the algorithm, we add monomials in $T_t$, where $t \in \{k^\star+1, \ldots, k\}$, constructed in the previous iteration. Let $\tilde{T}_{k^\star+1},\ldots,\tilde{T}_k$ be the polynomials constructed for $i=2$, $j=j'-1$. Let $T_t$, where $t \in \{k^\star+1, \ldots, k\}$, has a monomial $y^S$ with Hamming weight $s$ which is not in $\tilde{T}_t$. By the construction of monomials $y^S=y^{S'}\times y^{S''}$, where $y^{S'}$ is a monomial in $\tilde{T}_{t-1}$ with Hamming weight $s'$ and $y^{S''}$ is a monomial in $P_2^{\Co{H}(S'')}$ with Hamming weight $s''$ such that $s=s'+s''$. 
As argued above  the characteristic vectors $S'$ and $S''$ are disjoint. Using inductive hypothesis, there exists $t-1$ pairwise disjoint districts, say $Y_1,\ldots,Y_{t-1}$, such that $\chi(Y_1\cup \ldots \cup Y_{t-1})=S'$, $c_1$ wins in $k^\star$ districts uniquely, $c_2$ wins in $j'-1$ districts uniquely, and all the other candidates do not win any district in $Y_1,\ldots,Y_{t-1}$.  Since $y^{S''}$ is a monomial in $P_2^{s''}$, there exists a district in $\Co{F}_2$ of $s''$ won by $c_2$ uniquely, say $Y$, such that $\chi(Y)=S''$. Since $S'$ and $S''$ are disjoint characteristic vectors, sets $Y$ and $Y_1\cup \ldots \cup Y_{t-1}$ are disjoint. Therefore, due to Proposition~\ref{prop:disjoint-set}, $\chi(Y_1\cup \ldots \cup Y_{t-1}\cup Y)=\chi(Y_1\cup \ldots \cup Y_{t-1})+\chi(Y)=S'+S''=S$. Thus, $Y_1,\ldots,Y_{t-1},Y$ are $t$ districts satisfying all the properties mentioned in the claim. 
\end{description}
\item[Induction Step:] Suppose that the claim is true for $i=i'-1$. We next prove it for $i=i'$. Again we need to prove the claim for $i=i'$ and every $j \in \{1,\ldots,\min\{k-k^\star,k^\star-1\}\}$. The proof is same as above using induction on $j$. We give here the proof for completeness. 
\begin{description}[labelindent=0.75cm]
\item[Base Case:] $j=1$. Let $\tilde{T}_{k^\star+1},\ldots,\tilde{T}_k$ be the polynomials constructed for $i=i'-1$, $j = \min\{k-k^\star,k^\star-1\}$.  Let $T_t$, where $t \in \{k^\star+1, \ldots, k\}$, has a monomial $y^S$ with Hamming weight $s$ which is not in $\tilde{T}_t$. As argued above $y^S=y^{S'}\times y^{S''}$, where $y^{S'}$ is a monomial in $\tilde{T}_{t-1}$ with Hamming weight $s'$ and $y^{S''}$ is a monomial in $P_{i'}^{\Co{H}(S'')}$ with Hamming weight $s''$ such that $s=s'+s''$. 
As argued above vectors $S'$ and $S''$ are disjoint. Using inductive hypothesis, there exists $t-1$ pairwise disjoint districts, say $Y_1,\ldots,Y_{t-1}$, such that $\chi(Y_1\cup \ldots \cup Y_{t-1})=S'$, $c_1$ wins in $k^\star$ districts uniquely, for $q \in \{2,\ldots, i'-1\}$, $c_q$ wins in at most $k^\star-1$ districts uniquely, and all the other candidates do not win in any district in $Y_1,\ldots,Y_{t-1}$.  Since $y^{S''}$ is a monomial in $P_{i'}^{s''}$, there exists a district in $\Co{F}_{i'}$ of size $s''$ won by $c_{i'}$ uniquely, say $Y$, such that $\chi(Y)=S''$. Since $S'$ and $S''$ are disjoint characteristic vectors, sets $Y$ and $Y_1\cup \ldots \cup Y_{t-1}$ are disjoint. Therefore, as argued above, due to Proposition~\ref{prop:disjoint-set}, $\chi(Y_1\cup \ldots \cup Y_{t-1}\cup Y)=S$. Thus, $Y_1,\ldots,Y_{t-1},Y$ are $t$ districts satisfying all the properties mentioned in the claim. 
\item[Induction Step:] Suppose that the claim is true for $j=j'-1$. Next, we prove it for $j=j'$. Let $\tilde{T}_{k^\star+1},\ldots,\tilde{T}_k$ be the polynomials constructed for $i=i'$, $j=j'-1$. Let $T_t$, where $ t \in \{k^\star+1,\ldots, k\}$, has a monomial $y^S$ with Hamming weight $s$ which is not in $\tilde{T}_t$. As argued above $y^S=y^{S'}\times y^{S''}$, where $y^{S'}$ is a monomial in $\tilde{T}_{t-1}$ with Hamming weight $s'$ and $y^{S''}$ is a monomial in $P_{i'}^{\Co{H}(S'')}$ with Hamming weight $s''$ such that $s=s'+s''$. As argued above vectors $S'$ and $S''$ are disjoint. Using inductive hypothesis, there exists $t-1$ pairwise disjoint districts, say $Y_1,\ldots,Y_{t-1}$, such that $\chi(Y_1\cup \ldots \cup Y_{t-1})=S'$, $c_1$ wins in $k^\star$ districts uniquely, for $q \in \{2,\ldots, i'-1\}$, $c_q$ wins in at most $k^\star-1$ districts, $c_{i'}$ wins in $j'-1$ districts uniquely, and all the other candidates do not win any district in $Y_1,\ldots,Y_{t-1}$.  Since $y^{S''}$ is a monomial in $P_{i'}^{s''}$, there exists a district in $\Co{F}_{i'}$ of size $s''$ won by $c_{i'}$ uniquely, say $Y$, such that $\chi(Y)=S''$. Since $S'$ and $S''$ are disjoint characteristic vectors, sets $Y$ and $Y_1\cup \ldots \cup Y_{t-1}$ are disjoint. Therefore, as argued above, $\chi(Y_1\cup \ldots \cup Y_{t-1}\cup Y)=S$. Thus, $Y_1,\ldots,Y_{t-1},Y$ are $t$ districts satisfying all the properties mentioned in the claim. \qedhere
\end{description}
\end{description}
\end{proof}

If the algorithm returns \yes, then we know that there is a monomial $y^{\chi(V(G))}$ in $T_k$. Therefore, due to Claim~\ref{clm:allother-win-in-k-1}, there are $k$ districts such that $c_1$ wins in $k^\star$ districts and all the candidates win in at most $k^\star-1$ districts. 
  \end{proof}
 Next, we will analyze the running time of the algorithm. 
    \begin{lemma}
    The algorithm in \Cref{sec:Exact Algorithm} runs in $2^n (n+m)^{\OO(1)}$ time. 
    \end{lemma}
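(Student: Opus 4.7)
The plan is to bound the running time by charging the cost of every step to (i) constructing the families $\Co{F}_i$, and (ii) performing polynomially many polynomial multiplications/Hamming projections/representative-polynomial computations on polynomials whose degrees are at most $2^n - 1$.

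First I would analyse the preprocessing step in which the families $\{\Co{F}_i\}_{i=1}^{m}$ are built. Enumerate all $2^n$ subsets $S \sse V(G)$; for each $S$ check in polynomial time whether $G[S]$ is connected and, if so, apply the tie-breaking rule $\eta$ to the vector $(\sum_{v\in S} w_v(c))_{c\in\Co{C}}$ to determine the unique winner, then insert $S$ into the corresponding $\Co{F}_i$. This costs $2^n (n+m)^{\OO(1)}$ total.

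Next I would bound the cost of the polynomial manipulations. The exponent of every monomial appearing in any $P_i^{\ell}$, $Q_{1,j}^{s}$, $Q_{\ell}^{s}$ or $T_{k^\star+\ell}$ is a characteristic vector in $\{0,1\}^n$, hence each such polynomial has degree at most $2^n - 1$ and has at most $2^n$ distinct monomials. By Proposition~\ref{prop:poly-multiplication} the product of two such polynomials can therefore be computed in $\OO(2^n \cdot n)$ time; and a Hamming projection $\Co{H}_s(\cdot)$ or a representative polynomial $\Co{R}(\cdot)$ can be extracted in $\OO(2^n \cdot n)$ time by scanning the coefficient list once.

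Now I would simply count how many multiplications are carried out. The polynomials $P_i^{\ell}$ require only the initial enumeration. The polynomials $Q_{1,j}^{s}$ are indexed by $j \in \{1,\ldots,k^\star - 1\}$ and $s \in \{j+1,\ldots,n\}$, and each is the Hamming projection of a sum of at most $n$ products, so all of them are computed in time $n^{\OO(1)} \cdot 2^n$. The main loop iterates over $i \in \{2,\ldots,m\}$, $j \in \{1,\ldots,\min\{k-1,k-k^\star\}\}$, $\ell \in \{j,\ldots,k-k^\star\}$ and $s \in \{k^\star+1,\ldots,n\}$, and for each fixed tuple $(i,j,\ell,s)$ it performs at most $n$ multiplications of polynomials of degree $< 2^n$ followed by one Hamming projection and at most one representative-polynomial extraction. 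The total number of such tuples is at most $m \cdot k \cdot k \cdot n \leq (n+m)^{\OO(1)}$, so the total cost of the main loop is $(n+m)^{\OO(1)} \cdot 2^n$. Adding the preprocessing cost and the cost of the final test ``does $T_k$ contain $y^{\chi(V(G))}$?'' (a single coefficient lookup in a polynomial of size $\leq 2^n$), we obtain the claimed bound $2^n (n+m)^{\OO(1)}$. The only non-obvious point is verifying that the degree of every polynomial that ever appears stays within $\{0,1,\ldots,2^n-1\}$, which follows from the fact that every exponent is a sum modulo $2$ of characteristic vectors of subsets of $V(G)$.
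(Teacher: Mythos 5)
Your proof is correct and follows essentially the same route as the paper's own argument: enumerate all $2^n$ subsets to build the families $\Co{F}_i$, observe that every polynomial involved has degree $\OO(2^n)$ and at most $2^n$ monomials, and charge each of the polynomially many multiplications a cost of $\OO(2^n n)$ via Proposition~\ref{prop:poly-multiplication}. (One minor imprecision in your closing remark: monomial exponents add as integers, not modulo $2$; the degrees stay at $\OO(2^n)$ because the Hamming projections retain only exponents that are genuine characteristic vectors, but this does not affect the stated bound.)
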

    \begin{proof}
    In the algorithm, we first construct a family of districts, $\Co{F}_i$ for each candidate $c_i \in \Co{C}$. Since we check all possible sets of $V(G)$ for the potential candidate of the family $\Co{F}_i$, this step of the algorithm takes $\OO(2^n(n+m))$ time.  Then, for each family, we construct at most $n$ polynomials. Since $|\Co{F}_i|\leq 2^n$, the number of terms in every polynomial is $\OO(2^n)$. Thus, these polynomials can be constructed in $\OO(2^n (n+m))$ time. Then, for every pair of integers, $i,j$, where $i \in \{2,\ldots, m\}$, $j\in \{1,\ldots,\min\{k^\star-1,k-k^\star\}\}$, we multiply polynomials at most $nk$ times. Since every polynomial has degree at most $2^n$, using Proposition~\ref{prop:poly-multiplication}, every polynomial multiplication takes $\OO(2^n n)$ time. Hence, the algorithm runs in  $2^n (n+m)^{\OO(1)}$ time. 
    \end{proof}
    
Hence, \Cref{lem-exact-algo} is proved.     
 
 \section{In Conclusion}\label{sec:conclusion}
 We have shown that \gmpvr on paths is \npc, thereby resolving an open question in \cite{ItoKKO19}. This gives parameterized intractability for parameters such as maximum degree of a vertex in the graph. Furthermore, we have presented \FPT\ algorithms for paths when parameterized by the number of districts. We also give an \FPT\ algorithm running in time $2^n (n+m)^{\OO(1)}$ on general graphs. 
  
We conclude with a few directions for further research: \begin{enumerate*}[label=(\roman*)] \item Does there exist a $\OO(c^n)$ algorithm for \gm when there are possibly multiple winners in a district?; \item Is \gm on paths \FPT\ parameterized by the number of candidates?; \item Is \gm on trees \FPT\ parameterized by the number of districts?
\end{enumerate*}

\newpage


\bibliographystyle{plain}
\bibliography{references}

\end{document}